\newcommand{\RP}{\mathbb{R}_{\ge 0}}
\newcommand{\R}{\mathbb{R}}
\newcommand{\N}{\mathbb{N}}
\newcommand{\Z}{\mathbb{Z}}
\newcommand{\locs}{L} 
\newcommand{\loc}{\ell} 
\newcommand{\clocks}{\mathcal{X}} 
\newcommand{\guard}{\varphi} 
\newcommand{\reset}{\lambda}
\newcommand{\val}{\nu} 
\newcommand{\cost}{{\mathrm{cost}}}
\newcommand{\vv}{\boldsymbol{v}}
\newcommand{\uu}{\boldsymbol{u}}
\newcommand{\true}{\texttt{true}} 
\newcommand{\tuple}[1]{\langle #1 \rangle}
\newcommand{\zeroval}{\boldsymbol{0}} 
\newcommand{\edges}{E}
\newcommand*{\defeq}{\mathrel{\rlap{%
                     \raisebox{0.3ex}{$\m@th\cdot$}}%
                     \raisebox{-0.3ex}{$\m@th\cdot$}}%
                     =}
\newcommand*{\ndefeq}{\mathrel{\rlap{%
                     \raisebox{0.3ex}{$\m@th\cdot$}}%
                     \raisebox{-0.3ex}{$\m@th\cdot$}%
                     \rlap{%
                     \raisebox{0.3ex}{$\m@th\cdot$}}
                     \raisebox{-0.3ex}{$\m@th\cdot$}}
                     =}
\newtheorem{proposition}[theorem]{Proposition}
\newcommand{\mygrid}[4]
{
\draw[very thin,color=gray,step=.25cm] (#1,#2) grid (#3-.1,#4-.1);
\draw[->,ultra thick] (#1-.1,0)--(#3,0) node[right]{$x$};
\draw[->,ultra thick] (0,#2-.1)--(0,#4) node[above]{$y$};
}
\title{Costs and  Rewards in Priced Timed Automata}
\titlerunning{Costs and Rewards in MPTA} 
\author{Martin Fr\"anzle}{Department of Computing Science, University of Oldenburg, Germany}{martin.fraenzle@informatik.uni-oldenburg.de}{}{}
\author{Mahsa Shirmohammadi}{CNRS \& LIS, France}{mahsa.shirmohammadi@lis-lab.fr}{}{}
\author{Mani Swaminathan}{Department of Computing Science, University of Oldenburg, Germany}{mani.swaminathan@informatik.uni-oldenburg.de}{}{}
\author{James Worrell}{Department of Computer Science, University of Oxford, UK}{james.worrell@cs.ox.ac.uk}{}{}
\authorrunning{M. Fr\"anzle, M. Shirmohammadi, M. Swaminathan, J. Worrell} 
\subjclass{Theory of computation: Timed and hybrid models }
\keywords{
  Priced Timed Automata, Pareto Domination, Diophantine Equations
}
\begin{document}

\maketitle

\begin{abstract}
We consider Pareto analysis of reachable states of multi-priced timed automata (MPTA):
 timed automata equipped with multiple  observers that keep track of costs (to be minimised) and rewards (to
  be maximised) along a computation.  Each observer has a constant
  non-negative derivative which may depend on the location of the MPTA.

  We study the Pareto Domination Problem, which asks
  whether it is possible to reach a target location via a run in
  which the accumulated costs and rewards Pareto dominate a given
  objective vector.  We show that this problem is undecidable in general,
  but decidable for MPTA with at most three observers.  For
  MPTA whose observers are all costs or all rewards, we show that the Pareto Domination Problem is PSPACE-complete.  
  We also consider an
  $\varepsilon$-approximate Pareto Domination Problem that is decidable without restricting
  the number and types of observers.

  We develop connections between MPTA and Diophantine
  equations. Undecidability of the Pareto Domination Problem is shown by reduction from
  Hilbert's $10^\text{th}$ Problem, while decidability for three
  observers is shown by a translation to a fragment of arithmetic
  involving quadratic forms.
\end{abstract}

\section{Introduction}\label{sec:intro}
\emph{Multi Priced Timed Automata}
(MPTA)~\cite{Bouyer08b,Bouyer08c,BrihayeBruyereRaskin06,FORMATS09,Larsen08,Perevoshchikov15,Quaas10}
extend priced timed
automata~\cite{Alur01PTA,Behrmann01,Bouyer07,Bouyer08d,Larsen01} with
\emph{multiple observers} that capture the accumulation of costs and
rewards along a computation. This extension allows to model multi-objective
optimization problems beyond the scope of timed automata~\cite{AlurDill94}. 
MPTA lie at the frontier between timed automata (for which
reachability is decidable~\cite{AlurDill94}) and linear hybrid automata (for which
reachability is undecidable~\cite{Henzinger98}). The observers
exhibit richer dynamics than the clocks of timed automata by not being confined to
unit slope in locations, but may neither be queried nor reset while
taking edges.  This \emph{observability restriction} has been
exploited in~\cite{Larsen08} (under a cost-divergence assumption) for carrying out a
\emph{Pareto analysis} of reachable values of the observers.

In this paper we distinguish
between observers that represent \emph{costs} (to be minimised) and
those that represent \emph{rewards} (to be maximised).  Formally, we
partition the set $\mathcal{Y}$ of observers into cost and reward
variables and say that $\gamma\in \RP^{\mathcal{Y}}$ \emph{Pareto
  dominates} $\gamma'\in \RP^{\mathcal{Y}}$ if
$\gamma(y)\leq\gamma'(y)$ for each cost variable $y$ and
$\gamma(y)\geq\gamma'(y)$ for each reward variable $y$.  Then the
\emph{Pareto curve} corresponding to an MPTA consists of all
undominated vectors $\gamma$ that are reachable in an accepting location.  
While cost and reward variables are syntactically identical in the underlying automaton model, distinguishing between them
changes the notion of Pareto
domination and the associated decision problems.

We introduce in Section~\ref{sec:definitions} a decision version of the problem of computing Pareto curves
for MPTA, called the \emph{Pareto Domination Problem}.  Here, given a
target vector $\gamma\in \RP^{\mathcal{Y}}$, one asks to reach an
accepting location with a valuation $\gamma'\in \RP^{\mathcal{Y}}$ that Pareto
dominates~$\gamma$.  This has not been addressed in prior
work on Pareto analysis of MPTA~\cite{Larsen08}, which considers only costs or only
rewards. Other works on MPTA either do not address Pareto
analysis~\cite{Bouyer08b,BrihayeBruyereRaskin06,FORMATS09,Perevoshchikov15,Quaas10},
 or have only discrete costs updated on edges~\cite{Zhang17},
or are confined to a single clock~\cite{Bouyer08c}.

Our first main result is that the Pareto
Domination Problem is undecidable in general.  The undecidability
proof in Section~\ref{sec:undecidability} is by reduction from
Hilbert's $10^\text{th}$ problem.  Owing to the existence of so-called
``universal Diophantine equations'' (of degree 4 with 58 variables~\cite{Jones80}), our proof shows undecidabililty of the Pareto
Domination Problem for some fixed but large number of observers.
Undecidability of the Pareto Domination Problem entails that one
cannot compute an exact Pareto curve for an arbitrary MPTA. 

We consider three different approaches to recover decidability of the
Pareto Domination Problem, which all have a common foundation, namely
a \emph{monotone} VASS described in Sections~\ref{sec:prelims} and \ref{sec:simplex-automaton}, which simulates integer runs of a given MPTA. By analysing the semi-linear reachability set of this VASS we can
reduce the Pareto Domination Problem to satisfiability of a class of
bilinear mixed integer-real constraints.  We then consider
restrictions on MPTA and variants of the Pareto Domination Problem
that allow us to solve this class of constraints.

We first show in Section~\ref{sec:pure-const} that restricting to
MPTA with only costs or only rewards yields PSPACE-completeness of the
Pareto Domination Problem.  Here we are able to eliminate integer
variables from our bilinear constraints, resulting in a formula
of linear real arithmetic.  This strengthens~\cite[Theorem 1 and
Corollary 1]{Larsen08}, whose decision
procedures 
(that exploit well-quasi-orders  for termination) do not yield complexity bounds.

Next we confine the MPTA in Section~\ref{sec:three-cost-variables} to
at most three observers, but allow a mix of costs and rewards.
Decidability is now achieved by eliminating real variables from the
bilinear constraint system, thus reducing the Pareto Domination
Problem to deciding the existence of positive integer zeros of a quadratic
form, which is known to be decidable from~\cite{GrunewaldSegal04}.

We consider in Section~\ref{sec:gap} another method to restore
decidability for general MPTA with arbitrarily many costs and rewards,
by studying an approximate version of the Pareto Domination Problem,
called the \emph{Gap Domination Problem}.  
Similar to
the setting of~\cite{DiakonikolasY09}, the Gap Domination Problem represents the decision version of the problem of
computing $\varepsilon$-Pareto curves.  This problem, whose 
input
includes a tolerance $\varepsilon>0$ and a vector
$\gamma\in\RP^{\mathcal{Y}}$, 
permits inconclusive answers if all
solutions dominating $\gamma$ do so with a slack of less than~$\varepsilon$.  
We solve the Gap Domination Problem by relaxation and
rounding applied to our bilinear system of constraints.

In this paper we consider only MPTA with non-negative rates.  Our
approach can be generalised to obtain decidability results also in the
case of negative rates by extending our foundation in Sections~\ref{sec:prelims} 
and \ref{sec:simplex-automaton} from monotone VASS to $\mathbb{Z}$-VASS~\cite{HaaseHalfon14}.

\section{Background}\label{sec:prelims}
{\bf Quadratic Diophantine Equations.} For later use we recall a
decidable class of non-linear Diophantine problems.  Consider the
quadratic equation
\begin{gather}
\sum_{i,j=1}^n a_{ij}X_iX_j + \sum_{j=1}^n b_jX_j +c = 0 
\label{eq:quadratic}
\end{gather}
whose coefficients $a_{ij}$, $b_j$, and $c$ are rational numbers.
Consider also the family of constraints
\begin{equation}
  f_1(X_1,\ldots,X_n) \, \sim \, c_1 \wedge  \, \ldots \, \wedge \, f_k(X_1,\ldots,X_n) \, \sim \, c_k \, ,
\label{eq:linear}
\end{equation}
where $f_1,\ldots,f_k$ are linear forms with rational coefficients, $c_1,\ldots,c_k\in \mathbb{Q}$,
and ${\sim}\in \{<,\leq\}$.

\begin{theorem}[\cite{GrunewaldSegal04}]
There is an algorithm that decides whether a given quadratic equation~(\ref{eq:quadratic}) and a family
of linear inequalities~(\ref{eq:linear}) have a solution in $\mathbb{Z}^n$.
\label{thm:segal}
\end{theorem}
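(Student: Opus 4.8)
The plan is to reduce the joint system to a purely arithmetic question about integral points on an affine quadric and then invoke the effective theory of quadratic forms. After clearing denominators (which changes nothing) and a standard reduction --- completing the square on the nondegenerate part of the quadratic form, and splitting off its radical, at the cost of replacing $\mathbb{Z}^n$ by finitely many cosets of a finite-index sublattice, each of them cut out by linear congruences that we append to~(\ref{eq:linear}) --- the task becomes to decide whether a fixed nondegenerate integral quadratic form $Q$ represents a fixed integer $m$ by an integral vector lying in a prescribed rational polyhedron~$P$. The directions in the radical of the original form enter every constraint only linearly and are disposed of by a routine side analysis, so we lose no generality in assuming $Q$ nondegenerate.

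If $Q$ is definite, the quadric $\{x : Q(x) = m\}$ is bounded, and effectively so; one enumerates the finitely many integral points on it and checks each against the linear inequalities. The substance is the indefinite case. Here I would first apply the Hasse--Minkowski local--global principle to decide, through finitely many effectively checkable conditions at the archimedean place and at a finite, effectively determined set of primes, whether $Q$ represents $m$ over $\mathbb{Q}$ at all --- a prerequisite for integral representability. Passing from $\mathbb{Q}$ to $\mathbb{Z}$ calls for the finer theory of genera and spinor genera: for indefinite $Q$ of rank at least $3$, the Eichler--Kneser strong approximation theorem for the spin group shows that every spinor genus is a single equivalence class, and this yields both an explicit set of orbit representatives for the integral solutions of $Q(x)=m$ under the arithmetic orthogonal group $\mathrm{O}(Q)(\mathbb{Z})$ and a generating set for a finite-index subgroup of that group. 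The residual low-rank cases ($Q$ binary or unary) reduce to Pell-type equations and are handled effectively by continued fractions and the computation of fundamental units.

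It remains to meet the inequalities of~(\ref{eq:linear}) together with the polyhedral conditions inherited above, and this is where I expect the main obstacle to sit. Where the half-spaces of $P$ confine $x$ to a bounded region one simply searches; in the unbounded directions one must show that the integral solution set, once non-empty, is rich enough that its intersection with $P$ can be certified or refuted by a finite computation. Controlling the $\mathrm{O}(Q)(\mathbb{Z})$-orbit of a single solution precisely enough to decide this --- either exhibiting a witness inside $P$, or reading off from the local and archimedean data that none exists --- is exactly where the effective reduction theory of indefinite forms and strong approximation have to be pushed; by comparison the rational solvability test, the definite case, and the Pell-type cases are routine.
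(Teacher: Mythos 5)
The paper does not prove this statement: Theorem~\ref{thm:segal} is imported wholesale from~\cite{GrunewaldSegal04}, so the only question is whether your sketch would itself constitute a proof of the cited result. It would not, and you have in fact located the gap yourself. Everything you actually carry out --- clearing denominators, splitting off the radical, completing the square to reach ``does a nondegenerate integral form $Q$ represent $m$ by a lattice point in a rational polyhedron $P$'', finite enumeration in the definite case, Hasse--Minkowski for rational representability, and the Pell-type analysis in low rank --- is the routine part. The entire content of the theorem is the step you defer: given the $\mathrm{O}(Q)(\mathbb{Z})$-orbit structure of the integral solutions of $Q(x)=m$ (finitely many orbits, computable representatives, generators for a finite-index subgroup), decide whether some solution satisfies the linear inequalities. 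An orbit of an infinite arithmetic group is a sparse, irregularly distributed set, and there is no a priori reason its intersection with a half-space should be decidable; ``the solution set is rich enough that its intersection with $P$ can be certified or refuted by a finite computation'' is precisely the assertion that has to be proved, and it is where Grunewald and Segal expend their effort. Writing that this is ``exactly where the effective reduction theory and strong approximation have to be pushed'' names the obstacle without overcoming it.

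A secondary inaccuracy: the passage from ``spinor genus $=$ class for indefinite forms of rank $\ge 3$'' to ``an explicit set of orbit representatives for the integral solutions of $Q(x)=m$'' is not automatic. For \emph{representations}, as opposed to equivalence of forms, the local--global principle can fail in rank $3$ (spinor exceptional integers for ternary forms), so deciding integral representability and enumerating orbits of solutions needs a separate argument there; effectivity of the representatives and generators also requires justification rather than citation. These points are repairable, but together with the main gap they mean the proposal is an outline of the known proof architecture rather than a proof.
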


Let us emphasize that in Theorem~\ref{thm:segal} at most one quadratic
constraint is permitted.  It is clear (e.g., by
 introducing a slack variable) that the theorem remains true if the
 equality symbol in (\ref{eq:quadratic}) is replaced by any  comparison operator in~$\{<,\leq, >,\geq\}$.

\bigskip

\noindent {\bf Monotone VASS.} 
A \emph{monotone vector addition system with states} (monotone VASS)
is a tuple~$\mathcal{Z} = \tuple{n,Q,q_0,Q_f, \Sigma,\Delta}$, where
$n\in\mathbb{N}$ is the \emph{dimension}, $Q$ is a set of
\emph{states}, $q_0\in Q$ is the \emph{initial state},
$Q_f\subseteq Q$ is a set of \emph{final states}, $\Sigma$ is the set of
\emph{labels}, and
$\Delta \subseteq Q \times \mathbb{N}^n \times \Sigma \times Q$ is the
set of \emph{transitions}.

Given such a monotone VASS $\mathcal{Z}$ as above, the family of sets
$\mathrm{Reach}_{\mathcal{Z},q} \subseteq \mathbb{N}^n$, for $q\in Q$, is the
minimal family (w.r.t. to set inclusion) of integer vectors such that
$\boldsymbol{0} \in \mathrm{Reach}_{\mathcal{Z},q_0}$ and
for all $q\in Q$, if~$\uu \in \mathrm{Reach}_{\mathcal{Z},q}$ and $(q,\vv,\ell,p) \in \Delta$
  for some $\ell\in L$, then $\uu+\vv \in \mathrm{Reach}_{\mathcal{Z},p}$.
Finally we define the \emph{reachability set} of $\mathcal{Z}$ to be
$\mathrm{Reach}_{\mathcal{Z}} := \bigcup_{q\in Q_f} \mathrm{Reach}_{\mathcal{Z},q}$.

For every vector $\vv \in \mathbb{N}^n$ and every finite set
$P=\{\uu_1,\ldots,\uu_m \}$ of vectors in $\mathbb{N}^n$, we define
the \emph{$\mathbb{N}$-linear set}
$S(\vv,P) := \{ \vv + \sum_{i=1}^m a_i \uu_i : a_1,\ldots,a_m \in
\mathbb{N}\}$.  We call $\boldsymbol{v}$ the \emph{base vector} 
and~$\boldsymbol{u}_1,\ldots,\boldsymbol{u}_m \in P$ the \emph{period vectors}
of the set.

The following proposition follows from~\cite[Proposition 4.3]{Lin10},\cite{KopTo10}
(see Appendix~\ref{append-zvass}).
\begin{proposition} \label{prop:zvass} Let
  $\mathcal{Z}=\tuple{n,Q,q_0,Q_f, \Sigma,\Delta}$ be a monotone VASS.
  Then the set $\mathrm{Reach}_{\mathcal{Z}}$ can be written as a
  finite union of $\mathbb{N}$-linear sets
  $S(\vv_1,P_1),\ldots,S(\vv_k,P_k)$, where for $i=1,\ldots,k$ the
  components of $\vv_i$ and of each vector in $P_i$ are bounded by
  $\mathit{poly}(n,|Q|,M)^n$ in absolute value, where $M$ is maximum
  absolute value of the entries of vectors in $\mathbb{N}^n$
  occurring in $\Delta$.
\label{prop:LIN}
\end{proposition}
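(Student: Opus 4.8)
The plan is to observe that the monotone restriction collapses VASS reachability to plain path‑counting in a finite graph, and then to invoke the effective form of Parikh's theorem (the lasso decomposition) already available from the cited references, tracking how a linear map affects the magnitudes of base and period vectors.

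\textbf{Step 1: reduce to a Parikh image.} Since every transition of $\mathcal{Z}$ adds a vector in $\mathbb{N}^n$, counters are only ever incremented, so there is never any constraint to check when a transition fires; "reachability'' in $\mathcal{Z}$ is therefore nothing more than summing transition vectors along paths of the underlying directed graph. Concretely, I would let $A_{\mathcal{Z}}$ be the NFA with state set $Q$, initial state $q_0$, final states $Q_f$, alphabet $\Delta$, and a $\delta$-labelled edge from $q$ to $p$ for each $\delta=(q,\vv,\ell,p)\in\Delta$, and let $h:\mathbb{N}^{\Delta}\to\mathbb{N}^n$ be the linear map sending the unit vector at $\delta=(q,\vv,\ell,p)$ to $\vv$. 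A straightforward induction on the length of runs, matching the inductive definition of $\mathrm{Reach}_{\mathcal{Z},q}$, then yields $\mathrm{Reach}_{\mathcal{Z}}=h\big(\mathrm{Parikh}(L(A_{\mathcal{Z}}))\big)$, where $\mathrm{Parikh}(L)\subseteq\mathbb{N}^{\Delta}$ denotes the Parikh image of a language $L$ over the alphabet $\Delta$.

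\textbf{Step 2: apply the lasso decomposition with explicit bounds.} Next I would invoke the effective version of Parikh's theorem for NFAs (the lasso decomposition of \cite[Proposition 4.3]{Lin10} and \cite{KopTo10}): $\mathrm{Parikh}(L(A_{\mathcal{Z}}))$ is a finite union of $\mathbb{N}$-linear sets $S(\boldsymbol{b}_1,B_1),\ldots,S(\boldsymbol{b}_k,B_k)$ in which each base vector $\boldsymbol{b}_i$ is the Parikh image of a path visiting no state twice and each period vector in $B_i$ is the Parikh image of a simple cycle. A simple path uses fewer than $|Q|$ edges and a simple cycle at most $|Q|$ edges, so the coordinate sum of each $\boldsymbol{b}_i$ and of each period vector in $B_i$ is at most $|Q|$. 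Then I would push this decomposition through $h$: by linearity $h\big(S(\boldsymbol{v},\{\uu_1,\ldots,\uu_m\})\big)=S\big(h(\boldsymbol{v}),\{h(\uu_1),\ldots,h(\uu_m)\}\big)$, so $\mathrm{Reach}_{\mathcal{Z}}$ is the finite union of the $\mathbb{N}$-linear sets $S(h(\boldsymbol{b}_i),h(B_i))$. Finally, if $\boldsymbol{w}\in\mathbb{N}^{\Delta}$ has coordinate sum at most $r$ then every coordinate of $h(\boldsymbol{w})$ is at most $rM$, since each transition vector has entries at most $M$; hence every base vector $h(\boldsymbol{b}_i)$ and every period vector in $h(B_i)$ has entries bounded by $|Q|\cdot M$, which is comfortably within the claimed bound $\mathit{poly}(n,|Q|,M)^n$.

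\textbf{Main obstacle.} The one point that really needs care is Step 1: the monotone restriction is precisely what reduces VASS reachability — which in the unrestricted ($\mathbb{Z}$-)VASS case requires the semilinearity machinery of \cite{HaaseHalfon14} — to path-counting in a finite graph, at which point the clean combinatorial bounds of the Parikh decompositions apply off the shelf. Everything after that is bookkeeping; in fact the stated bound $\mathit{poly}(n,|Q|,M)^n$ is wildly generous for the monotone case (the honest bound being $|Q|\cdot M$), and its particular shape is only relevant because the accompanying appendix also covers the $\mathbb{Z}$-VASS generalisation alluded to in the introduction, where $M^n$-type factors genuinely arise from Cramer's-rule bounds on solutions of integer linear systems.
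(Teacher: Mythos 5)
Your Step 1 is sound and is in the same spirit as the paper's proof (which likewise reduces $\mathrm{Reach}_{\mathcal{Z}}$ to the Parikh image of an NFA), but Step 2 rests on a misstatement of the effective Parikh theorem, and that is exactly where the quantitative content of the proposition lives. It is not true that the Parikh image of an NFA can be written as a finite union of linear sets whose base vectors are Parikh images of \emph{simple} (state-repetition-free) accepting paths and whose periods are Parikh images of simple cycles. Counterexample: states $q_0$ (initial), $q_1$, $q_2$, $q_f$ (final), with letters/edges $f: q_0 \to q_f$, $a: q_0\to q_1$, $b: q_1 \to q_2$, $c: q_2 \to q_2$, $d: q_2 \to q_0$. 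Writing $e_x$ for the unit vector of letter $x$, the Parikh image is $\{e_f\} \cup \{\, e_f + r(e_a+e_b+e_d) + s\,e_c \;:\; r \ge 1,\ s \ge 0 \,\}$. The only simple accepting path is $f$, so every base would have to be $e_f$; since the image contains points with arbitrarily large $c$-coordinate, some linear set in the union would need the period $e_c$ (the only simple cycle using $c$) and would then contain $e_f + e_c$, which is not in the image. Correct effective versions of Parikh's theorem must allow bases that revisit states (Parikh images of runs covering the support, so that every period-cycle is attached to the base run), and the explicit magnitude bounds of \cite[Proposition 4.3]{Lin10},\cite{KopTo10} are polynomial in the number of states but \emph{exponential in the alphabet size}. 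This is why your shortcut of keeping the alphabet $\Delta$ and post-composing with the linear map $h$ does not combine with the cited results: applied to your automaton they give magnitudes of the form $\mathit{poly}(|Q|)^{|\Delta|}$, and $|\Delta|$ may be of order $|Q|^2(M+1)^n$, which far exceeds $\mathit{poly}(n,|Q|,M)^n$; your claimed $|Q|\cdot M$ bound, on the other hand, is simply not what those results state.

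The paper sidesteps this by changing the automaton rather than the theorem: each transition $(q,\boldsymbol{v},\sigma,p)$ is replaced by a gadget, a chain of transitions over the fixed alphabet $\{\sigma_1,\ldots,\sigma_n\}$ whose Parikh image is $\boldsymbol{v}$ (a unary encoding), yielding an NFA with alphabet size $n$ and at most $|Q|^2 n M$ states; the cited bound then gives exactly $\mathit{poly}(n,|Q|,M)^n$. This also explains the exponent $n$ in the statement: it is the alphabet-size exponent inherent in the known Parikh-image bounds, not (as your closing remark suggests) an artefact needed only for the $\mathbb{Z}$-VASS generalisation. Your route could be repaired by proving, rather than citing, a decomposition over alphabet $\Delta$ with $\ell_1$-norms of bases (covering runs) and periods (simple cycles) polynomial in $|Q|$, e.g.\ via an Euler/flow-decomposition argument, and then pushing it through $h$; but as written the decisive step is unsupported.
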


\section{Multi-Priced Timed Automata and Pareto Domination}\label{sec:definitions}

Let $\RP$ denote the set of non-negative real numbers.  Given a
set~$\clocks=\{x_1,\ldots,x_n\}$ of \emph{clocks}, the set
$\Phi(\clocks)$ of \emph{clock constraints}\label{clock-constraint} is
generated by the grammar $ \varphi ::= \true \mid x\leq k \,\mid\,
x\geq k \,\mid\, \varphi \wedge \varphi \, , $ where $k \in \N$ is a
natural number and $x\in \clocks$.  A \emph{clock valuation} is a
mapping~$\val: \clocks \to \RP$ that assigns to each clock a
non-negative real number.  We denote by $\zeroval$ the valuation such
that~$\zeroval(x)=0$ for all clocks $x\in \clocks$.  We write
$\val\models\varphi$ to denote that~$\val$ satisfies the
constraint~$\guard$.  Given $t\in\RP$, we let $\val+t$ be the clock
valuation such that~$(\val+t)(x)=\val(x)+t$ for all
clocks~$x\in\clocks$.  Given $\reset\subseteq\clocks$, let
$\val[\reset\leftarrow 0]$ be the clock valuation such that
$\val[\reset\leftarrow 0](x)=0$ if~$x\in\reset$, and
$\val[\reset\leftarrow 0](x)=\val(x)$ otherwise.

A \emph{multi-priced timed automaton} (MPTA) is a tuple
$\mathcal{A}=\tuple{\locs,\ell_0,L_f,\clocks,\mathcal{Y},\edges,R}$,
where~$\locs$ is a finite set of \emph{locations}, $\ell_0\in L$ is an
\emph{initial location}, $L_f\subseteq L$ is a set of \emph{accepting
  locations}, $\clocks$ is a finite set of \emph{clock variables},
$\mathcal{Y}$ is a finite set of \emph{observers},
$\edges\subseteq \locs\times \Phi(\clocks)\times 2^\clocks\times
\locs$ is the set of \emph{edges},
$R : L \rightarrow \mathbb{N}^{\mathcal{Y}}$ is a \emph{rate
  function}.  Intuitively $R(\loc)$ is a vector that gives the rates
of each observer  in location $\loc$.

A \emph{state} of $\mathcal{A}$ is a triple $(\loc,\nu,t)$ where $\loc$
is a location, $\nu$ a clock valuation, and~$t\in\RP$ is a \emph{time
  stamp}.  A \emph{run} of $\mathcal{A}$ is an alternating sequence of
states and edges
$ \rho = (\ell_0,\nu_0,t_0) \stackrel{e_1}{\longrightarrow}
  (\ell_1,\nu_1,t_1)\stackrel{e_2}{\longrightarrow} \ldots
  \stackrel{e_m}{\longrightarrow} (\ell_m,\nu_m,t_m) \, ,$ where $t_0=0$,
$\nu_0=\boldsymbol{0}$,~$t_{i-1} \le t_i$ for all
$i\in\{1,\ldots,m\}$, and
$e_i =\tuple{\ell_{i-1},\varphi,\lambda,\ell_i} \in E$ is such that
$\nu_{i-1} +(t_i-t_{i-1})\models \varphi$ and~$\nu_i = (\nu_{i-1}+(t_i-t_{i-1}))[\lambda \leftarrow 0]$
for~$i=1,\ldots,m$.  The run is \emph{accepting} if $\ell_m\in L_f$ and said to have 
\emph{granularity} $\frac{1}{g}$ for a fixed $g \in \mathbb{N}$ if all~$t_i \in \mathbb{Q}$ 
are positive integer multiples of~$\frac{1}{g}$.
The \emph{cost} of such a run is a vector $\cost(\rho) \in \mathbb{R}^{\mathcal{Y}}$, defined by
$\cost(\rho) = \sum_{j=0}^{m-1} (t_{i+1}-t_i) R(\ell_i) \, .$

Henceforth we will assume that the set  
$\mathcal{Y}$ of observers  of a given MPTA is partitioned into 
a set $\mathcal{Y}_c$ of \emph{cost variables} and a set $\mathcal{Y}_r$ of 
\emph{reward variables}.  With respect to this
partition we define a \emph{domination ordering} $\preccurlyeq$ on the
set of valuations~$\R^{\mathcal{Y}}$, where
$\gamma \preccurlyeq \gamma'$ if $\gamma(y)\leq\gamma'(y)$ for all
$y\in\mathcal{Y}_r$ and $\gamma'(y)\leq \gamma(y)$ for all
$y\in\mathcal{Y}_c$.  Intuitively $\gamma\preccurlyeq\gamma'$ (read
$\gamma'$ dominates $\gamma$) if $\gamma'$ is at least as good as
$\gamma$ in all respects.

Given $\varepsilon>0$ we define an
\emph{$\varepsilon$-domination ordering} $\preccurlyeq_\varepsilon$, where
$\gamma \preccurlyeq_\varepsilon \gamma'$ (read $\gamma'$
$\varepsilon$-dominates~$\gamma$) if $\gamma(y)+\varepsilon \leq
\gamma'(y)$ for all $y \in \mathcal{Y}_r$ and $\gamma'(y) +
\varepsilon \leq \gamma(y)$ for all $y\in \mathcal{Y}_c$.  We can
think of $\gamma \preccurlyeq_\varepsilon \gamma'$ as denoting that
$\gamma'$ is better than $\gamma$ by an additive factor of $\varepsilon$ in all dimensions.
In particular we clearly have that $\gamma \preccurlyeq_{\varepsilon}
\gamma'$ implies $\gamma \preccurlyeq \gamma'$.

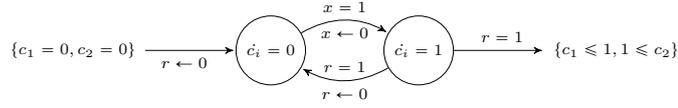
\begin{figure}[t]
\vspace{-.2cm}
    \begin{center}
      ~~~~~~\scalebox{.85}{\begin{tikzpicture}[->,>=stealth',shorten >=1pt,node distance=1cm, initial text={}]
  
\node [state,draw=none] (0,0) (st)  {\scriptsize{$\{c_1=0,c_2=0\}$}};
	\node [state] (q1) [right=1.4cm of st] {\scriptsize{$\dot{c_i}=0$}};
  \node[state] (q2) [right=1.2cm of q1]  {\scriptsize{$\dot{c_i}=1$}};
  \node[state, draw=none] (en) [right=1.4cm of q2]  {\scriptsize{$\{c_1\leq 1, 1\leq c_2\}$}};

  \path[->] (st) edge  
  node [below,midway] {\scriptsize{$r\leftarrow 0~~$}} (q1);%
  \path[->] (q1) edge [bend left]  node [above,midway] {\scriptsize{$x=1$}}  node [below,midway] {\scriptsize{$x\leftarrow 0$}} (q2);%
  \path[->] (q2) edge [bend left] node [above,midway] {\scriptsize{$r=1$}}  node [below,midway] {\scriptsize{$r\leftarrow 0$}} (q1);%
  \path[->] (q2) edge  node [above,midway] {\scriptsize{$~r=1$}}  (en);%

\end{tikzpicture}}
    \end{center}
    \vspace{-.6cm}
    \caption{Predicates in curly brackets denote observer values enforced by
      initialisation, $c_i=0$ with $i\in \{1,2\}$, and the Pareto constraint upon
      exit $\{c_1\leq 1, 1\leq c_2\}$. Denoting the initial value of clock $x$ by $x^*$, the value of both
			$c_1$ and $c_2$ after $n$ full traversals of the                                                     
central cycle is $nx^*$.  Meeting the final Pareto constraint from initial
      values  thus requires that $x^*$ be $\frac{1}{n}$ for some
      positive integer~$n$.}\label{fig:exam-init} \vspace{-.4cm}
  \end{figure}

The \emph{Pareto Domination Problem}  is as follows. Given an MPTA
$\mathcal{A}$  with a set $\mathcal{Y}$ of observers and a
partition of $\mathcal{Y}$ into sets $\mathcal{Y}_c$ and
$\mathcal{Y}_r$ of cost and reward variables, with a target~$\gamma\in \R^\mathcal{Y}$, 
decide whether there is an accepting run
$\rho$ of $\mathcal{A}$ such that $\gamma \preccurlyeq \cost(\rho)$.

The \emph{Gap Domination Problem} is a variant of the above problem in
which the input additionally includes an accuracy parameter
$\varepsilon>0$.  If there is some run $\rho$ such that~$\gamma\preccurlyeq_\varepsilon \cost(\rho)$ then the output should be
``dominated'' and if there is no run $\rho$ such that~$\gamma\preccurlyeq \cost(\rho)$ then the output should be ``not
dominated''.  In case neither of these alternatives hold
(i.e., $\gamma$ is dominated but not $\varepsilon$-dominated) then there
is no requirement on the output.


In the (Pareto) Domination Problem the objective is to \emph{reach}
an accepting location while satisfying a family of upper-bound constraints on cost variables and
lower-bound constraints on reward variables.  We say that an instance
of the problem is \emph{pure} if all observers are cost variables or
all are reward variables (and hence all constraints are upper bounds
or all are lower bounds); otherwise we call the instance \emph{mixed}.
Our problem formulation involves only
simple constraints on observers, i.e., those of the form $y \leq c$ or $y \geq c$ for
$y\in\mathcal{Y}$.  However such constraints can be used to encode
more general linear constraints of the form~$a_1 y_1+\cdots+ a_ky_k \sim c$, where
$y_1,\ldots,y_k \in \mathcal{Y}$, $a_1,\ldots, a_k, c \in \mathbb{N}$
and ${\sim} \in\{\leq,\geq,=\}$.  To do this one introduces a
fresh observer  to denote each linear term~$a_1 y_1+\cdots+ a_ky_k$
(two fresh observers  are needed for an equality constraint).



Note that we consider timed automata without \emph{difference
  constraints} on clocks, i.e., without clock guards of the form
  $x_i-x_j \sim k$, for $k \in \mathbb{N}$.  As discussed in
  Appendix~\ref{app:difference} all our decidability and
  complexity results hold also in case of such constraints.


\section{Undecidability of the Pareto Domination Problem}\label{sec:undecidability}

In this section we prove undecidability of the Pareto Domination Problem.
To give some insight we first give in  
Figure~\ref{fig:exam-init} an MPTA, in which
the Pareto constraint $c_1\leq 1, c_2\geq 1$ is used to enforce that when control enters the MPTA the 
value of clock~$x$ is~$\frac{1}{n}$ for some positive integer~$n$.

We prove undecidability of the Pareto Domination Problem by reduction from the
satisfiability problem for a fragment of arithmetic given by a
language $\mathcal{L}$ that is defined as follows.  There is an
infinite family of variables $X_1,X_2,X_3,\ldots$ and formulas are
given by the grammar
$\varphi ::= X=Y+Z \, \mid \, X=YZ \, \mid \, \varphi \wedge \varphi\,$,  where $X,Y,Z$ range over the set of variables.  The
satisfiability problem for $\mathcal{L}$ asks, given a formula
$\varphi$, whether there is an assignment of positive integers to the
variables that satisfies~$\varphi$.  In
Appendix~\ref{append-variant-Diophantine-prop} we show that the
satisfiability problem for $\mathcal{L}$ is undecidable by reduction
from Hilbert's Tenth Problem.



\begin{theorem}\label{Theorem:Undecidability}
The Pareto Domination Problem is undecidable.
\end{theorem}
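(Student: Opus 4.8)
The plan is to reduce the satisfiability problem for the arithmetic language $\mathcal{L}$ (shown undecidable in Appendix~\ref{append-variant-Diophantine-prop}) to the Pareto Domination Problem. Given a formula $\varphi$ over variables $X_1,\ldots,X_N$, I will build an MPTA $\mathcal{A}$ together with a target vector $\gamma$ such that $\mathcal{A}$ has an accepting run Pareto-dominating $\gamma$ iff $\varphi$ has a solution in positive integers. The core idea is already illustrated by Figure~\ref{fig:exam-init}: a short ``reciprocal gadget'' that, via a Pareto constraint of the form $c_1 \le 1 \wedge c_2 \ge 1$, forces a clock value $x^*$ to equal $\frac1n$ for some positive integer $n$. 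Running such a gadget and reading off the number $n$ of traversals lets the automaton ``guess'' an arbitrary positive integer and expose it as an observer value. I will use one such gadget per variable $X_i$ of $\varphi$, obtaining observers whose final values are exactly the guessed integers $n_1,\ldots,n_N$.

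The next step is to encode the two kinds of atomic constraints. Addition $X = Y+Z$ is straightforward: if the values of $Y$ and $Z$ are represented as accumulated observer values, route time through locations whose rate vectors add the $Y$- and $Z$-observers into the $X$-observer, then impose equality via two simple constraints $c_X \le c_Y + c_Z$ and $c_X \ge c_Y + c_Z$ (recall from Section~\ref{sec:definitions} that such linear constraints on observers are expressible using fresh observers). Multiplication $X = YZ$ is the delicate case. Here I exploit the fact that a reciprocal gadget producing the clock value $\frac1Y$ can be re-used: running a cycle that accumulates a reward at unit rate for total elapsed time equal to $Z \cdot \frac1Y \cdot Y = Z$ is not directly helpful, so instead I accumulate, over $Y$ iterations of an inner cycle whose duration is governed by a clock holding $\frac1{\text{something}}$, a quantity that telescopes to $YZ$. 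Concretely, one arranges that each of $Z$ passes through an outer loop contributes a fixed increment, and the loop is traversed exactly $Y$ times as enforced by a reciprocal-style Pareto constraint; the product $YZ$ then appears as the total accumulated value of a dedicated observer, again pinned down by an equality constraint against the $X$-observer.

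**The main obstacle** is engineering the multiplication gadget so that a single MPTA can simultaneously enforce all the reciprocal constraints (one per variable, plus auxiliary ones inside each multiplication gadget) using a single target vector $\gamma$, without the gadgets interfering: each gadget needs its own pair of observers playing the roles of $c_1, c_2$ in Figure~\ref{fig:exam-init}, and one must ensure that the Pareto constraints are tight enough that the only runs dominating $\gamma$ are those in which every clock value is forced to be the reciprocal of a genuine positive integer and every arithmetic relation of $\varphi$ holds exactly. Care is also needed because observers cannot be queried or reset on edges, so all bookkeeping must be done through rate vectors and clock resets alone; in particular the ``counting'' of loop iterations has to be made observable purely through accumulated cost/reward. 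Once the gadget is correct, the correctness proof of the reduction is a routine induction over the structure of $\varphi$: an accepting dominating run yields a satisfying assignment $X_i \mapsto n_i$, and conversely any satisfying assignment dictates the number of traversals of each cycle and hence an accepting run meeting $\gamma$ with equality. Undecidability of the Pareto Domination Problem then follows from undecidability of satisfiability for $\mathcal{L}$. I note that since a fixed universal Diophantine equation (degree $4$, $58$ variables~\cite{Jones80}) suffices, the construction yields undecidability already for a fixed finite number of observers.
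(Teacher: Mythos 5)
Your overall strategy coincides with the paper's: reduce from satisfiability of $\mathcal{L}$, use the reciprocal gadget of Figure~\ref{fig:exam-init} to force clock values to be reciprocals of positive integers, and pin observer values exactly by pairing a cost copy with a reward copy of each observer. Addition is also fine. But the multiplication gadget --- which you yourself identify as the crux --- is not actually constructed, and the mechanism you sketch does not go through. You propose to represent each integer $X_i$ \emph{directly} as an accumulated observer value and to obtain $YZ$ by having ``each of $Z$ passes through an outer loop contribute a fixed increment'' while ``the loop is traversed exactly $Y$ times.'' The missing step is how a single pass of a loop is made to contribute $Z$ (equivalently, to last $Z$ time units) when $Z$ is an unbounded guessed integer: clock guards compare only against fixed constants, observers cannot be queried or reset mid-run, and the final Pareto constraint compares observers only against the \emph{fixed} target vector $\gamma$. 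So you cannot certify, iteration by iteration, that an inner phase lasted exactly $Z$ units; the one aggregate end-of-run check cannot be written against the non-constant value $Y$ or $Z$. This is precisely the obstruction that makes MPTA reachability nontrivial, and your proposal restates the goal rather than overcoming it.

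The paper's proof avoids the direct representation entirely. Every gadget operates within a globally bounded window of $1$ or $2$ units of the reference clock $r$, and the unbounded integer $X_j$ appears only as the number of wraps of the clock $x_j$ whose value is $x_j^*=1/X_j$. The quotient gadget of Figure~\ref{Fig:gadgets} accumulates $x_j^*$ into a fresh observer $e$ and $x_i^*$ into $c$ on each wrap, so the single end-of-run constraint $e=1$ retroactively forces exactly $1/x_j^*=X_j$ traversals, leaving $c$ incremented by $x_i^*/x_j^*$. The product $X_i=X_jX_k$ is then expressed as the constraint $x_i^*=x_j^*x_k^*$ on the \emph{small fractional} clock values, verified by the identity
\[
\frac{x_i^*}{x_j^*}+\frac{x_i^*}{x_k^*}+(1-x_j^*)+(1-x_k^*)=2 ,
\]
i.e.\ by composing two quotient gadgets with two decrement gadgets and requiring the resulting observer to equal the constant $2$. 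To repair your proof you would need either to reproduce this reciprocal-and-quotient machinery or to supply a genuinely different mechanism for certifying a phase of unbounded, data-dependent duration using only end-of-run comparisons against constants; as written, the reduction does not yet establish undecidability.
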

\begin{proof}
  Consider the following problem of reaching a single valuation
  in $\RP^{\mathcal{Y}}$:
    given an MPTA
    $\mathcal{A}=\tuple{\locs,\loc_0,L_f,\clocks,\mathcal{Y},\edges, R}$, and
    target valuation~$\gamma \in \RP^{\mathcal{Y}}$, decide whether there is an
    accepting run~$\rho$ of $\mathcal{A}$ such that $\cost(\rho)=\gamma$.

\begin{figure}[t]
\begin{minipage}{0.5\textwidth}
\begin{description}
 
\item[Integer test] $\frac{1}{x_i^{*}} \stackrel{?}{\in}\N$\textbf{:}
  \vspace*{-.4cm}
  \begin{figure}[H]
    \begin{center}
     \scalebox{.85}{\begin{tikzpicture}[->,>=stealth',shorten >=1pt,auto,node distance=1cm, initial text={}]
  
\node [state,draw=none] (0,0) (st)  {\scriptsize{$\{c=0\}$}};
	\node [state] (q1) [right=1.4cm of st] {\scriptsize{$\dot{c}=0$}};
  \node[state] (q2) [right=1.2cm of q1]  {\scriptsize{$\dot{c}=1$}};
  \node[state, draw=none] (en) [right=1.4cm of q2]  {\scriptsize{$\{c=1\}$}};
  
	\draw [draw=magenta, thick,dashed] (1,-.85) rectangle (5.9,1);

\path[->] (st) edge  node [above,near end] {\scriptsize{$r=1~~$}}  node [below,near end] {\scriptsize{$r\leftarrow 0~~$}} (q1);%
	\path[->] (q1) edge [bend left]  node [above,midway] {\scriptsize{$x_i=1$}}  node [below,midway] {\scriptsize{$x_i\leftarrow 0$}} (q2);%
	\path[->] (q2) edge [bend left] node [above,midway] {\scriptsize{$r=1$}}  node [below,midway] {\scriptsize{$r\leftarrow 0$}} (q1);%
		\path[->] (q1) edge [loop above, looseness=5]  node [left,midway] {\scriptsize{\emph{wrap} \,}}  (q1);%
	\path[->] (q2) edge  node [above,near start] {\scriptsize{Inv}}   (en);%
	\path[->] (q2) edge [loop above, looseness=5]  node [right,midway] {\, \scriptsize{\emph{wrap} }}  (q2);%

\end{tikzpicture}}
      \label{fig:test-gadget}
			\end{center}
  \end{figure}
 \vspace*{-1cm}

\item[Decrement] $c \leftarrow c+1-x_i^{*}$\textbf{:}
  \vspace*{-0.4cm}
  \begin{figure}[H]
    \begin{center}
      \scalebox{.85}{\begin{tikzpicture}[->,>=stealth',shorten >=1pt,auto,node distance=1cm, initial text={}]
  \node [state, draw=none] (0,0) (st)  {};
	\node [state] (q1) [right=1.4cm of st] {\scriptsize{$\dot{c}=1$}};
  \node[state] (q2) [right=1.2cm of q1]  {\scriptsize{$\dot{c}=0$}};
  \node[state, draw=none] (en) [right=1.2cm of q2]  {};
  
	\draw [draw=magenta, thick,dashed] (.8,-.65) rectangle (5.7,1);
		
	\path[->] (st) edge  node [above,near end] {\scriptsize{$r=1~~$}}  node [below,near end] {\scriptsize{$r\leftarrow 0~~$}} (q1);%
	\path[->] (q1) edge  node [above,midway] {\scriptsize{$x_i=1$}}  node [below,midway] {\scriptsize{$x_i\leftarrow 0$}} (q2);%
	\path[->] (q1) edge [loop above, looseness=5]  node [left,midway] {\scriptsize{\emph{wrap} \,}}  (q1);%
	\path[->] (q2) edge  node [above,near start] {\scriptsize{Inv}}   (en);%
	\path[->] (q2) edge [loop above, looseness=5]  node [right,midway] {\, \scriptsize{\emph{wrap}}}  (q2);%

\end{tikzpicture}}
      \label{fig:dec-gadget}
    \end{center}
  \end{figure}
	\end{description}
\end{minipage}
\hfill\textcolor{blue!30!white}{\vline}\hfill
\begin{minipage}{0.45\textwidth}

\begin{description}
\item[Quotient] $c \leftarrow c+\frac{x^{*}_i}{x^{*}_j}$\textbf{:}
 \vspace*{-.4cm}
  \begin{figure}[H]
    \begin{center}
      \scalebox{.85}{\begin{tikzpicture}[->,>=stealth',shorten >=1pt,auto,node distance=1cm, initial text={}]
  \node [state, draw=none] (0,0) (st)  {\scriptsize{$\{e=0\}$}};
	\node [state] (q1) [right=1.3cm of st] {\scriptsize{$\dot{e}=0$}};
  \node[state] (q2) [right=1.5cm of q1]  {\scriptsize{$\dot{e}=1$}};
  \node [state] (q3) [below=1.5cm of q2] {\scriptsize{$\dot{c}=0$}};
  \node[state] (q4) [left=1.5cm of q3]  {\scriptsize{$\dot{c}=1$}};
  \node[state, draw=none] (en) [left=1.3cm of q4]  {\scriptsize{$\{e=1\}$}};
  
	\draw [draw=magenta, thick,dashed] (1.05,-3.5) rectangle (5.9,1);
		
	\path[->] (st) edge  node [above,near end] {\scriptsize{$r=1~~$}}  node [below,near end] {\scriptsize{$r\leftarrow 0~~$}} (q1);%
	\path[->] (q1) edge  node [above,midway] {\scriptsize{$x_j=1$}}  node [below,midway] {\scriptsize{$x_j\leftarrow 0$}} (q2);%
	\path[->] (q1) edge [loop above, looseness=5]  node [left,midway] {\scriptsize{\emph{wrap} \,}}  (q1);%
	\path[->] (q2) edge  node [right,midway] {\scriptsize{$r\leftarrow 0$}}  node [left,midway] {\scriptsize{$r=1$}} (q3);%
	\path[->] (q2) edge [loop above, looseness=5]  node [right,midway] {\scriptsize{\, \emph{wrap} }}  (q2);%
	\path[->] (q3) edge  node [above,midway] {\scriptsize{$x_i=1$}}  node [below,midway] {\scriptsize{$x_i\leftarrow 0$}} (q4);%
	\path[->] (q3) edge [loop below, looseness=5]  node [right,midway] {\scriptsize{\, \emph{wrap}}}  (q3);%
	\path[->] (q4) edge  node [right,midway] {\scriptsize{$r\leftarrow 0$}}  node [left,midway] {\scriptsize{$r=1$}} (q1);%
	\path[->] (q4) edge [loop below, looseness=5]  node [left,midway] {\scriptsize{\emph{wrap} \,}}  (q4);%
	\path[->] (q4) edge  node [above,near start] {\scriptsize{Inv}}       (en);%

\end{tikzpicture}}
      \label{fig:dep-gadget}
    \end{center}
  \end{figure}
\end{description}

\end{minipage}
\vspace{-.3cm}
\caption{The \emph{wrap} self-loop denotes a family of $m$ wrapping
  edges, as in~\cite[Fig. 14]{Henzinger98}, where the $j$-th edge has guard~$x_j=1$ and
  resets~$x_j$.  In the quotient gadget, $e$ is a fresh observer, as is $c$ in the integer test.  The
  integer test and quotient gadgets are annotated with predicates
  in curly brackets indicating the initial values
  of observers on entering and
  their target values on exiting the gadget.
  Enforcing these target values through a
    corresponding Pareto constraint guarantees the
  desired behaviour of the gadget.
}\label{Fig:gadgets}
		\vspace{-.4cm}
\end{figure}

  One can reduce the problem of reaching a given valuation to the Pareto Domination Problem as follows.
  Transform the MPTA $\mathcal{A}$ to an MPTA $\mathcal{A'}$ that has the same locations
  and edges as $\mathcal{A}$ but with two copies of each observer 
  $y\in\mathcal{Y}$, with each copy having the same rate as $y$ in
  each location.  Formally $\mathcal{A'}$ has set of observers
  $\mathcal{Y}'=\{y_1,y_2 : y \in \mathcal{Y}\}$, where $y_1$ is a
  cost variable and $y_2$ is a reward variable.  Then, defining~$\gamma' \in \RP^{\mathcal{Y}'}$ 
  by~$\gamma'(y_1)=\gamma'(y_2)=\gamma(y)$, we have that~$\mathcal{A'}$ has an
  accepting run $\rho'$ such that $\cost(\rho')$ dominates $\gamma'$ just
  in case $A$ has an accepting run $\rho$ such that
  $\cost(\rho)=\gamma$.

  Now we give a reduction from the satisfiability problem for
  $\mathcal{L}$ to the problem of reaching a single valuation.
  Consider an $\mathcal{L}$-formula $\varphi$ over variables
  $X_1,\ldots,X_m$.  We define an MPTA~$\mathcal{A}$ over the set of
  clocks $\clocks=\{x_1,\cdots,x_m,r\}$.  Clock $x_i$ corresponds to
  the variable~$X_i$, for~$i=1,\ldots,m$, while $r$ is a
  \emph{reference clock}.  The reference clock is reset whenever it
  reaches~$1$ and is not otherwise reset---thus it keeps track of
  global time modulo one.  After an initialisation phase the remaining
  clocks $x_1,\ldots,x_m$ are likewise reset in a cyclic fashion,
  whenever they reach~$1$ and not otherwise.  We denote by $x_i^*$ the
  value of clock $x_i$ whenever $r$ is $1$.  During the initialisation
  phase the values~$x_i^*$ are established non-deterministically such
  that $0< x_i^{*} \leq 1$.  The idea is that $\frac{1}{x_i^*}$
  represents the value of variable $X_i$ in $\varphi$;
 in particular,~$x_i^{*}$ is the reciprocal of a positive integer.  For each atomic
  sub-formula in $\varphi$ the automaton $\mathcal{A}$ contains a
  gadget that checks that the guessed valuation satisfies the
  sub-formula.

To present the reduction we first define three primitive gadgets.
  The first ``integer test''  gadget checks that the
initial value $x_i^*$ of clock $x_i$ is a
  reciprocal of a positive integer, by adding
	wrapping edges on all clocks $x_j$ other than $x_i$ to the MPTA from
    Figure~\ref{fig:exam-init}.
%
%
The construction of each gadget is such that the precondition $r=0$ holds when control enters
the gadget and the postcondition $r=1 \wedge \bigwedge_{j=1}^m x_j \le 1$ holds on
exiting the gadget.  This last postcondition is abbreviated to Inv in the figures. 
For an observer $c$ and $1\leq i,j\leq m$, we define these three gadgets as in Figure~\ref{Fig:gadgets}.

In the following we show how to compose the three primitive operations
in an MPTA to enforce the atomic constraints in the language
$\mathcal{L}$.  The initialisation automaton below is such that for
$i=1,\ldots,m$ the value $x_i^*$ of clock $x_i$ is such that
$\frac{1}{x_i^*} \in \mathbb{N}$.  Herein the Guess self-loop denotes
a family of $m$ edges, where the $j$-th edge
non-deterministically resets
clock~$x_j$. Note that the incoming edge of the
  integer test gadget enforces $r=1$ such that the initial guesses for
  the clocks $x_i$ satisfy $x_i^* \in [0,1]$. Of these, only
  reciprocals $\frac{1}{x_i^*} \in \mathbb{N}$ pass the subsequent
  series of integer tests.

\begin{description}

\item[Initialisation] $X_1,\ldots,X_n \in \mathbb{N}$ \textbf{:} 
 \vspace*{-.9cm}
  \begin{figure}[H]
    \begin{center}
      ~~~~~~~~~~~~~\scalebox{.85}{\begin{tikzpicture}[->,>=stealth',shorten >=1pt,auto,node distance=1cm, initial text={}]
  
\node [state,draw=none] (0,0) (st)  {$\{\bigwedge_{i=1}^m c_i=0\}$};
\node [state] (q0) [right=.7cm of st]  {};
	\node [state,rectangle] (q1)[right=.5cm of q0] {$\frac{1}{x_1^{*}} \stackrel{?}{\in}\N$};
	  \node[state, draw=none] (en) [right=.5cm of q1]  {$\cdots$};
	\node [state,rectangle] (q2)[right=.5cm of en] {$\frac{1}{x_m^{*}} \stackrel{?}{\in}\N$};
  \node[state, draw=none] (enm) [right=.7cm of q2]  {$\{\bigwedge_{i=1}^m c_i=1\}$};

	\path[->] (st) edge (q0);%
        \path[->] (q0) edge [loop above]  node [above,midway] {\scriptsize{Guess}}  (q0);%
	\path[->] (q0) edge (q1);%
	\path[->] (q1) edge  (en);%
	\path[->] (en) edge  (q2);%
	\path[->] (q2) edge    (enm);%

\end{tikzpicture}}
      \label{fig:init-gadget}
    \end{center}
  \end{figure}
 \vspace*{-1.2cm}
	\item[Sum] $X_i=X_j+X_k$\textbf{:} According to the encoding of integer
          value $X_n$ as clock value $x_n=\frac 1 {X_n}$, we have to
          enforce
          $\frac{1}{x_i^{*}}=\frac{1}{x_j^{*}}+\frac{1}{x_k^{*}}$,
          which is achieved by the following sequential combination of
          two quotient gadgets.
	\vspace*{-1.2cm}
	\begin{figure}[H]
 \begin{center}
   \scalebox{.85}{\begin{tikzpicture}[->,>=stealth',shorten >=1pt,auto,node distance=1cm, initial text={}]
 \node [state, draw=none] (0,0) (st)  {$\{c_i=c_j=c_k=0\}$};
	\node [state,rectangle] (q1)[right=.7cm of st] {$c_i \leftarrow c_i+\frac{x_i^{*}}{x^{*}_j}$};
	\node [state,rectangle] (q2)[right=.5cm of q1] {$c_i \leftarrow c_i+\frac{x_i^{*}}{x^{*}_k}$};
	\node [state,draw=none] (en) [right=.7cm of q2] {$\{c_i=c_j=c_k=1\}$};
	
	\path[->] (st) edge   (q1);%
	\path[->] (q1) edge   (q2);%
	\path[->] (q2) edge   (en);%
	
\end{tikzpicture}}
    \label{fig:sum-gadget}
		\end{center}
\end{figure}
	
 \vspace*{-1.4cm}	
\item[Product] $X_i=X_jX_k$\textbf{:} The following gadget enforces $\frac{1}{x_i^{*}}=\frac{1}{x_j^{*}}\cdot \frac{1}{x_k^{*}}$:
  \vspace*{-1.2cm}
  \begin{figure}[H]
    \begin{center}
      \scalebox{.85}{\begin{tikzpicture}[->,>=stealth',shorten >=1pt,auto,node distance=1cm, initial text={}]
  \node [state, draw=none] (0,0) (st)  {$\{c_i=c_j=c_k=0\}$};
	\node [state,rectangle] (q1)[right=.7cm of st] {$c_i \leftarrow c_i+\frac{x_i^{*}}{x^{*}_j}$};
	\node [state,rectangle] (q2)[right=.5cm of q1] {$c_i \leftarrow c_i+\frac{x_i^{*}}{x^{*}_k}$};
	\node [state,rectangle] (q3)[below=.5cm of q2] {$c_i \leftarrow c_i+1-x_j^{*}$};
	\node [state,rectangle] (q4)[right=.5cm of q3] {$c_i \leftarrow c_i+1-x_k^{*}$};
	\node [state,draw=none] (en) [right=.7cm of q4] {$\{c_i=2 \wedge c_j=c_k=1\}$};
	
	\path[->] (st) edge   (q1);%
	\path[->] (q1) edge   (q2);%
	\path[->] (q2) edge   (q3);%
	\path[->] (q3) edge   (q4);%
	\path[->] (q4) edge (en);%
	
\end{tikzpicture}}
      \label{fig:prod-gadget}
    \end{center}
  \end{figure}
	 \vspace*{-2.4cm}
\end{description}

The satisfiability problem for a given $\mathcal{L}$ formula $\varphi$
can now directly be reduced to the problem of reaching a single
valuation $\gamma\in \RP^{\mathcal{Y}}$ by translating each of the
conjuncts of $\varphi$ into the corresponding above MPTA gadget.  The
valuation $\gamma$ encodes the target costs of the respective
gadgets.
\end{proof}

Let us remark that the proof of Theorem~\ref{Theorem:Undecidability}
shows that undecidability of the Pareto Domination Problem already holds in case all
observers  have rates in~$\{0,1\}$. 
Separately we observe that undecidability also holds in the special case that exactly 
one observer  is a cost variable and the others are reward variables, 
and likewise when exactly one observer  is a reward variable and the others are cost variables, when
allowing multiple rates beyond $\{0,1\}$.
The idea is to reduce the problem of reaching a
particular valuation $\gamma \in \RP^{\mathcal{Y}}$ in an MPTA
$\mathcal{A}$ to that of dominating a valuation~$\gamma' \in \RP^{\mathcal{Y}'}$ in a derived MPTA $\mathcal{A'}$ with
set of observers $\mathcal{Y}'=\mathcal{Y}\cup \{y_{\mathrm{sum}}\}$,
where $y_{\mathrm{sum}}$ is a fresh variable.    In
$\mathcal{A'}$ we designate all $y\in \mathcal{Y}$ as cost variables
and $y_{\mathrm{sum}}$ as a reward variable, or vice versa. 
Valuation $\gamma'$ is
specified by $\gamma'(y)=\gamma(y)$ for all $y\in \mathcal{Y}$ and~$\gamma'(y_{\mathrm{sum}}) = \sum_{y\in \mathcal{Y}} \gamma(y)$.
 Automaton
$\mathcal{A'}$ has the same locations, edges, and rate function
as those of $\mathcal{A}$ except that
$R'(y_{\mathrm{sum}}) = \sum_{y\in\mathcal{Y}} R(y)$.

\section{The Simplex Automaton}\label{sec:simplex-automaton}
This section introduces the basic construction from which we derive
our positive decidability results and complexity upper bounds.

Let
$\mathcal{A}=\langle L,\loc_0, L_f,
\mathcal{X},\mathcal{Y},E,R\rangle$ be an MPTA.  For a sequence of
edges $e_1,\ldots,e_m \in E$, define
$\mathit{Runs}(e_1,\ldots,e_m) \subseteq \RP^{m}$ to be the collection
of sequences of timestamps $(t_1,\ldots,t_m)\in \RP^{m}$ such that
$\mathcal{A}$ has a run
$ \rho = (\ell_0,\nu_0,t_0) \stackrel{e_1}{\longrightarrow}
(\ell_1,\nu_1,t_1)\stackrel{e_2}{\longrightarrow} \ldots
\stackrel{e_m}{\longrightarrow} (\ell_m,\nu_m,t_m)$.  Recalling
that by convention $t_0=0$ and $\nu_0=\boldsymbol{0}$, once the edges
$e_1,\ldots,e_m$ have been fixed then the run~$\rho$ is determined
solely by the timestamps $t_1,\ldots,t_m$.  When the sequence of edges~$e_1,\ldots,e_m$ is understood, we call such a sequence of timestamps
a run.  

\begin{proposition}
  $\mathit{Runs}(e_1,\ldots,e_m) \subseteq \RP^{m}$ is
  defined by a conjunction of difference constraints.
\label{prop:diff}
\end{proposition}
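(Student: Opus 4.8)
The plan is to show that membership of a timestamp vector $(t_1,\ldots,t_m)$ in $\mathit{Runs}(e_1,\ldots,e_m)$ is captured by finitely many constraints of the form $t_i - t_j \sim c$ (together with the monotonicity and non-negativity constraints on the $t_i$, which are themselves difference constraints involving the fixed value $t_0 = 0$). I would work directly from the definition of a run: the only requirements are $0 = t_0 \le t_1 \le \cdots \le t_m$ and, for each step $i$, that the clock valuation $\nu_{i-1} + (t_i - t_{i-1})$ reached just before taking edge $e_i$ satisfies the guard $\varphi_i$ of $e_i$. So the whole proof reduces to expressing each atomic guard constraint ``$x \le k$'' or ``$x \ge k$'' — evaluated at the valuation reached along the run — as a difference constraint on the $t_i$.

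The key observation is that, once the edge sequence is fixed, the value of any clock $x$ at step $i$ is an affine function of the timestamps of a very restricted form. Concretely, since clocks advance at unit rate and are only reset to $0$, the value of clock $x$ just before taking $e_i$ is $t_i - t_{r(x,i)}$, where $r(x,i)$ is the index of the last edge among $e_1,\ldots,e_{i-1}$ that resets $x$ (and $r(x,i) = 0$ if $x$ has not been reset yet, using $t_0 = 0$). This index $r(x,i)$ is determined purely combinatorially from the reset sets $\lambda_1,\ldots,\lambda_m$ of the fixed edges, so it can be computed. Substituting, an atomic guard $x \le k$ at step $i$ becomes $t_i - t_{r(x,i)} \le k$, and $x \ge k$ becomes $t_i - t_{r(x,i)} \ge k$, i.e. $t_{r(x,i)} - t_i \le -k$; both are difference constraints. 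Taking the conjunction over all $i$ and all conjuncts of $\varphi_i$, and adjoining the chain $0 \le t_1 \le \cdots \le t_m$, yields a finite conjunction of difference constraints whose solution set is exactly $\mathit{Runs}(e_1,\ldots,e_m)$.

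I would present the argument by first isolating the ``last reset'' bookkeeping: define $r(x,i)$ formally and prove by a short induction on $i$ that $\nu_i(x) = t_i - t_{r(x,i+1)}$ holds for every reachable configuration along the (fixed) edge sequence, using the update rule $\nu_i = (\nu_{i-1} + (t_i - t_{i-1}))[\lambda_i \leftarrow 0]$. Then the guard translation is immediate by the grammar for $\Phi(\clocks)$, since $\true$ contributes nothing and conjunction of guards corresponds to conjunction of constraints. The only mild subtlety — and the step I expect to need the most care — is the boundary case where a clock has not yet been reset, which is handled uniformly by the convention $\nu_0 = \boldsymbol{0}$ and $t_0 = 0$, so that $r(x,i) = 0$ slots into the same formula. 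Everything else is routine substitution, so no genuine obstacle arises; the content is entirely in recognising that fixing the edge sequence freezes the reset pattern and hence linearises the clock values as pure differences of timestamps.
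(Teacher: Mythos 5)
Your proposal is correct and matches the paper's own proof: the paper likewise observes that, with the edge sequence fixed, the clock value tested at step $i$ equals $t_i - t_j$ for $j$ the last reset index (or $t_i$ itself, i.e.\ $t_i - t_0$, if the clock was never reset), so that the guards together with the monotonicity of the timestamps yield a conjunction of difference constraints. Your explicit induction on the reset bookkeeping is just a more detailed write-up of the same argument.
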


The proof of Proposition~\ref{prop:diff} is in
Appendix~\ref{append-prop-diff}.

\begin{proposition}
  $\mathit{Runs}(e_1,\ldots,e_m)$ is equal to the convex hull
  of the set of its integer points.
  \label{prop:hull}
\end{proposition}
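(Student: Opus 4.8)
The plan is to start from Proposition~\ref{prop:diff}, which exhibits $P := \mathit{Runs}(e_1,\ldots,e_m)\subseteq\RP^m$ as the solution set of a finite conjunction of difference constraints $t_i - t_j \le c_{ij}$; because clock guards compare clocks only with natural numbers and a run contributes nothing beyond the monotonicity inequalities $t_{i-1}\le t_i$ (together with the convention $t_0 = 0$), every bound $c_{ij}$ is an integer. Hence $P$ is a closed convex polyhedron, and since it is convex and contains its integer points we get $\mathrm{conv}(P\cap\Z^m)\subseteq P$ immediately; the real content is the converse, that every $\boldsymbol t\in P$ is a convex combination of integer points of $P$ (if $P=\emptyset$ this is vacuous). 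One slick way to see this is to observe that the coefficient matrix of a difference-constraint system has one $+1$ and one $-1$ per row, hence is totally unimodular, so $P$ is an integral polyhedron, and integral polyhedra are exactly those that equal the convex hull of their integer points (see, e.g., Schrijver's textbook on integer programming). I would, however, give a self-contained argument, which additionally produces an explicit decomposition that is convenient later.

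That argument is a ``fractional cascade'' rounding. Let $f_0<f_1<\cdots<f_r$ enumerate the distinct values of $\fract(t_i)$ over $i\in\{0,1,\ldots,m\}$, so that $f_0=0$ since $t_0=0$. For each $k\in\{0,\ldots,r\}$ let $\boldsymbol t^{(k)}$ be the integer vector obtained from $\boldsymbol t$ by rounding coordinate $t_i$ up when $\fract(t_i)>f_k$ and down otherwise; thus $\boldsymbol t^{(0)}=\lceil\boldsymbol t\rceil$ and $\boldsymbol t^{(r)}=\floor{\boldsymbol t}$. The two things to check are: (i) each $\boldsymbol t^{(k)}$ satisfies every difference constraint of $P$, so $\boldsymbol t^{(k)}\in P\cap\Z^m$; and (ii) $\boldsymbol t=\sum_{k=0}^r\lambda_k\boldsymbol t^{(k)}$ with the layer-width weights $\lambda_k=f_{k+1}-f_k$ for $k<r$ and $\lambda_r=1-f_r$, which are nonnegative and sum to $1$. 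Claim (ii) is a direct computation: for a coordinate with $\fract(t_i)=f_l$ one has $t^{(k)}_i=\floor{t_i}+1$ for $k<l$ and $t^{(k)}_i=\floor{t_i}$ for $k\ge l$, so $\sum_k\lambda_k t^{(k)}_i=\floor{t_i}+\sum_{k<l}\lambda_k=\floor{t_i}+f_l=t_i$. For claim (i), from $t_i-t_j\le c$ with $c\in\Z$ one gets $\floor{t_i}-\floor{t_j}\le c+\fract(t_j)-\fract(t_i)<c+1$, hence $\floor{t_i}-\floor{t_j}\le c$; this settles every case except the one where $t_i$ is rounded up while $t_j$ is rounded down, where moreover $\fract(t_i)>f_k\ge\fract(t_j)$ forces $\fract(t_j)-\fract(t_i)<0$ and thus $\floor{t_i}-\floor{t_j}\le c-1$, so that $t^{(k)}_i-t^{(k)}_j=\floor{t_i}+1-\floor{t_j}\le c$.

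The step I expect to be the crux is exactly claim (i), and within it the mixed case just discussed. It goes through because of two facts working in tandem: coordinates with equal fractional part lie in the same layer and are therefore rounded the same way, so in a mixed pair the up-rounded coordinate always has the strictly larger fractional part; and the bound $c$ is an integer, which is what promotes the strict real inequality $\fract(t_j)-\fract(t_i)<0$ to the integer gain $\floor{t_i}-\floor{t_j}\le c-1$ needed to absorb the $+1$ coming from rounding up. Everything else — convexity of $P$, the arithmetic of the weights $\lambda_k$, and the non-mixed cases of (i) — is routine. I would also note that the family $\boldsymbol t^{(0)},\ldots,\boldsymbol t^{(r)}$ produced here gives an explicit convex decomposition of any run into at most $m+1$ integer runs, which is precisely the form in which this proposition is used later.
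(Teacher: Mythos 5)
Your proof is correct, but it takes a genuinely different route from the paper's. The paper intersects $\mathit{Runs}(e_1,\ldots,e_m)$ with a box $[0,M]^m$, observes that the resulting closed bounded polytope is the convex hull of its vertices, and then invokes total unimodularity of the constraint matrix (as the incidence matrix of a balanced signed graph with half edges, citing Zaslavsky) to conclude that those vertices are integral. You mention this TU route in passing but instead develop a self-contained ``fractional cascade'': rounding each coordinate up or down according to whether its fractional part exceeds the threshold $f_k$, checking that each rounded vector still satisfies every difference constraint (the only delicate case being $t_i$ rounded up while $t_j$ is rounded down, where the strict inequality $\mathrm{frac}(t_i)>\mathrm{frac}(t_j)$ together with integrality of the bound $c$ yields the needed slack $\lfloor t_i\rfloor-\lfloor t_j\rfloor\le c-1$), and verifying that the layer-width weights reconstitute $\boldsymbol{t}$ exactly. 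All of these steps check out, including the treatment of $t_0=0$, which correctly folds the one-variable guards and nonnegativity into difference constraints against a coordinate that is never rounded up. What your argument buys is that it is constructive, needs no external integrality theorem, works directly on the unbounded polyhedron without the box trick, and produces an explicit decomposition into at most $m+1$ integer runs; what the paper's argument buys is brevity. One small caveat on your closing remark: the explicit bound of $m+1$ points is not quite ``the form in which the proposition is used later''---the paper still applies Carath\'{e}odory's Theorem in Proposition~\ref{prop:carathedory} to cut the number of integer runs down to $d+1$, where $d=|\mathcal{Y}|$ is typically much smaller than $m$, so your decomposition does not replace that step.
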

\begin{proof}
  Fix a positive integer $M$.  From Proposition~\ref{prop:diff} it
  immediately follows that the set
  $\mathit{Runs}(e_1,\ldots,e_m) \cap [0,M]^m$ can be written as a
  conjunction of closed difference constraints~$A\boldsymbol{t} \leq \boldsymbol{b}$, where $A$ is an integer
  matrix, $\boldsymbol{t}$ the vector of time-stamps $t_1 \ldots t_m$,
  and $\boldsymbol{b}$ an integer vector.  Given this, it follows that
  ${\mathit{Runs}(e_1,\ldots,e_m)}\cap[0,M]^{m}$, being a closed and
  bounded polygon, is the convex hull of its vertices.  Moreover each
  vertex is an integer point since the matrix~$A$ here, being by
  Proposition~\ref{prop:diff} the incidence matrix of a balanced
  signed graph with half edges, is totally
  unimodular~\cite[Proposition 8A.5]{Zaslavsky82}.
\end{proof}

Proposition~\ref{prop:carathedory} shows that for Pareto reachability on
an MPTA $\mathcal{A}$ with $|\mathcal{Y}| = d$ observers, 
it suffices to look at $d+1$-simplices of integer runs.

\begin{proposition}
  For any run $\rho$ of $\mathcal{A}$ there exists a set of at most $d+1$
  integer-time runs $S$, all over the same sequence of edges as
  $\rho$, such that $\cost(\rho)$ lies in the convex hull of
  $\cost(S)$.
  \label{prop:carathedory}
\end{proposition}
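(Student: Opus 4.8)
The plan is to combine Proposition~\ref{prop:hull} with the linearity of the cost function and Carath\'eodory's theorem. First I would fix the sequence of edges $e_1,\ldots,e_m$ underlying $\rho$ and observe that the assignment sending a tuple of timestamps $\boldsymbol{t}=(t_1,\ldots,t_m)$ to the cost $\sum_{i=0}^{m-1}(t_{i+1}-t_i)R(\ell_i)$ of the corresponding run is a \emph{linear} map $C\colon \R^m \to \R^{\mathcal{Y}}$; here $t_0=0$ is fixed and, after collecting terms, the coefficient vectors of $C$ are the integer vectors $R(\ell_{i-1})-R(\ell_i)$ for $1\le i\le m-1$ together with $R(\ell_{m-1})$. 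Since $\mathcal{Y}$ has $d$ elements we may regard $C$ as a map $\R^m\to\R^d$. The timestamps of $\rho$ form a point $\boldsymbol{t}^{\rho}\in \mathit{Runs}(e_1,\ldots,e_m)$, and by construction $\cost(\rho)=C(\boldsymbol{t}^{\rho})$.

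Next I would invoke Proposition~\ref{prop:hull}: the set $\mathit{Runs}(e_1,\ldots,e_m)$ equals the convex hull of its set $I\subseteq\N^m$ of integer points. Because a linear map commutes with the convex-hull operation, we get $C(\mathit{Runs}(e_1,\ldots,e_m)) = C(\mathrm{conv}(I)) = \mathrm{conv}(C(I))$. By the definition of $\mathit{Runs}$, every element of $I$ is precisely an integer-time run of $\mathcal{A}$ over the edge sequence $e_1,\ldots,e_m$, so $C(I)$ is exactly the set $\{\cost(\sigma) : \sigma \text{ an integer-time run over } e_1,\ldots,e_m\}\subseteq\R^d$. Hence $\cost(\rho)=C(\boldsymbol{t}^{\rho})$ lies in $\mathrm{conv}(C(I))$. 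Finally, Carath\'eodory's theorem in $\R^d$ states that any point in the convex hull of a subset $T\subseteq\R^d$ already lies in the convex hull of at most $d+1$ points of $T$; applying this with $T=C(I)$ produces integer-time runs $\sigma_1,\ldots,\sigma_{d+1}$ over $e_1,\ldots,e_m$ with $\cost(\rho)\in\mathrm{conv}(\{\cost(\sigma_1),\ldots,\cost(\sigma_{d+1})\})$, and $S=\{\sigma_1,\ldots,\sigma_{d+1}\}$ is the required set.

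I do not expect a genuine obstacle here: the substantive content is entirely in Proposition~\ref{prop:hull}, which is already available. The only points worth stating carefully are that $\mathit{Runs}(e_1,\ldots,e_m)$ can be an unbounded polyhedron, so $I$ and $C(I)$ may be infinite sets — but Carath\'eodory's theorem applies to convex hulls of arbitrary subsets of $\R^d$, so this is harmless — and that $S$ consists of bona fide integer-time runs over exactly the same edge sequence as $\rho$, which is immediate since its members are integer points of $\mathit{Runs}(e_1,\ldots,e_m)$.
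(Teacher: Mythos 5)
Your proposal is correct and follows exactly the paper's own argument: apply Proposition~\ref{prop:hull} to place the timestamp vector of $\rho$ in the convex hull of the integer points of $\mathit{Runs}(e_1,\ldots,e_m)$, push this through the linear cost map, and invoke Carath\'eodory's theorem in $\R^d$ to extract at most $d+1$ integer-time runs. The extra care you take about $I$ possibly being infinite is a fair (harmless) refinement, but the route is the same.
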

\begin{proof}
  Let $\rho$ be a run of $\mathcal{A}$ over an edge-sequence $e_1,\ldots,e_m$
  with time stamps $t_0,\ldots,t_m$, given by $
 \rho = (\ell_0,\nu_0,t_0) \stackrel{e_1}{\longrightarrow}
          (\ell_1,\nu_1,t_1)\stackrel{e_2}{\longrightarrow} \ldots
          \stackrel{e_m}{\longrightarrow} (\ell_m,\nu_m,t_m)$.  By
  Proposition~\ref{prop:hull}, $(t_1,\ldots,t_m)$ lies in the convex
  hull of the set $I$ of integer points in
  ${\mathit{Runs}(e_1,\ldots,e_m)}$.

  Since the map
  $cost:{\mathit{Runs}(e_1,\ldots,e_m)} \rightarrow
  \mathbb{R}^d$ is linear we have that $\cost(\rho)$ lies in
  the convex hull of $\cost(I)$.  Moreover by
  Carath\'{e}odory's Theorem there exists a subset $S\subseteq I$ of
  cardinality at most $d+1$ such that $\cost(\rho)$ lies in
  the convex hull of $\cost(S)$.
\end{proof}

We now exploit
Proposition~\ref{prop:carathedory} by introducing the so-called
\emph{simplex automaton} $\mathcal{S(A)}$, which is a
monotone VASS obtained from a given MPTA $\mathcal{A}$.  The automaton~$\mathcal{S(A)}$ 
generates~$(d+1)$-tuples of integer-time runs of $\mathcal{A}$,
such that each run in the tuple executes the same sequence of edges in
$\mathcal{A}$ and the runs differ only in the times at which the edges are
taken.  The basic component underlying the definition of the simplex
automaton is the \emph{integer-time automaton} $\mathcal{Z(A)}$.  This
automaton is a monotone VASS that generates the integer-time runs
of $\mathcal{A}$, using its counters to keep track of the running cost for each
observer.

The definition of $\mathcal{Z(A)}$ is as follows.  Let
$\mathcal{A}=\tuple{L,\loc_0,L_f,\mathcal{X},\mathcal{Y},E,R}$ be an MPTA.  Let also
$M_{\mathcal{X}}\in \N$ be a positive constant greater than the maximum
clock constant in $\mathcal{A}$.  We define a
monotone VASS $\mathcal{Z(A)} = \tuple{d,Q,q_0,Q_f,E,\Delta}$, in which
the dimension $d = |\mathcal{Y}|$, the set
of states is $Q= L \times \{0,1,\ldots,M_{\mathcal{X}}\}^\mathcal{X}$, the initial
state is $q_0=(\ell_0,\boldsymbol{0})$, the set of accepting states is
$Q_f = L_f\times \{0,1,\ldots,M_{\mathcal{X}}\}^{\mathcal{X}}$, the set of labels is
$E$ (i.e., the set of edges of the MPTA), and the transition relation
$\Delta \subseteq Q \times \mathbb{N}^d \times E \times Q$ includes a
transition $((\ell,\nu), t\cdot R(\ell),e,(\ell',\nu'))$ for every
$t\in \{0,1,\ldots,M_{\mathcal{X}}\}$ and edge $e=(\ell,\varphi,\lambda,\ell')$ in
$\mathcal{A}$ s.t. $\nu\oplus t \models \varphi$ and
$\nu'=(\nu\oplus t)[\lambda\leftarrow 0]$.  Here
$(\nu\oplus t)(x)=\min(\nu(x)+t,M_{\mathcal{X}})$ for all $x\in \mathcal{X}$. We then have:

\begin{proposition}
  Given a valuation $\gamma \in \RP^{\mathcal{Y}}$, there exists an integer-time accepting run
  $\rho$ of $\mathcal{A}$ with $\cost(\rho)=\gamma$
  if and only if $\gamma \in \mathrm{Reach}_{\mathcal{Z(A)}}$.
\label{prop:integer}
\end{proposition}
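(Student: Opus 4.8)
The plan is to set up an exact correspondence between integer-time accepting runs of $\mathcal{A}$ and accepting paths of the monotone VASS $\mathcal{Z(A)}$, with the counters of $\mathcal{Z(A)}$ recording the accumulated cost. Everything rests on a single observation: truncating a clock valuation at $M_{\mathcal{X}}$ affects neither which guards of $\mathcal{A}$ it satisfies nor, up to truncation, its evolution under delays and resets. Precisely, for $\tilde\val \colon \clocks \to \RP$ let $\overline{\tilde\val}(x) = \min(\tilde\val(x), M_{\mathcal{X}})$. Since every clock constant occurring in $\mathcal{A}$ is strictly below $M_{\mathcal{X}}$, a routine case split (on whether the clock value involved is below $M_{\mathcal{X}}$ or not) yields: (i) $\tilde\val \models \guard$ iff $\overline{\tilde\val} \models \guard$ for every clock guard $\guard$ of $\mathcal{A}$; (ii) $\overline{\tilde\val + t} = \overline{\tilde\val} \oplus t$ for all $t \in \RP$, where $\oplus$ is the truncated-delay operation used in the definition of $\mathcal{Z(A)}$; and (iii) $\overline{\tilde\val[\reset \leftarrow 0]} = \overline{\tilde\val}[\reset \leftarrow 0]$ for all $\reset \subseteq \clocks$.

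For the ``if'' direction, suppose $\gamma \in \mathrm{Reach}_{\mathcal{Z(A)}}$. By definition of $\mathrm{Reach}_{\mathcal{Z(A)}}$ there is a path $(\loc_0, \val_0) \to \cdots \to (\loc_m, \val_m)$ in $\mathcal{Z(A)}$ with $\val_0 = \boldsymbol{0}$ and $\loc_m \in L_f$, whose $i$-th step uses an edge $e_i = (\loc_{i-1}, \guard_i, \reset_i, \loc_i)$ of $\mathcal{A}$ and a delay $t_i \in \{0, \ldots, M_{\mathcal{X}}\}$ with $\val_{i-1} \oplus t_i \models \guard_i$, $\val_i = (\val_{i-1} \oplus t_i)[\reset_i \leftarrow 0]$, and $\sum_i t_i R(\loc_{i-1}) = \gamma$. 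I would read the $t_i$ back as genuine delays: set $\tilde\val_0 = \boldsymbol{0}$, $\tilde\val_i = (\tilde\val_{i-1} + t_i)[\reset_i \leftarrow 0]$ and time stamps $s_i = t_1 + \cdots + t_i$. An induction using (ii) and (iii) gives $\overline{\tilde\val_i} = \val_i$ for every $i$, so (i) yields $\tilde\val_{i-1} + t_i \models \guard_i$; hence $\rho = (\loc_0, \tilde\val_0, 0) \to \cdots \to (\loc_m, \tilde\val_m, s_m)$ is an integer-time accepting run of $\mathcal{A}$ with $\cost(\rho) = \sum_i t_i R(\loc_{i-1}) = \gamma$.

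For the ``only if'' direction, let $\rho = (\loc_0, \tilde\val_0, s_0) \to \cdots \to (\loc_m, \tilde\val_m, s_m)$ be an integer-time accepting run with $\cost(\rho) = \gamma$, and put $t_i = s_i - s_{i-1} \in \N$. Assuming for now that every $t_i \le M_{\mathcal{X}}$, each tuple $\big( (\loc_{i-1}, \overline{\tilde\val_{i-1}}), \, t_i R(\loc_{i-1}), \, e_i, \, (\loc_i, \overline{\tilde\val_i}) \big)$ is a transition of $\mathcal{Z(A)}$: from $\tilde\val_{i-1} + t_i \models \guard_i$ we get $\overline{\tilde\val_{i-1}} \oplus t_i \models \guard_i$ by (i) and (ii), and $\overline{\tilde\val_i} = (\overline{\tilde\val_{i-1}} \oplus t_i)[\reset_i \leftarrow 0]$ by (ii) and (iii). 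Concatenating these transitions gives a path of $\mathcal{Z(A)}$ from $(\loc_0, \boldsymbol{0})$ to $(\loc_m, \overline{\tilde\val_m}) \in Q_f$ whose cumulative counter effect is $\sum_i t_i R(\loc_{i-1}) = \cost(\rho) = \gamma$, so $\gamma \in \mathrm{Reach}_{\mathcal{Z(A)}}$.

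Most of the work is the truncation bookkeeping: establishing (i)--(iii) and threading the invariant $\overline{\tilde\val_i} = \val_i$ through the inductions. The one genuinely delicate point is the standing assumption in the ``only if'' direction that all delays are at most $M_{\mathcal{X}}$ --- a single transition of $\mathcal{Z(A)}$ advances time by at most $M_{\mathcal{X}}$, whereas a run of $\mathcal{A}$ may idle arbitrarily long in a location. This is the step I would concentrate on: a delay exceeding $M_{\mathcal{X}}$ can only precede an edge whose guard imposes no upper bound on any clock, and one must argue that such a delay may be redistributed into pieces of size at most $M_{\mathcal{X}}$ (each leaving the truncated valuation unchanged) without altering the reached location or the accumulated cost --- immediate when the model carries clock invariants, and otherwise via a short normalisation of the run.
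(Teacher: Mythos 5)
Your truncation lemmas (i)--(iii) and the two inductions are exactly the right skeleton; the paper states this proposition without proof, and what you have written is surely the intended argument. The ``if'' direction is complete: every delay read off a path of $\mathcal{Z(A)}$ is at most $M_{\mathcal{X}}$ and hence realisable verbatim in $\mathcal{A}$. You have also correctly isolated the one non-trivial point, namely that in the ``only if'' direction a run of $\mathcal{A}$ may delay longer than $M_{\mathcal{X}}$ before an edge.

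However, the way you propose to close that point does not work, and this is a genuine gap. You suggest ``redistributing'' a delay $t_i>M_{\mathcal{X}}$ into pieces of size at most $M_{\mathcal{X}}$, but $\mathcal{Z(A)}$ as defined has no delay-only transitions: every transition of $\mathcal{Z(A)}$ is labelled by an edge of $\mathcal{A}$ and consumes that edge. A piece of delay therefore cannot be taken in isolation; splitting $t_i$ forces you to interleave additional edges of $\mathcal{A}$, which in general changes the location sequence or is simply impossible. Concretely, take a single clock $x$, one observer with rate $1$ in $\ell_0$, a unique edge $\ell_0\to\ell_1$ with guard $x\geq 3$, and $\ell_1\in L_f$ a dead end, so that $M_{\mathcal{X}}=4$. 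The run that waits $10$ time units and then takes the edge is an integer-time accepting run of cost $10$, yet every accepting path of $\mathcal{Z(A)}$ uses exactly one transition, with counter effect $t\in\{3,4\}$; the value $10$ is not in $\mathrm{Reach}_{\mathcal{Z(A)}}$. (Your appeal to clock invariants is also moot, since this model has none.) The clean repair is to amend the construction: add to $\Delta$, for every location $\ell$ and truncated valuation $\nu$, a unit-idle self-loop $((\ell,\nu),R(\ell),\cdot,(\ell,\nu\oplus 1))$. With these in place your normalisation does go through --- replace a delay $t_i$ by $t_i$ idle loops followed by the edge transition with $t=0$, and your invariant $\overline{\tilde\nu_i}=\nu_i$ is preserved --- but this is a change to the object being reasoned about, not a step you can absorb into the proof of the proposition as stated.
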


The simplex automaton $\mathcal{S(A)}$ is built by taking $d+1$
copies of $\mathcal{Z(A)}=\tuple{d,Q,q_0,Q_f,E,\Delta}$
that synchronize on transition labels.  Formally,
$\mathcal{S(A)} = \tuple{d(d+1),Q^{d+1},\boldsymbol{q}_0,Q_f^{d(d+1)},E,\boldsymbol{\Delta}}$, where 
$\boldsymbol{q}_0=(q_0,\ldots,q_0)$ and
$\boldsymbol{\Delta} \subseteq Q^{d+1} \times \Z^{d(d+1)} \times E \times Q^{d+1}$ comprises those 
 tuples $((q_1,\ldots,q_{d+1}),(\boldsymbol{v}_1,\ldots,\boldsymbol{v}_{d+1}),e,(q_1',\ldots,q_{d+1}'))$ 
	s.t. $(q_i,\boldsymbol{v}_i,e,q_i') \in \Delta$ for all $i\in\{1,\ldots,d+1\}$.

From Propositions~\ref{prop:carathedory} and~\ref{prop:integer} we
have:
\begin{proposition}
  Given $\gamma \in \RP^{\mathcal{Y}}$, there exists an accepting run
  $\rho$ of $\mathcal{A}$ with $\mathrm{cost}(\rho)=\gamma$ if and only if there
  exists
  $(\gamma_1,\ldots,\gamma_{d+1}) \in
  \mathrm{Reach}_{\mathcal{S(A)}}$ with
 $\gamma$  in the convex hull of $\{\gamma_1,\ldots,\gamma_{d+1}\}$.
\label{prop:convex-hull}
\end{proposition}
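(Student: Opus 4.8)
The plan is to derive both implications from Propositions~\ref{prop:carathedory} and~\ref{prop:integer}, together with the elementary observation that, since $\mathcal{S(A)}$ is the $(d+1)$-fold product of $\mathcal{Z(A)}$ synchronised on edge labels, a run of $\mathcal{S(A)}$ reaching $(\gamma_1,\ldots,\gamma_{d+1})$ is precisely a $(d+1)$-tuple of integer-time runs of $\mathcal{A}$ that all traverse one common edge sequence $e_1,\ldots,e_m$, end in accepting locations, and along which the $i$-th copy accumulates cost $\gamma_i$. I would make this correspondence explicit as a first step, by unwinding the definitions of $\mathcal{Z(A)}$, $\mathcal{S(A)}$, and of $\mathrm{Reach}_{\mathcal{Z(A)}}$ and $\mathrm{Reach}_{\mathcal{S(A)}}$.

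For the forward direction, given an accepting run $\rho$ of $\mathcal{A}$ with $\cost(\rho)=\gamma$ over an edge sequence $e_1,\ldots,e_m$, I apply Proposition~\ref{prop:carathedory} to obtain a set $S$ of at most $d+1$ integer-time runs over the same edge sequence with $\gamma$ in the convex hull of $\cost(S)$; padding $S$ with repeated runs (which only enlarges the convex hull) gives integer-time accepting runs $\rho_1,\ldots,\rho_{d+1}$ over $e_1,\ldots,e_m$ with $\gamma$ in the convex hull of $\{\gamma_1,\ldots,\gamma_{d+1}\}$, where $\gamma_i:=\cost(\rho_i)$. By (the proof of) Proposition~\ref{prop:integer} each $\rho_i$ is mirrored by a run of $\mathcal{Z(A)}$ from $q_0$ to an accepting state with label sequence $e_1\cdots e_m$ accumulating $\gamma_i$; since these $d+1$ label sequences coincide they fire in lockstep in $\mathcal{S(A)}$, so $(\gamma_1,\ldots,\gamma_{d+1})\in\mathrm{Reach}_{\mathcal{S(A)}}$ and $\gamma$ lies in the convex hull of the $\gamma_i$.

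For the reverse direction, a run of $\mathcal{S(A)}$ witnessing $(\gamma_1,\ldots,\gamma_{d+1})$ decomposes, by the synchronisation, into $d+1$ integer-time accepting runs $\rho_1,\ldots,\rho_{d+1}$ of $\mathcal{A}$ over one common edge sequence $e_1,\ldots,e_m$ with $\cost(\rho_i)=\gamma_i$; let $\boldsymbol{t}^{(i)}\in\mathit{Runs}(e_1,\ldots,e_m)$ be the corresponding timestamp vectors. Writing $\gamma=\sum_{i=1}^{d+1}\lambda_i\gamma_i$ with $\lambda_i\ge 0$ and $\sum_i\lambda_i=1$, I use convexity of $\mathit{Runs}(e_1,\ldots,e_m)$ (immediate from Proposition~\ref{prop:hull}, or directly from Proposition~\ref{prop:diff}) to conclude that $\boldsymbol{t}:=\sum_i\lambda_i\boldsymbol{t}^{(i)}$ is itself a run of $\mathcal{A}$ over $e_1,\ldots,e_m$; it is accepting since it ends in the same final location $\ell_m\in L_f$ as each $\rho_i$, and linearity of $\cost$ gives $\cost(\boldsymbol{t})=\sum_i\lambda_i\gamma_i=\gamma$, as required.

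The only delicate point — bookkeeping rather than a genuine obstacle — is making the equivalence ``run of $\mathcal{S(A)}$'' versus ``synchronised $(d+1)$-tuple of integer-time runs of $\mathcal{A}$'' watertight: in particular, that the product's acceptance condition forces all $d+1$ copies to terminate in $L_f$, and that padding with repeated runs is harmless when Proposition~\ref{prop:carathedory} returns fewer than $d+1$ runs. With that correspondence in hand, both directions follow immediately from the cited propositions.
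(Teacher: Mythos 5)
Your proof is correct and is essentially the derivation the paper intends: the paper gives no explicit argument, merely asserting that the proposition follows from Propositions~\ref{prop:carathedory} and~\ref{prop:integer}, and your expansion (forward direction via Carath\'{e}odory plus padding and lockstep synchronisation; reverse direction via convexity of $\mathit{Runs}(e_1,\ldots,e_m)$ and linearity of $\cost$) is exactly the intended filling-in. You also rightly observe that the reverse direction silently relies on Proposition~\ref{prop:diff}/\ref{prop:hull} for convexity of the run polytope, an ingredient the paper's one-line justification omits.
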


We now introduce the following
``master system'' of bilinear inequalities
that expresses whether $\gamma \preccurlyeq \cost(\rho)$ for some
accepting run $\rho$ of $\mathcal{A}$. 
\begin{equation}
\begin{array}{rclrcl}
\gamma &\preccurlyeq & \lambda_1\gamma_1 + \cdots + \lambda_{d+1}\gamma_{d+1} \qquad\qquad &
1 & = & \lambda_1 + \cdots + \lambda_{d+1} \\
\multicolumn{3}{l}{(\gamma_1,\ldots,\gamma_{d+1}) \in \mathrm{Reach}_{\mathcal{S(A)}}} & 
0 & \leq & \lambda_1,\ldots,\lambda_{d+1}
\end{array}
\label{eq:master-system}
\end{equation}

The system has real variables $\lambda_1,\ldots,\lambda_{d+1} \in \RP^{\mathcal{Y}}$
and integer variables $\gamma_1,\ldots,\gamma_{d+1} \in \mathbb{N}^{\mathcal{Y}}$. 
The key property of the master system  
is stated in the following Proposition~\ref{prop:master}, 
which follows immediately from Proposition~\ref{prop:convex-hull}.

\begin{proposition}
  Given a valuation $\gamma\in\RP^{\mathcal{Y}}$ there is an accepting
  run $\rho$ of $\mathcal{A}$ such that~$\gamma\preccurlyeq\cost(\rho)$ if and
  only if the system of inequalities (\ref{eq:master-system}) has a
  solution.
\label{prop:master}
\end{proposition}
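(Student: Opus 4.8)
The plan is to derive Proposition~\ref{prop:master} as an essentially immediate corollary of Proposition~\ref{prop:convex-hull}, by unwinding the definitions of the domination ordering $\preccurlyeq$ and of the master system~(\ref{eq:master-system}), and then citing the linearity of $\cost$ together with Carath\'eodory's theorem exactly as packaged in the earlier propositions.

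First I would establish the ``only if'' direction. Suppose $\rho$ is an accepting run of $\mathcal{A}$ with $\gamma \preccurlyeq \cost(\rho)$. Set $\gamma^* \defeq \cost(\rho)$. By Proposition~\ref{prop:convex-hull} there exists a tuple $(\gamma_1,\ldots,\gamma_{d+1}) \in \mathrm{Reach}_{\mathcal{S(A)}}$ with $\gamma^*$ in the convex hull of $\{\gamma_1,\ldots,\gamma_{d+1}\}$; that is, there are reals $\lambda_1,\ldots,\lambda_{d+1} \geq 0$ summing to $1$ with $\gamma^* = \sum_{i=1}^{d+1}\lambda_i\gamma_i$. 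Since $\gamma \preccurlyeq \gamma^* = \sum_i\lambda_i\gamma_i$, this very tuple together with these multipliers is a solution of~(\ref{eq:master-system}).

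Next I would establish the ``if'' direction. Suppose~(\ref{eq:master-system}) has a solution, i.e.\ there are $(\gamma_1,\ldots,\gamma_{d+1})\in\mathrm{Reach}_{\mathcal{S(A)}}$ and reals $\lambda_i \geq 0$ with $\sum_i\lambda_i=1$ and $\gamma \preccurlyeq \sum_i\lambda_i\gamma_i$. Put $\gamma^* \defeq \sum_{i=1}^{d+1}\lambda_i\gamma_i$; then $\gamma^*$ lies in the convex hull of $\{\gamma_1,\ldots,\gamma_{d+1}\}$, so by the ``if'' direction of Proposition~\ref{prop:convex-hull} there is an accepting run $\rho$ of $\mathcal{A}$ with $\cost(\rho)=\gamma^*$. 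Hence $\gamma \preccurlyeq \cost(\rho)$, as required. (One small remark worth including: the components $\gamma^*(y)$ need not be integers, but this is no obstacle, since Proposition~\ref{prop:convex-hull} is stated for arbitrary $\gamma\in\RP^{\mathcal{Y}}$, the convex hull being taken over the reals.)

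Honestly, there is no real obstacle here: all the substance has already been extracted into Propositions~\ref{prop:carathedory}, \ref{prop:integer}, and~\ref{prop:convex-hull}, and the master system is by construction just a syntactic transcription of ``$\gamma$ is dominated by a point of the convex hull of a reachable $(d{+}1)$-tuple.'' The only thing to be careful about is keeping the roles of the two inequalities in~(\ref{eq:master-system}) straight --- the constraint $1 = \lambda_1+\cdots+\lambda_{d+1}$ with $\lambda_i \geq 0$ is exactly what makes $\sum_i\lambda_i\gamma_i$ a convex combination --- and noting that $\preccurlyeq$ is transitive, so $\gamma \preccurlyeq \sum_i\lambda_i\gamma_i = \cost(\rho)$ is literally the conclusion we want with no further manipulation of the cost and reward coordinates.
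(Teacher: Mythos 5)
Your proof is correct and matches the paper exactly: the paper simply states that Proposition~\ref{prop:master} ``follows immediately from Proposition~\ref{prop:convex-hull},'' and your two directions are precisely the unwinding of that claim, applying Proposition~\ref{prop:convex-hull} to the point $\gamma^*=\sum_i\lambda_i\gamma_i$ (respectively $\cost(\rho)$). Your parenthetical remark that $\gamma^*$ need not be integral is a sensible sanity check and causes no issue, as you note.
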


Given Proposition~\ref{prop:master}, the results of
Section~\ref{sec:undecidability} imply that satisfiability of the
master system~(\ref{eq:master-system}) is not decidable in general.
In the rest of the paper we pursue different approaches to showing
decidability of restrictions and variants of the Pareto Domination Problem by
solving appropriately restricted versions of (\ref{eq:master-system}).

\section{Pareto Domination Problem with Pure Constraints} \label{sec:pure-const}
In this section we show that the Pareto Domination Problem is decidable in polynomial space
for the class of MPTA in which the observers are all costs. We prove this complexity upper
bound by exhibiting for such an MPTA~$\mathcal{A}$ and
target~$\gamma\in\RP^{\mathcal{Y}}$ a positive integer $M$, whose
bit-length is polynomial in the size of $\mathcal{A}$ and $\gamma$,
such that there exists a run $\rho$ of $\mathcal{A}$ reaching the
target location with $\gamma \preccurlyeq \cost(\rho)$ iff there
exists such a run of granularity $\frac{1}{M_1}$ for some $M_1\leq M$.
To show this we rewrite the bilinear system of inequalities
(\ref{eq:master-system}) into an equisatisfiable disjunction of linear
systems of inequalities. We thus obtain a bound on the bit-length of
any satisfying assignment of~(\ref{eq:master-system}) from which we
obtain the above granularity bound. A
similar bound in case of all reward variables is obtained in~\ref{append-all-reward}.


Consider an MPTA
$\mathcal{A}=\tuple{L,\ell_0,L_f,\mathcal{X},\mathcal{Y},E,R}$. Recall
that the reachability set $\mathrm{Reach}_{\mathcal{S(A)}}$ can be
written as a union of linear sets $S(\boldsymbol{v}_i,P_i)$, $i\in I$.
More precisely, let $M_\mathcal{Y}$ be the maximum rate occurring in
the rate function $R$ of the given MPTA $\mathcal{A}$.  We then have
the following, see Appendix~\ref{append-prop-SL-decomp} for the proof.
\begin{proposition}
  The set $\mathrm{Reach}_{\mathcal{S(A)}}$ can be written as a
  finite union of linear sets
  $\bigcup_{i\in I} S(\boldsymbol{v}_i,P_i)$ such that for each $i\in I$
  the base vectors $\boldsymbol{v}_i$ and period vectors in $P_i$ have
  entries of magnitude bounded by 
  $\mathit{poly}(d,|L|,M_{\mathcal{Y}},M_{\mathcal{X}})^{d(d+1)|\mathcal{X}|}$.
  \label{prop:SL-decomp}
\end{proposition}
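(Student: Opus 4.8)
The plan is to invoke Proposition~\ref{prop:LIN} (the semi-linear decomposition of the reachability set of a monotone VASS) applied to the simplex automaton $\mathcal{S(A)}$, and then simply bound the parameters that appear in that proposition in terms of the data of $\mathcal{A}$. Recall from Section~\ref{sec:simplex-automaton} that $\mathcal{S(A)} = \tuple{d(d+1),Q^{d+1},\boldsymbol{q}_0,Q_f^{d(d+1)},E,\boldsymbol{\Delta}}$ is a monotone VASS, so Proposition~\ref{prop:LIN} tells us that $\mathrm{Reach}_{\mathcal{S(A)}}$ is a finite union of $\mathbb{N}$-linear sets $S(\boldsymbol{v}_i,P_i)$ whose base and period vectors have entries bounded in absolute value by $\mathit{poly}(n,|Q'|,M')^{n}$, where here $n = d(d+1)$ is the dimension of $\mathcal{S(A)}$, $|Q'| = |Q|^{d+1}$ is the number of states, and $M'$ is the maximum absolute value of an entry of a vector occurring in $\boldsymbol{\Delta}$.

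The main work is therefore a direct parameter computation. First I would observe that the state space $Q = L \times \{0,1,\ldots,M_{\mathcal{X}}\}^{\mathcal{X}}$ of $\mathcal{Z(A)}$ has size $|L| \cdot (M_{\mathcal{X}}+1)^{|\mathcal{X}|}$, so $|Q'| = |Q|^{d+1} = \big(|L|(M_{\mathcal{X}}+1)^{|\mathcal{X}|}\big)^{d+1}$; this is $\mathit{poly}(|L|,M_{\mathcal{X}})^{(d+1)|\mathcal{X}|}$ up to absorbing constants. Second, the vectors occurring in $\boldsymbol{\Delta}$ are tuples $(\boldsymbol{v}_1,\ldots,\boldsymbol{v}_{d+1})$ with each $\boldsymbol{v}_i = t\cdot R(\ell)$ for some $t\in\{0,\ldots,M_{\mathcal{X}}\}$ and some location $\ell$; since the maximal rate is $M_{\mathcal{Y}}$, every entry is at most $M_{\mathcal{X}}M_{\mathcal{Y}}$, so $M' \le M_{\mathcal{X}}M_{\mathcal{Y}}$. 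Plugging these into the bound $\mathit{poly}(n,|Q'|,M')^{n}$ with $n = d(d+1)$ gives a bound of the form $\mathit{poly}\big(d, |L|, M_{\mathcal{X}}, M_{\mathcal{Y}}\big)^{d(d+1)\cdot(d+1)|\mathcal{X}|}$ up to further polynomial factors in the exponent; after collapsing the polynomial factors in the base and accounting for the fact that the exponent $(d+1)|\mathcal{X}|$ already sits inside a polynomial base, this is absorbed into $\mathit{poly}(d,|L|,M_{\mathcal{Y}},M_{\mathcal{X}})^{d(d+1)|\mathcal{X}|}$, which is the claimed form.

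The only mild subtlety — and the step I would be most careful about — is the bookkeeping in the exponent: the raw application of Proposition~\ref{prop:LIN} produces an exponent of $n = d(d+1)$, but the base $|Q'|$ already carries an exponent of $(d+1)|\mathcal{X}|$, so the true exponent of the underlying polynomial data is $d(d+1)^2|\mathcal{X}|$ rather than $d(d+1)|\mathcal{X}|$. This is reconciled by noting that a polynomial raised to a polynomially larger exponent is still a polynomial raised to the smaller exponent, after enlarging the polynomial in the base; since the statement of the proposition only asserts a bound of the shape $\mathit{poly}(\cdots)^{d(d+1)|\mathcal{X}|}$ with an unspecified polynomial, this rewriting is harmless. (The remainder of the argument — that each $S(\boldsymbol{v}_i,P_i)$ is an $\mathbb{N}$-linear set in the required sense, and that their union is exactly $\mathrm{Reach}_{\mathcal{S(A)}}$ — is immediate from Proposition~\ref{prop:LIN} itself and requires no new argument.) I would relegate the detailed constant-chasing to the appendix, as the paper indicates.
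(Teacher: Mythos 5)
Your proof follows exactly the same route as the paper's: count the states of $\mathcal{Z(A)}$ and hence of $\mathcal{S(A)}$, bound the entries of the transition vectors of $\mathcal{S(A)}$ by $M_{\mathcal{X}}M_{\mathcal{Y}}$, and plug these parameters into Proposition~\ref{prop:zvass} with $n=d(d+1)$; the paper's appendix proof is precisely this computation and nothing more.

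One caveat on your reconciliation of the exponent. You correctly observe that the raw bound comes out as $\mathit{poly}(\cdots)^{d(d+1)^2|\mathcal{X}|}$ rather than $\mathit{poly}(\cdots)^{d(d+1)|\mathcal{X}|}$ (the paper silently elides this), but your proposed fix --- ``a polynomial raised to a polynomially larger exponent is still a polynomial raised to the smaller exponent, after enlarging the polynomial in the base'' --- is false as stated: absorbing the extra factor of $d+1$ into the base would require $P^{d+1}$ to be a polynomial in $d$, which it is not. The discrepancy is nevertheless harmless for every use the paper makes of the proposition, since only the bit-length of the bound matters and $\log\bigl(\mathit{poly}(d,|L|,M_{\mathcal{Y}},M_{\mathcal{X}})^{d(d+1)^2|\mathcal{X}|}\bigr)$ is still polynomial in the input size; the cleanest repair is simply to state the exponent as $d(d+1)^2|\mathcal{X}|$ (or $\mathit{poly}(d)\cdot|\mathcal{X}|$) rather than to argue it can be collapsed.
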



Suppose that the set of observers $\mathcal{Y}$ with $|\mathcal{Y}|=d$ is comprised exclusively of
cost variables.  We will apply Proposition~\ref{prop:SL-decomp} to analyse
the Pareto Domination Problem.  The key observation is that in this case we
can equivalently rewrite the bilinear system (\ref{eq:master-system})
as a disjunction of linear systems of inequalities.

As a first step we can rewrite the constraint
$(\gamma_1,\ldots,\gamma_{d+1}) \in \mathrm{Reach}_{\mathcal{S(A)}}$
in (\ref{eq:master-system}) as a disjunction of constraints
$(\gamma_1,\ldots,\gamma_{d+1}) \in S(\boldsymbol{v}_i,P_i)$, for
$i\in I$.  But since the period vectors in~$P_i$ are non-negative we
can further observe that in order to satisfy the upper bound
constraints on cost variables, the optimal choice of
$(\gamma_1,\ldots,\gamma_{d+1}) \in S(\boldsymbol{v}_i,P_i)$ is the
base vector $\boldsymbol{v}_i$.  Thus we can treat
$\gamma_1,\ldots,\gamma_{d+1}$ as a constant in
(\ref{eq:master-system}).

Thus we rewrite (\ref{eq:master-system}) as a finite disjunction of
systems of linear inequalities---one such system for each $i\in I$.
For a given $i\in I$ let
$\boldsymbol{v}_i = (\gamma^{(i)}_1,\ldots,\gamma^{(i)}_{d+1})$ be the base vector
of the linear set $S(\boldsymbol{v}_i,P_i)$.  The corresponding system
of inequalities specialising (\ref{eq:master-system}) is
\begin{equation}
\gamma \preccurlyeq \lambda_1\gamma^{(i)}_1 + \ldots + \lambda_{d+1}\gamma^{(i)}_{d+1},\; 1  =  \lambda_1 + \cdots + \lambda_{d+1},\;  0  \leq  \lambda_1,\ldots,\lambda_{d+1}
\label{eq:LP}
\end{equation}

Recall that if a set of linear inequalities
$A\boldsymbol{x}\geq \boldsymbol{a}$, $B\boldsymbol{x}>\boldsymbol{b}$
is feasible then it is satisfied by some
$\boldsymbol{x}\in\mathbb{Q}^n$ of bit-length $\mathit{poly}(n,b)$,
where $b$ is the total bit-length of the entries
of~$A$,~$B$,~$\boldsymbol{a}$, and $\boldsymbol{b}$.  Applying this
bound and Proposition~\ref{prop:SL-decomp} we see that a solution of
(\ref{eq:LP}) can be written in the form
$\lambda_1=\frac{p_1}{g},\ldots,\lambda_{d+1}=\frac{p_{d+1}}{g}$ for
integers $p_1,\ldots,p_{d+1},g$ of bit-length at most
$\mathit{poly}(d,|\mathcal{X}|,|L|,\log(M_{\mathcal{Y}}),\log(M_{\mathcal{X}}))$.
This entails that the cost vector
$\lambda_1\gamma^{(i)}_1 + \ldots + \lambda_{d+1}\gamma^{(i)}_{d+1}$
arises from a run of $\mathcal{A}$ with granularity $\frac{1}{g}$,
thus indirectly addressing the open problem stated in~\cite[Section 8]{Larsen08}
 on the granularity of optimal runs in MPTA.

Together with Proposition~\ref{prop:SL-decomp}, this yields
PSPACE-membership for the Pareto Domination Problem. As reachability in timed automata
is already PSPACE-hard~\cite{AlurDill94} we have:
\begin{theorem}\label{thm:pure-constraints} The Pareto Domination Problem with pure constraints  is PSPACE-complete.
\end{theorem}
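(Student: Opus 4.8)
The plan is to establish the two directions of the PSPACE claim separately, with the bulk of the work being membership. For PSPACE-hardness I would simply invoke that reachability in ordinary timed automata is already PSPACE-hard~\cite{AlurDill94}: an MPTA with an empty set of observers (or with a single cost observer and target $0$) has an accepting run dominating the target exactly when the underlying timed automaton can reach an accepting location, so the pure Pareto Domination Problem inherits this lower bound. The symmetric case with all reward variables is handled analogously, using the variant developed in Appendix~\ref{append-all-reward}.

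For PSPACE-membership I would proceed through the chain of reductions already set up in the section. First, by Proposition~\ref{prop:master}, the Pareto Domination Problem with target $\gamma$ reduces to satisfiability of the master system~(\ref{eq:master-system}). Second, using Proposition~\ref{prop:SL-decomp} I write $\mathrm{Reach}_{\mathcal{S(A)}} = \bigcup_{i\in I} S(\boldsymbol{v}_i,P_i)$ with base and period vectors of magnitude bounded by $\mathit{poly}(d,|L|,M_{\mathcal{Y}},M_{\mathcal{X}})^{d(d+1)|\mathcal{X}|}$, so each entry has bit-length polynomial in the size of $\mathcal{A}$. Third, the key structural observation: since all observers are costs, the constraint $\gamma \preccurlyeq \sum_j \lambda_j \gamma_j$ consists of upper bounds on each coordinate of $\sum_j \lambda_j \gamma_j$, and since the period vectors of each $S(\boldsymbol{v}_i,P_i)$ are non-negative, replacing each $\gamma_j$ by the corresponding coordinate of the base vector $\boldsymbol{v}_i$ can only help. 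Hence~(\ref{eq:master-system}) is satisfiable iff for some $i\in I$ the linear system~(\ref{eq:LP}) is satisfiable in the real variables $\lambda_1,\ldots,\lambda_{d+1}$ (with $\gamma^{(i)}_j$ now constants).

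It then remains to bound the complexity of checking satisfiability of this disjunction. Each disjunct~(\ref{eq:LP}) is a linear program over $O(d)$ real variables with coefficients of polynomial bit-length, so by the standard bound on vertex solutions of polyhedra its feasibility is decidable in polynomial time, and if feasible it has a solution $\lambda_j = p_j/g$ with $p_j, g$ of polynomial bit-length. The index set $I$ may be exponentially large, but a PSPACE (indeed NPSPACE = PSPACE via Savitch, or simply a polynomial-space enumeration) procedure can iterate over the candidate linear sets — each described by polynomially many polynomially-sized vectors, which the argument of Proposition~\ref{prop:SL-decomp} exhibits explicitly — and for each test feasibility of~(\ref{eq:LP}) in polynomial time. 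This places the problem in PSPACE. As a by-product, a satisfying assignment yields a run of $\mathcal{A}$ of granularity $\frac{1}{g}$ with $g$ of polynomial bit-length, which is the granularity bound claimed at the start of the section. Combining membership with hardness gives PSPACE-completeness.

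The main obstacle I anticipate is the third step — making precise why clamping $(\gamma_1,\ldots,\gamma_{d+1})$ to the base vector is without loss of generality. This requires that the partition of the linear-set decomposition respects coordinates cleanly: one must argue that for a fixed disjunct $i$, if any $(\gamma_1,\ldots,\gamma_{d+1}) \in S(\boldsymbol{v}_i,P_i)$ together with some $\lambda$ satisfies~(\ref{eq:master-system}), then so does $\boldsymbol{v}_i$ with the same $\lambda$, which follows because each coordinate of $\sum_j \lambda_j \gamma_j$ is a non-negative combination (the $\lambda_j \geq 0$) of non-negative period contributions added onto $\sum_j \lambda_j \gamma^{(i)}_j$, hence can only increase a quantity that is required to stay below $\gamma$. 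Spelling this monotonicity out coordinate-by-coordinate, and confirming it genuinely fails in the mixed case (where lower bounds on rewards would be violated by clamping), is the one place where care is needed; everything else is an assembly of cited bounds.
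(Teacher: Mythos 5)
Your proposal follows essentially the same route as the paper: reduce via Proposition~\ref{prop:master} to the master system, decompose $\mathrm{Reach}_{\mathcal{S(A)}}$ using Proposition~\ref{prop:SL-decomp}, clamp to the base vector in the all-cost case to obtain the linear systems~(\ref{eq:LP}), bound the bit-length of solutions to get the granularity bound and PSPACE membership, and inherit hardness from timed-automaton reachability. The monotonicity argument you flag as the delicate step is exactly the paper's ``key observation,'' and your deferral of the all-reward case to Appendix~\ref{append-all-reward} matches the paper's own treatment.
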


\section{Pareto Domination Problem with Three Mixed Observers} \label{sec:three-cost-variables}
In this section we consider the Pareto Domination Problem for MPTA
with three observers.  In the case of three cost variables or
three reward variables the results of Section~\ref{sec:pure-const}
apply.  Below we show decidability for two cost variables and one
reward variable. The similar case of two reward variables and one cost
variable is handled in Appendix~\ref{append-three-var}.

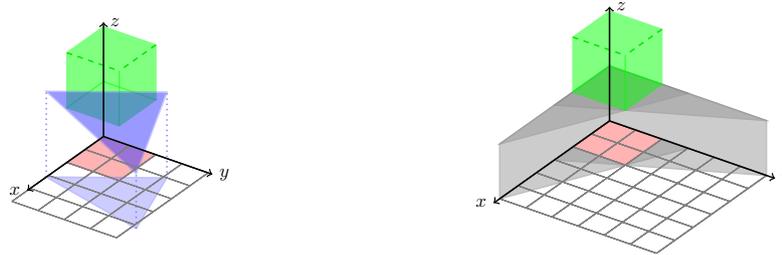
\begin{figure}[t]
\centering

\begin{minipage}{0.4\textwidth}
\begin{center}
   \scalebox{.7}{\tdplotsetmaincoords{60}{125}
\begin{tikzpicture}
	[tdplot_main_coords,
		grid/.style={very thin,gray},
		axis/.style={->, thick},
		cube/.style={opacity=.5, thick,green,fill=green}]
	
	\draw[cube] (0,0,1.2) -- (1.2,0,1.2) -- (1.2,1.2,1.2) -- (0,1.2,1.2) -- cycle;
	
	\draw[cube] (0,0,1.2) -- (0,0,2.4) -- (0,1.2,2.4) -- (0,1.2,1.2) -- cycle;

	\draw[cube] (0,0,1.2) -- (0,0,2.4) -- (1.2,0,2.4) -- (1.2,0,1.2) -- cycle;

  \draw[cube]  (1.2,1.2,1.2) --(1.2,1.2,2.4);
	\draw[green!90!black, thick, dashed]  (0,1.2,2.4) --(1.2,1.2,2.4);
  \draw[green!90!black, thick, dashed]  (1.2,0,2.4) --(1.2,1.2,2.4);
	
	\draw[thick,red!50!white, fill=red!30!white] (0,0,0) -- (1.2,0,0) -- (1.2,1.2,0) -- (0,1.2,0) -- cycle;
	
	\draw[thick,opacity=.5,blue!50!white, fill=blue!40!white] (1.875,0,0) -- (.6,1.875,0) -- (2.5,2.5,0) -- cycle;
	
	\draw[thick,opacity=.5,blue!50!white, fill=blue!80!white] (1.875,0,1.875) -- (.6,1.875,1.875) -- (2.5,2.5,1.2) -- cycle;

	\draw[blue!50!white, thick, dotted]  (1.875,0,0) --(1.875,0,1.875);
  \draw[blue!50!white, thick, dotted]  (.6,1.875,0) --(.6,1.875,1.875);
	\draw[blue!50!white, thick, dotted]  (2.5,2.5,0) --(2.5,2.5,1.2);

	\foreach \x in {0,.6,...,3.33}
		\foreach \y in {0,.6,...,2.5}
		{
			\draw[grid] (\x,0) -- (\x,2.4);
			\draw[grid] (0,\y) -- (3,\y);
		}			

	\draw[axis] (0,0,0) -- (2.5,0,0) node[left]{$x$};
	\draw[axis] (0,0,0) -- (0,2.5,0) node[anchor=west]{$y$};
	\draw[axis] (0,0,0) -- (0,0,2.5) node[anchor=west]{$z$};


\end{tikzpicture}}
		\end{center}
\end{minipage}
\quad\quad
\begin{minipage}{0.45\textwidth}
\centering
 \scalebox{.7}{\tdplotsetmaincoords{60}{125}
\begin{tikzpicture}
	[tdplot_main_coords,
		grid/.style={very thin,gray},
		axis/.style={->, thick},
		cube/.style={opacity=.5, thick,green,fill=green}]
	
		\coordinate (C) at (1.2,1.2,0); 
	\coordinate (Ca) at (1.8,0,0);
	\coordinate (Cb) at (0,1.8,0);
  \coordinate (orig) at (0,0,0);
	\coordinate (xax) at (3.33,0,0);
	\coordinate (yax) at (0,3.33,0);
	\coordinate (G1) at (intersection of C--Ca and orig--yax);
  \coordinate (G2) at (intersection of C--Cb and orig--xax);
	\coordinate (X0) at (.6,.6,0); 
  \coordinate (Y0) at (1.8,1.8,0); 
	
	\draw[thick,opacity=.5,black!50!white, fill=black!40!white] (G1) -- (orig) -- (Ca) -- cycle;
	\draw[thick,opacity=.5,black!50!white, fill=black!40!white] (G2) -- (orig) -- (Cb) -- cycle;
	\draw[thick,opacity=.5,black!50!white, fill=black!40!white] (3.6,0,1.2) -- (0,0,1.2) -- (0,1.8,1.2) -- cycle;
	\draw[thick,opacity=.5,black!50!white, fill=black!40!white] (0,3.6,1.2) -- (0,0,1.2) -- (1.8,0,1.2) -- cycle;
	\draw[thick,opacity=.5,black!50!white, fill=black!40!white] (3.6,0,0) -- (3.6,0,1.2) -- (0,0,1.2) -- (orig)--cycle;
  \draw[thick,opacity=.5,black!50!white, fill=black!40!white] (0,3.6,0) -- (0,3.6,1.2) -- (0,0,1.2) -- (orig)--cycle;
	
	\draw[cube] (0,0,1.2) -- (1.2,0,1.2) -- (1.2,1.2,1.2) -- (0,1.2,1.2) -- cycle;
	
	\draw[cube] (0,0,1.2) -- (0,0,2.4) -- (0,1.2,2.4) -- (0,1.2,1.2) -- cycle;

	\draw[cube] (0,0,1.2) -- (0,0,2.4) -- (1.2,0,2.4) -- (1.2,0,1.2) -- cycle;

  \draw[cube]  (1.2,1.2,1.2) --(1.2,1.2,2.4);
	\draw[green!90!black, thick, dashed]  (0,1.2,2.4) --(1.2,1.2,2.4);
  \draw[green!90!black, thick, dashed]  (1.2,0,2.4) --(1.2,1.2,2.4);
	
\draw[thick,red!50!white, fill=red!30!white] (0,0,0) -- (1.2,0,0) -- (1.2,1.2,0) -- (0,1.2,0) -- cycle;

	\foreach \x in {0,.6,...,3.8}
		\foreach \y in {0,.6,...,3.8}
		{
			\draw[grid] (\x,0) -- (\x,3.6);
			\draw[grid] (0,\y) -- (3.6,\y);
		}			

	\draw[axis] (0,0,0) -- (3.8,0,0) node[left]{$x$};
	\draw[axis] (0,0,0) -- (0,3.8,0) node[anchor=west]{$y$};
	\draw[axis] (0,0,0) -- (0,0,2.5) node[anchor=west]{$z$};
	%

\end{tikzpicture}}

\end{minipage}
\vspace{-.3cm}
\caption{The target~$T$ is the green rectangular region and
		the blue region is~$S$. The pink region is $\pi(T)$ and the light 
		blue region $\pi(S)$. The grey region~$F$ is described in equation~(\ref{eq:def-set-F}).}
\label{fig:three-Dim}
\vspace{-.5cm}
\end{figure}

Consider an instance of the Pareto Domination Problem given by an MPTA
$\mathcal{A}$ with~$|\mathcal{Y}|=3$ observers, and
a target vector $\gamma\in\RP^{\mathcal{Y}}$.  Our starting point is
again Proposition~\ref{prop:master}.  To apply this proposition the
idea is to eliminate the quantifiers over the real variables (the
$\lambda_i$) in the system of equations (\ref{eq:master-system}) and
thereby obtain a formula that lies in a decidable fragment of
arithmetic (namely disjunctions of constraints of the form
considered in Theorem~\ref{thm:segal}).

To explain this quantifier-elimination step in more detail, let us
identify $\RP^{\mathcal{Y}}$ with $\RP^3$.  Denote by
$T\subseteq \RP^3$ the set of valuations that dominate a given fixed
valuation $\gamma \in \RP^3$.  We can write
$T = \{ (x,y,z) \in \RP^3 : x \leq a \wedge y \leq b \wedge z \geq c
\} \, , $ where $a,b,c$ are non-negative integer constants (see the
left-hand side of Figure~\ref{fig:three-Dim}).  We seek a
quantifier-free formula of arithmetic that expresses that  $T$ meets a 4-simplex~$S\subseteq \RP^3$
given by the convex hull of~$\{\gamma_1,\ldots,\gamma_4\}$,
where
$(\gamma_1,\ldots,\gamma_4) \in \mathrm{Reach}_{S(\mathcal{A})}$.
However, since $T$ is unbounded, it is clear that $T$ meets a given
4-simplex $S$ just in case it meets a face of $S$ (which is a
3-simplex).  Thus it will suffice to write a quantifier-free formula
of arithmetic $\varphi_T$ expressing that a 3-simplex in~$\RP^3$ meets~$T$.  Such a formula has nine free variables---one for each of the
coordinates of the three vertices of $S$.  We describe $\varphi_T$ in
the remainder of this section.







It is geometrically clear that $S$ intersects $T$ iff either $S$
lies inside $T$, the boundary of $S$ meets $T$, or the boundary of $T$
meets $S$.  More specifically we have the following proposition, whose
proof is given in Appendix~\ref{append-prop-case-split}.
\begin{proposition}
Let $S\subseteq\RP^3$ be a 3-simplex.  Then
$T\cap S$ is nonempty if and only if at least one of the following holds:
(a) Some vertex of $S$ lies in $T$; (b) Some bounding edge of $S$ intersects either the face of $T$
  supported by the plane $x=a$ or the face of $T$ supported by the
  plane $y=b$; (c) The bounding edge of $T$ supported by the line $x=a\cap y=b$ intersects $S$.
\label{prop:case-split}
\end{proposition}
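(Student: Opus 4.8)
The plan is to prove the three-way characterisation by a straightforward but careful geometric case analysis, exploiting the very special shape of $T$: it is the intersection of three half-spaces whose bounding planes are $\{x=a\}$, $\{y=b\}$, and $\{z=c\}$, one of which (the $z$-constraint) is a lower bound and the other two upper bounds. The ``if'' direction is immediate: each of (a), (b), (c) exhibits an explicit point lying in both $S$ and $T$, so in each case $T\cap S\neq\emptyset$. The substance is the ``only if'' direction, so assume there is a point $p\in T\cap S$ and show one of (a), (b), (c) must hold.

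First I would set up the key observation that makes everything work: the $z$-coordinate is essentially harmless. Since $S$ is a $3$-simplex (a triangle in $\RP^3$, possibly degenerate), consider the vertical ``shadow'' direction. If $p\in T\cap S$ then in particular $p_z\ge c$. Consider sliding $p$ within $S$: because $T$ is upward-closed in the $z$-direction (only a lower bound $z\ge c$), once the projection to the $(x,y)$-plane is inside $\pi(T)=\{(x,y): x\le a, y\le b\}$ and the $z$-value is large enough, we are in $T$. So the real constraint couples two pieces of data: the orthogonal projection $\pi(S)$ of the triangle to the $(x,y)$-plane must meet the rectangle $\pi(T)$, \emph{and} over the region where they meet the triangle must rise to height at least $c$. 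The plan is to push the point $p$ to the boundary of $S$ while staying in $T$, and track which facet of $T$ it can reach.

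The core argument: take $p \in T\cap S$ and move it in a straight line within the triangle $S$ in the direction of \emph{increasing $z$} for as long as we stay in $S$ (this keeps us in $T$ since the $x,y$-constraints of $T$ are unaffected by such motion only if the direction is vertical — here I must instead be more careful and move within $S$ along whichever edge-ward direction does not violate $x\le a$ or $y\le b$). Concretely I would argue as follows. If some vertex of $S$ already lies in $T$, we are in case (a) and done. Otherwise, every vertex of $S$ violates at least one of $x\le a$, $y\le b$, $z\ge c$. Now consider the point $p\in T\cap S$ and a maximal segment of $S$ through $p$ staying inside $T$; its endpoints lie on the boundary of $T$ relative to $S$, i.e.\ each endpoint lies on one of the three planes $x=a$, $y=b$, $z=c$ (or on the boundary of the triangle $S$ itself). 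Pushing to the boundary of the triangle: some point $q$ of $\partial T\cap \partial S$ or of the relevant configuration lies on a bounding edge of $S$. If that edge-point $q$ of $S$ satisfies $x=a$ or $y=b$ (with the remaining inequalities holding), it lies on the face of $T$ supported by $x=a$ or by $y=b$, giving case (b). The only remaining possibility is that every boundary point of $T\cap S$ reachable this way lies on the plane $z=c$ and strictly inside the open region $x<a,y<b$; but then $T\cap S$, being a nonempty convex region all of whose relative-boundary points lie in $\{z=c\}\cup(\text{interior of the }x,y\text{-slab})$, must be ``pinned'' at the corner edge $\{x=a\}\cap\{y=b\}$ of $T$ — intersecting that edge — which is case (c). I would make this pinning argument precise via a dimension/extreme-point count: a nonempty compact convex set $T\cap S$ is the convex hull of its extreme points, each extreme point lies on at least one facet of $T$ or on $\partial S$; if none lies on the $x=a$ or $y=b$ facets of $T$ (else (b)) and none is a vertex of $S$ in $T$ (else (a)), the only $0$-dimensional features of $T$ left are the intersection with the edge $x=a\cap y=b$, forcing (c).

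The main obstacle I anticipate is handling the degenerate and boundary cases cleanly — when $S$ is a lower-dimensional simplex, when the intersection $T\cap S$ is a single point, or when an intersection point lies simultaneously on several of the planes $x=a$, $y=b$, $z=c$ — so that the case split stays genuinely exhaustive and the ``pinning to the corner edge'' step does not secretly assume $T\cap S$ is full-dimensional. The cleanest route is probably to argue by contradiction: suppose none of (a), (b), (c) holds, and derive that $T\cap S=\emptyset$ by showing that the convex set $T\cap S$ would have to be both nonempty and have no extreme points of any of the permitted types, which is impossible for a nonempty compact convex set. Drawing the picture in Figure~\ref{fig:three-Dim} (the pink $\pi(T)$, the light-blue $\pi(S)$, and the grey complementary region $F$) guides the bookkeeping, since projecting out $z$ reduces the combinatorics to a planar rectangle-meets-triangle question governed by whether $\pi(S)\cap\pi(T)$ is witnessed at a vertex of $\pi(S)$, along an edge of $\pi(S)$, or only at the corner $(a,b)$ of $\pi(T)$.
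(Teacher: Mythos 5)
Your ``if'' direction is fine, and your guiding picture (project out $z$ and exploit that $T$ is upward-closed in the $z$-direction) is the right one, but the deductive chain you actually run for ``only if'' has a genuine gap. You classify the extreme points of the compact convex set $T\cap S$ and assert that, once types (a) and (b) are excluded, ``the only $0$-dimensional features of $T$ left'' are on the edge $x=a\cap y=b$. That is not true: $T$ has a third facet, supported by the plane $z=c$, and extreme points of $T\cap S$ can lie on features of $T$ involving it --- for instance where a bounding edge of $S$ crosses $z=c$ at a point with $x<a$ and $y<b$, or where the plane of $S$ meets one of the edges $x=a\cap z=c$ or $y=b\cap z=c$ of $T$ in the relative interior of $S$. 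Such an extreme point witnesses none of (a), (b), (c), so your case split is not exhaustive. Your fallback step is also incoherent as written: if \emph{every} relative-boundary point of the convex set $T\cap S$ had $x<a$ and $y<b$, then by convexity the whole set would, and it could not possibly meet the edge $x=a\cap y=b$ --- so ``pinning'' cannot be concluded from that hypothesis.

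The missing move --- and it is exactly how the paper argues --- is to make the upward-closedness do work \emph{before} you pick distinguished points, rather than taking extreme points of $T\cap S$ in $\R^3$ directly. Since the only $z$-constraint defining $T$ is the lower bound $z\ge c$, a fibre of $S$ over $u\in\pi(T)\cap\pi(S)$ meets $T$ iff its \emph{topmost} point does; and because $\pi(T)\cap\pi(S)$ is a bounded convex polygon (hence the convex hull of its vertices) while $z$ varies affinely over $S$, it suffices to test the topmost point of $S$ over each \emph{vertex} of $\pi(T)\cap\pi(S)$. Those vertices come in exactly three kinds: vertices of $\pi(S)$, crossings of a bounding edge of $\pi(S)$ with a bounding edge of $\pi(T)$ (i.e.\ with $x=a$ or $y=b$), and the corner $(a,b)$ of $\pi(T)$; lifting the topmost point of $S$ over each kind yields precisely cases (a), (b) and (c). Working with topmost points over these planar vertices is what makes the troublesome $z=c$ facet disappear from the case analysis; without that reduction your classification of boundary points cannot be completed.
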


The following definition and proposition are key to expressing
intersections of the form identified in Case (c) of
Proposition~\ref{prop:case-split} in terms of quadratic constraints.
The idea is to identify a bounded region $F\subseteq \RP^3$ such that
in Case (c) one of the vertices of $S$ lies in $F$.  The proof of
Proposition~\ref{prop-3dim-fixed-a-point} can be found in
Appendix~\ref{append-fixed-a-point}.

Define a region $F\subseteq \RP^3$ (depicted as the grey-shaded region
on the right of Figure~\ref{fig:three-Dim}) by:
\begin{gather}
F=\{(x,y,z) \in \RP^3 \mid z<c\wedge (x+ay\leq a(b+1) \vee
y+bx\leq b(a+1))\}.                                                             \label{eq:def-set-F}
\end{gather}
Then we have:
\begin{proposition}\label{prop-3dim-fixed-a-point}
  Let $S\subseteq\RP^3$ be a 3-simplex such that $S\cap T$ is
  non-empty but none of the bounding edges of $S$ meets $T$.  Then some
  vertex of $S$ lies in $F$.
\end{proposition}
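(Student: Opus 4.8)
The plan is to reduce to Case~(c) of Proposition~\ref{prop:case-split} and then argue in two dimensions via the projection $\pi$ that forgets the $z$-coordinate. First I would note that under the hypothesis ($S\cap T\neq\emptyset$ and no bounding edge of $S$ meets $T$), Cases~(a) and~(b) of Proposition~\ref{prop:case-split} are both impossible: a vertex of $S$ inside $T$, or a bounding edge of $S$ meeting the face of $T$ supported by the plane $x=a$ or $y=b$, would exhibit a point of a bounding edge of $S$ lying in $T$. Hence Case~(c) holds, so the bounding edge $\{(a,b,z):z\geq c\}$ of $T$ meets $S$ at some point $q=(a,b,z_0)$ with $z_0\geq c$; as $q\in T$, the same observation shows $q$ lies in the relative interior of $S$. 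In particular $S$ is a genuine $2$-dimensional triangle (if it collapses to a bounding edge or a point, then $S\cap T\neq\emptyset$ already makes a bounding edge meet $T$, so the hypothesis is vacuous). I would then fix barycentric coordinates $q=\lambda_1v_1+\lambda_2v_2+\lambda_3v_3$ with all $\lambda_i>0$, writing $v_i=(x_i,y_i,z_i)$ for the vertices.

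Next I would show that at least one vertex lies strictly below the plane $z=c$ (so that $c\geq1$ and $F\neq\emptyset$). From $z_0=\sum_i\lambda_iz_i\geq c$ some $z_i\geq c$; if \emph{every} $z_i\geq c$, then $S\subseteq\{z\geq c\}$, so a point of $S$ lies in $T$ precisely when its projection lies in the \emph{bounded} rectangle $\pi(T)=[0,a]\times[0,b]$. But $\pi(q)=(a,b)$ lies in the relative interior of $\pi(S)$ (a linear map carries relative interiors of convex sets into relative interiors) and is a corner of $\pi(T)$; tracing the segment inside $\pi(T)$ from $(a,b)$ towards $(0,0)$, and using that $\pi(S)$ is bounded while $(0,0)\notin\mathrm{int}(\RP^2)$, one locates a point of the relative boundary of $\pi(S)$ inside $\pi(T)$, hence a bounding edge of $S$ meeting $T$ --- a contradiction. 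So both $I_-:=\{i:z_i<c\}$ and $I_+:=\{i:z_i\geq c\}$ are nonempty.

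The heart of the proof is to find $i\in I_-$ whose projection lies in the footprint $\{x+ay\leq a(b+1)\}\cup\{y+bx\leq b(a+1)\}$ of $F$, since then $v_i\in F$. I would pass to shifted coordinates $\delta_i=(x_i-a,\,y_i-b)$, so that $\sum_i\lambda_i\delta_i=(0,0)$ (the origin is interior to the triangle $\mathrm{conv}\{\delta_1,\delta_2,\delta_3\}$), the footprint of $F$ becomes the complement of the open cone $W=\{\delta^x+a\delta^y>0\}\cap\{b\delta^x+\delta^y>0\}$ with apex at the origin, and $\pi(T)$ becomes $[-a,0]\times[-b,0]$. A short computation shows that $\overline{W}$ is a pointed cone lying in a closed half-plane whose bounding line passes through the origin, and that $\overline{W}$ meets $[-a,0]\times[-b,0]$ only at the origin; consequently no convex body with the origin in its interior can be contained in $\overline{W}$. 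Assuming towards a contradiction that $\delta_i\in W$ for every $i\in I_-$, I would exploit that ``no bounding edge of $S$ meets $T$'' means, for each bounding edge $[v_i,v_j]$, that the portion of $[\delta_i,\delta_j]$ over which $z\geq c$ avoids the closed quadrant $Q_3:=\{\delta^x\leq0\}\cap\{\delta^y\leq0\}$ (which contains the $\delta$-image of $\pi(T)$); when $i,j\in I_+$ this portion is all of $[\delta_i,\delta_j]$. Combining these segment constraints with the identity $\sum_i\lambda_i\delta_i=0$ and the assumption $\delta_i\in W$ for $i\in I_-$, a case split on whether $|I_+|$ equals $1$ or $2$ should yield the contradiction: in each case, either the vertices in $I_+$ are forced so far from $Q_3$ that the origin cannot be interior to $\mathrm{conv}\{\delta_1,\delta_2,\delta_3\}$, or one of the constrained sub-segments is pushed into $Q_3$.

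The step I expect to be the main obstacle is this final planar case analysis: one has to simultaneously keep track of the bounded rectangle $\pi(T)$, the two lines through $(a,b)$ that cut out the footprint of $F$, and the positions of the three projected vertices under the edge constraints, and in addition dispatch the degenerate situations --- $\pi(S)$ collapsing to a segment or a point, vertices with $z_i=c$ exactly, and the boundary cases $a=0$ or $b=0$ --- each of which needs a small separate argument.
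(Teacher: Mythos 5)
Your reduction to Case~(c) of Proposition~\ref{prop:case-split} and your argument that some vertex must lie strictly below the plane $z=c$ are fine, but the proof stops exactly where the real content lies. The decisive claim---that among the vertices with $z_i<c$ at least one projects into the footprint $\{x+ay\leq a(b+1)\}\cup\{y+bx\leq b(a+1)\}$ of $F$---is only asserted: you set up the shifted coordinates and the cone $W$, and then say that ``a case split on whether $|I_+|$ equals $1$ or $2$ should yield the contradiction'' and that you expect this planar analysis to be the main obstacle. That analysis is not a routine verification; it is precisely the geometric heart of the proposition, so as it stands the proposal has a genuine gap. There is also a small inaccuracy along the way: $\overline{W}$ need not be a pointed cone (when $ab=1$ the two bounding lines coincide and $\overline{W}$ is a half-plane), although the weaker fact you actually use---that $\overline{W}$ lies in a half-plane through the origin, so a convex set with the origin in its interior cannot sit inside $\overline{W}$---survives.

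For comparison, the paper avoids Case~(c) and the barycentric setup altogether. It observes that $\pi(S)\cap\pi(T)\neq\emptyset$ forces some \emph{projected bounding edge} of $S$ to meet $\pi(T)$, and then invokes the planar lemma (Proposition~\ref{prop:decomposition-of-F}): any segment in $\RP^2$ meeting the rectangle $R=\pi(T)$ has an endpoint in $R$ or both endpoints in $\pi(F)$. A two-line case analysis on the $z$-coordinates of the endpoints (using that the edge itself avoids $T$) then places an endpoint of that edge in $F$. The case analysis you left open is essentially a re-derivation of this lemma in dual form; if you want to complete your route, the cleanest fix is to prove and use exactly such a statement about a single edge versus the rectangle $\pi(T)$, rather than arguing simultaneously about all three vertices, the interior point $(a,b)$, and the sets $I_\pm$.
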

%

Denote by $\pi:\R^3\rightarrow \R^2$ the projection of $\R^3$ onto the
$xy$-plane, where $\pi(x,y,z)=(x,y)$ for all $x,y,z\in\R$.  Write
$\pi(T)$ and $\pi(S)$ for the respective images of $T$ and $S$ under
$\pi$.

We write separate formulas
$\varphi_T^{(1)},\varphi_T^{(2)},\varphi_T^{(3)}$, respectively
expressing the three necessary and sufficient conditions for $T\cap S$
to be nonempty, as identified in Proposition~\ref{prop:case-split}.  These
are formulas of arithmetic whose free variables denote the coordinates
of the three vertices of $S$.

{\bf Some vertex of $S$ lies in $T$.} 
\label{subsec:phi1}
Denote the vertices of $S$ by
$\boldsymbol{p},\boldsymbol{q},\boldsymbol{r}$.  Formula
$\varphi_T^{(1)}$ expresses that $\boldsymbol{p}\in T$ or
$\boldsymbol{q}\in T$ or $\boldsymbol{r}\in T$.  This is clearly a
formula of linear arithmetic.

{\bf Some bounding edge of $S$ meets a face of $T$.}
\label{subsec:phi2}
It is straightforward to obtain $\varphi_T^{(2)}$ given a formula
$\psi$ expressing that an arbitrary line segment
${\boldsymbol{x}\boldsymbol{y}}$ in $\RP^3$ meets a given
fixed face of $T$.  We outline such a formula in the rest of this
sub-section. For concreteness we consider the face of $T$ supported by
the plane $x=a$, which maps under $\pi$ to the line segment
$L=\{(a,y) : 0 \leq y \leq b\}$.
Formula $\psi$ has six
free variables, respectively denoting the coordinates of
$\boldsymbol{x}=(x_1,x_2,x_3)$ and $\boldsymbol{y}=(y_1,y_2,y_3)$.

Formula $\psi$ is a conjunction of two parts.  The first part
expresses that ${\pi(\boldsymbol{x})\pi(\boldsymbol{y})}$
meets~$L$.  Since the complement
of $\pi(F)$ is a convex region in $\RP^2$ that excludes $\pi(T)$ we
have that either $\pi(\boldsymbol{x}) \in \pi(F)$ or
$\pi(\boldsymbol{y})\in \pi(F)$.  Moreover since $\pi(F)$ contains
finitely many integer points, we can write separate sub-formulas
expressing that ${\pi(\boldsymbol{x})\pi(\boldsymbol{y})}$
meets $L$ for each fixed value
of $\pi(\boldsymbol{x}) \in \pi(F)$ and each fixed value of
$\pi(\boldsymbol{y}) \in \pi(F)$. Each of these sub-formulas can then be written
in linear arithmetic, see Appendix~\ref{append-geometry}.

Suppose now that ${\pi(\boldsymbol{x})\pi(\boldsymbol{y})}$ meets $L$.
Then the line $\boldsymbol{x}\boldsymbol{y}$ meets the face of $T$
supported by the plane $x=a$ iff the line in $xz$-plane
connecting $(x_1,x_3)$ and $(y_1,y_3)$ passes above $(a,c)$. This
requirement is expressed by the quadratic constraint
(8) in Appendix~\ref{append-geometry}.

{\bf A bounding edge of $T$ meets $S$.}
\label{subsec:phi3}
We proceed to describe the formula $\varphi_T^{(3)}$ expressing 
that the bounding edge $E$ of $T$, supported by the line
$x=a\cap y=b$, meets $S$.  Note that image of $E$ under the projection
$\pi$ is the single point $\boldsymbol{c}=(a,b)$.  Thus $E$ meets $S$
just in case~$\boldsymbol{c} \in \pi(S)$ and the point $(a,b,c)$ lies
below the plane affinely spanned by $S$.  We describe two formulas
that respectively express these requirements.

Denote the vertices of $S$ by $\boldsymbol{p}$, $\boldsymbol{q}$, and
$\boldsymbol{r}$.  We first give a formula of linear arithmetic
expressing that $\boldsymbol{c}\in\pi(S)$.  Notice that if
$\boldsymbol{c}\in\pi(S)$ then at least one vertex of $\pi(S)$ must
lie in~$\pi(F)$.  We now consider two cases.  The first case is that
exactly one vertex of $\pi(S)$ (say~$\pi(\boldsymbol{p})$) lies in~$\pi(F)$.
 The second case is that at least two vertices of of $\pi(S)$
(say~$\pi(\boldsymbol{p})$ and $\pi(\boldsymbol{q})$) lie in $\pi(F)$.
The two cases are respectively denoted in Figure~\ref{fig:BoundingEdges3D}, that   
we refer to  in the following.

\begin{figure}[t]
 \begin{center}
   \scalebox{.85}{\begin{tikzpicture}[dot/.style={circle,inner sep=1pt,fill,label={#1},name=#1},
  extended line/.style={shorten >=-#1,shorten <=-#1},
  extended line/.default=1cm]

\node [draw=none,label=left:{{\bf Case 1:}}] (q) at (0,0) {~\scalebox{.9}{\begin{tikzpicture}
	[grid/.style={very thin,gray},
		axis/.style={->, thick},
		cube/.style={opacity=.5, thick,green,fill=green}]
	
	\coordinate (C) at (.8,.8); 
	\coordinate (Ca) at (1.2,0);
	\coordinate (Cb) at (0,1.2);
  \coordinate (orig) at (0,0);
	\coordinate (xax) at (2.4,0);
	\coordinate (yax) at (0,2.4);
	\coordinate (G1) at (intersection of C--Ca and orig--yax);
  \coordinate (G2) at (intersection of C--Cb and orig--xax);
	\coordinate (c) at (.8,.8); 
  \coordinate (p) at (.4,.55); 
	\coordinate (r) at (1.04,1.88); 
  \coordinate (q) at (2.12,1.04); 
	
	\draw[thick,opacity=.5,black!50!white, fill=black!40!white] (G1) -- (orig) -- (Ca) -- cycle;
	\draw[thick,opacity=.5,black!50!white, fill=black!40!white] (G2) -- (orig) -- (Cb) -- cycle;


\draw[very thin,color=gray,step=.4cm] (0,0) grid (2.8,2.8);
\draw[->,ultra thick] (-.1,0)--(2.6,0) node[right]{$x$};
\draw[->,ultra thick] (0,-.1)--(0,2.6) node[above]{$y$};
	
\draw [fill=black,black](c) circle (2pt) node [above,black]{$~~\boldsymbol{c}$};
\draw [fill=black,black](xax) circle (2pt) node [above,black]{$\boldsymbol{f_2}$};
\draw [fill=black,black](yax) circle (2pt) node [left,black]{$\boldsymbol{f_1}$};
	
\draw [fill=black,black](p) circle (2pt) node [below,black]{$\pi(\boldsymbol{p})$};
\draw [fill=black,black](q) circle (2pt) node [right,black]{$\pi(\boldsymbol{q})$};
\draw [fill=black,black](r) circle (2pt) node [right,black]{$\pi(\boldsymbol{r})$};

\draw (p) -- (q);
\draw (p) -- (r);
\draw (q) -- (r);

\end{tikzpicture}}};
\node [draw=none,label=left:{{\bf Case 2:}}] (q) at (8,0) {~\scalebox{.9}{\begin{tikzpicture}
	[grid/.style={very thin,gray},
		axis/.style={->, thick},
		cube/.style={opacity=.5, thick,green,fill=green}]
	
\coordinate (C) at (.8,.8); 
	\coordinate (Ca) at (1.2,0);
	\coordinate (Cb) at (0,1.2);
  \coordinate (orig) at (0,0);
	\coordinate (xax) at (2.4,0);
	\coordinate (yax) at (0,2.4);
	\coordinate (G1) at (intersection of C--Ca and orig--yax);
  \coordinate (G2) at (intersection of C--Cb and orig--xax);
	\coordinate (c) at (.8,.8); 
  \coordinate (p) at (.4,.55); 
	\coordinate (r) at (1.04,1.84); 
  \coordinate (q) at (1.52,.128); 
	
	\draw[thick,opacity=.5,black!50!white, fill=black!40!white] (G1) -- (orig) -- (Ca) -- cycle;
	\draw[thick,opacity=.5,black!50!white, fill=black!40!white] (G2) -- (orig) -- (Cb) -- cycle;


\draw[very thin,color=gray,step=.4cm] (0,0) grid (2.8,2.8);
\draw[->,ultra thick] (-.1,0)--(2.6,0) node[right]{$x$};
\draw[->,ultra thick] (0,-.1)--(0,2.6) node[above]{$y$};
	
\draw [fill=black,black](c) circle (2pt) node [above,black]{$~~\boldsymbol{c}$};
\draw [fill=black,black](xax) circle (2pt) node [above,black]{$\boldsymbol{f_2}$};
\draw [fill=black,black](yax) circle (2pt) node [left,black]{$\boldsymbol{f_1}$};
	
\draw [fill=black,black](p) circle (2pt) node [below,black]{$\pi(\boldsymbol{p})$};
\draw [fill=black,black](q) circle (2pt) node [above,black]{~~~~~~$\pi(\boldsymbol{q})$};
\draw [fill=black,black](r) circle (2pt) node [right,black]{$\pi(\boldsymbol{r})$};

\draw (p) -- (q);
\draw (p) -- (r);
\draw (q) -- (r);

\end{tikzpicture}}};

\end{tikzpicture}}
		\end{center}
		\vspace{-.4cm}
\caption{Two cases for expressing that $\boldsymbol{c}\in\pi(S)$.  The grey region is $\pi(F)$.}
\label{fig:BoundingEdges3D}
\vspace{-.4cm}
\end{figure}

In the first case we can express that $\boldsymbol{c} \in\pi(S)$ by
requiring that the line segment
$\pi(\boldsymbol{p})\pi(\boldsymbol{q})$ crosses the edge $\boldsymbol{f}_2\boldsymbol{c}$
 and $\pi(\boldsymbol{p})\pi(\boldsymbol{r})$ crosses the
edge $\boldsymbol{f}_1\boldsymbol{c}$.  By writing a separate constraint for
each fixed value of $\pi(\boldsymbol{p}) \in \pi(F)$ the above
requirements can be expressed in linear arithmetic.

In the second case we can express that $\boldsymbol{c} \in \pi(S)$ by
requiring that $\boldsymbol{c}$ lies on the left of each of the
directed line segments ${\pi(\boldsymbol{p})\pi(\boldsymbol{q})}$,
${\pi(\boldsymbol{q})\pi(\boldsymbol{r})}$, and
${\pi(\boldsymbol{r})\pi(\boldsymbol{p})}$.  By writing such a
constraint for each fixed value of $\pi(\boldsymbol{p})$ and
$\pi(\boldsymbol{q})$ in $\pi(F)$ we obtain, again, a formula of
linear arithmetic, see Appendix~\ref{append-geometry}.

It remains to give a formula expressing that~$(a,b,c)$ lies below the
plane affinely spanned by $\boldsymbol{p}$, $\boldsymbol{q}$, and
$\boldsymbol{r}$ under the assumption that
$\boldsymbol{c} \in \pi(S)$.  Note here that the above-described
formula expressing that $\pi(\boldsymbol{c})\in\pi(S)$ specifies
\emph{inter alia} that $\pi(\boldsymbol{p})$, $\pi(\boldsymbol{q})$,
and $\pi(\boldsymbol{r})$ are oriented counter-clockwise.  
Thus $(a,b,c)$ lies below the plane affinely spanned by
$\boldsymbol{p}$, $\boldsymbol{q}$, and $\boldsymbol{r}$ iff
\[ \begin{vmatrix} 
q_1-p_1&r_1-p_1&a-p_1\\
q_2-p_2&r_2-p_2&b-p_2\\
q_3-p_3&r_3-p_3&c-p_3\\ \end{vmatrix} <0 \]
The above expession is cubic, but by
Proposition~\ref{prop-3dim-fixed-a-point} we may assume that
$\boldsymbol{p}$ lies in the set $F$, which has finitely many integer
points.  Thus by a case analysis we may regard $\boldsymbol{p}$ as
being fixed and so write the desired formula as a disjunction of
atoms, each with a single quadratic term, whose satisfiability is known to
be decidable from Theorem~\ref{thm:segal}. This leads us to:

\begin{theorem}\label{thm:3-costs} The Pareto Domination Problem is
  decidable for at most three observers.
\end{theorem}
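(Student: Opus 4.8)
The plan is to decide satisfiability of the master system~(\ref{eq:master-system}), which by Proposition~\ref{prop:master} is equivalent to the Pareto Domination Problem, by eliminating the real variables $\lambda_1,\ldots,\lambda_{d+1}$ and landing in the decidable fragment of Theorem~\ref{thm:segal}: finite disjunctions of systems, each comprising linear inequalities together with \emph{at most one} quadratic constraint, over the integers. First I would dispose of the easy cases. If the three observers are all costs or all rewards, Theorem~\ref{thm:pure-constraints} already applies; and if there are fewer than three observers one can pad with observers of constant rate $0$ and target $0$, freely declared as costs or rewards, since dominating such an observer always holds. So we are left with exactly two mixed configurations, $\{2~\mathrm{costs},1~\mathrm{reward}\}$ and $\{1~\mathrm{cost},2~\mathrm{rewards}\}$, and by the symmetry recorded in Appendix~\ref{append-three-var} it suffices to treat the first.

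Identifying $\RP^{\mathcal{Y}}$ with $\RP^3$, with coordinates $x,y$ the costs and $z$ the reward, the set $T$ of valuations dominating the target is $\{(x,y,z)\in\RP^3 : x\le a \wedge y\le b \wedge z\ge c\}$ for integer constants $a,b,c$. By Proposition~\ref{prop:convex-hull} a dominating accepting run exists iff some $4$-simplex $S=\mathrm{conv}\{\gamma_1,\ldots,\gamma_4\}$ with $(\gamma_1,\ldots,\gamma_4)\in\mathrm{Reach}_{\mathcal{S(A)}}$ meets $T$. Since $T$ is closed, convex, and unbounded in the direction $(0,0,1)$ (increasing the reward only improves feasibility), any point of $S\cap T$ can be slid along that fixed direction out of the bounded simplex $S$ while staying in $T$; hence $S$ meets $T$ iff one of the (at most four) triangular faces of $S$ meets $T$. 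So it suffices to write a quantifier-free arithmetic formula $\varphi_T$ in the nine variables encoding the vertices $\boldsymbol{p},\boldsymbol{q},\boldsymbol{r}$ of a $3$-simplex expressing that it meets $T$, and substitute into this the integer-linear parameterisations of reachable vertex tuples.

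By Proposition~\ref{prop:case-split} we have $\varphi_T = \varphi_T^{(1)}\vee\varphi_T^{(2)}\vee\varphi_T^{(3)}$. Here $\varphi_T^{(1)}$ (some vertex lies in $T$) is plainly linear. For $\varphi_T^{(2)}$ I would reduce to a sub-formula $\psi$ stating that a segment $\boldsymbol{x}\boldsymbol{y}$ meets the face $x=a$ of $T$: its $\pi$-projection must meet the segment $L=\{(a,t):0\le t\le b\}$, and since the complement of $\pi(F)$ (with $F$ as in~(\ref{eq:def-set-F})) is convex and disjoint from $\pi(T)\supseteq L$, at least one of $\pi(\boldsymbol{x}),\pi(\boldsymbol{y})$ lies in $\pi(F)$; as $\pi(F)$ has finitely many integer points, a case split over its value turns the crossing condition into linear constraints, and the remaining requirement (the line through $(x_1,x_3),(y_1,y_3)$ passes above $(a,c)$) is a single quadratic inequality; the face $y=b$ is symmetric. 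For $\varphi_T^{(3)}$ (the edge of $T$ on the line $x=a\cap y=b$ meets $S$): this edge projects to the point $\boldsymbol{c}=(a,b)$, so we need $\boldsymbol{c}\in\pi(S)$ and $(a,b,c)$ below the plane spanned by $S$. Membership $\boldsymbol{c}\in\pi(S)$ is handled by the two cases of Figure~\ref{fig:BoundingEdges3D} according to whether one or at least two projected vertices lie in $\pi(F)$; fixing those vertices to their finitely many integer values makes it a linear constraint, which moreover pins down a counter-clockwise orientation of $\pi(\boldsymbol{p}),\pi(\boldsymbol{q}),\pi(\boldsymbol{r})$. The remaining ``below the plane'' condition is the $3\times3$ determinant displayed above, cubic in the vertices; but by Proposition~\ref{prop-3dim-fixed-a-point}, in the only subcase that matters (no bounding edge of $S$ meets $T$), one vertex, say $\boldsymbol{p}$, lies in the bounded region $F$, and since the vertices of $S$ are integer points we may enumerate the finitely many integer points of $F$ and treat $\boldsymbol{p}$ as a constant, so the determinant drops to a single quadratic atom.

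Finally I would combine all this with the constraint $(\gamma_1,\ldots,\gamma_4)\in\mathrm{Reach}_{\mathcal{S(A)}}$, which by Proposition~\ref{prop:LIN} is a finite union of $\mathbb{N}$-linear sets. For each choice of which three of $\gamma_1,\ldots,\gamma_4$ form the face and each linear set in the decomposition, substitute the corresponding integer-linear parameterisation of the nine coordinates into $\varphi_T$: linear atoms stay linear and the lone quadratic atom of each disjunct stays quadratic, so every disjunct is a conjunction of linear constraints and at most one quadratic constraint over the integers, decidable by Theorem~\ref{thm:segal}; a finite disjunction of decidable instances is decidable, and the theorem follows. The main obstacle — and where the region $F$ together with Propositions~\ref{prop:case-split} and~\ref{prop-3dim-fixed-a-point} do the real work — is precisely the degree control: the natural encodings of ``a segment crosses a face'' and ``$(a,b,c)$ lies below a plane through three points'' are bilinear respectively cubic, and the argument must guarantee that after all case splits each disjunct retains at most one quadratic constraint, which is exactly the shape Theorem~\ref{thm:segal} can handle.
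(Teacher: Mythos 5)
Your proposal is correct and follows essentially the same route as the paper: eliminate the real variables of the master system geometrically by passing to the triangular faces of the $4$-simplex, split into the three cases of Proposition~\ref{prop:case-split}, exploit the bounded region $F$ of~(\ref{eq:def-set-F}) with its finitely many integer points (via Proposition~\ref{prop-3dim-fixed-a-point}) to fix vertices by case analysis so that each disjunct retains at most one quadratic atom, and conclude with Theorem~\ref{thm:segal} after substituting the $\mathbb{N}$-linear parameterisation of $\mathrm{Reach}_{\mathcal{S(A)}}$. The one inaccuracy is your appeal to ``symmetry'' for the one-cost/two-rewards configuration: no symmetry exchanges the two mixed cases (cost vectors live in the non-negative orthant and the projected target $\pi(T)$ is unbounded in that case), and the paper instead runs a parallel but modified argument in Appendix~\ref{append-three-var} (using Proposition~\ref{prop:decomposition-of-F} and a different proof of the analogue of Proposition~\ref{prop-3dim-fixed-a-point}); since you cite that appendix anyway, this is a mischaracterisation rather than a genuine gap.
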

	
Theorem~\ref{thm:3-costs} was proven
 by reduction to 
satisfiability of a system of arithmetic constraints with  a \emph{single} quadratic term. 
For the case of four observers this technique does not appear to yield arithmetic constraints in a known decidable class.
Note that satisfiability of  systems of constraints featuring two distinct quadratic terms is not known to be decidable in general.

In Appendix~\ref{subsec:two-cost-variables} we consider (a
generalisation of) the Pareto Domination Problem for MPTA with at
most two observers.  In contrast to the case of three
observers, we are able to show decidability for two observers by
reduction to satisfiability in linear arithmetic.


\section{Gap Domination Problem} \label{sec:gap}
In this section we give a decision procedure for the Gap Domination Problem.  Given an MPTA $\mathcal{A}$, valuation~$\gamma\in\RP^{\mathcal{Y}}$, and
a rational tolerance~$\varepsilon>0$, our procedure is such that
\begin{itemize}
\item if there is an accepting run $\rho$ of $\mathcal{A}$ such that
  $\gamma\preccurlyeq_{\varepsilon} \cost(\rho)$ then we output
  ``dominated'';
\item if there is no accepting run $\rho$ of $A$ such that
$\gamma\preccurlyeq \cost(\rho)$ then we output ``not dominated''.
\end{itemize}
To do this, our approach is to find approximate solutions of the
bilinear system (\ref{eq:master-system}) by relaxation and rounding.

Recall from Proposition~\ref{prop:master} that
(\ref{eq:master-system}) is satisfiable iff $\mathcal{A}$ has an
accepting run~$\rho$ such that~$\gamma\preccurlyeq \cost(\rho)$.  Now
we use the semi-linear decomposition of
$\mathrm{Reach}_{\mathcal{S(A)}}$ to eliminate the constraints on
integer variables from (\ref{eq:master-system}). In more detail, fix
a decomposition of $\mathrm{Reach}_{\mathcal{S(A)}}$ as a union of
linear sets and let $S:=S(\boldsymbol{v},P)$ be one such linear set,
where $P=\{\boldsymbol{u}_1,\ldots,\boldsymbol{u}_k\}$.  Then we
replace the constraint
$(\gamma_1,\ldots,\gamma_{d+1}) \in  \mathrm{Reach}_{\mathcal{S(A)}}$
in (\ref{eq:master-system}) with
\[ (\gamma_1,\ldots,\gamma_{d+1}) = \boldsymbol{v} +
  n_1\boldsymbol{u}_1 + \cdots + n_k \boldsymbol{u}_k\, , \] where
$n_1,\ldots,n_k$ are variables ranging over $\mathbb{N}$.  We thus
obtain for each choice of $S$ a bilinear system of inequalities $\varphi_S$ of the form
(\ref{eq:approx-system}), where $I$ and $J$ are finite sets and for
each $i\in I$ and~$j\in J$, it holds that $f_i,g_j$ are linear forms
(i.e., polynomials of degree one with no constant terms) with
non-negative integer coefficients and $c_i$ and $d_j$ are rational
constants. 
\begin{equation}
\begin{array}{rclcrcl}
f_i(n_1\lambda_1,n_1\lambda_2,\ldots,n_k\lambda_{d+1}) & \leq & c_i \quad (i\in I)  & \quad \quad \quad& \lambda_1,\ldots,\lambda_{d+1} &\geq & 0 \\
g_j(n_1\lambda_1,n_1\lambda_2,\ldots,n_k\lambda_{d+1}) & \geq & d_j \quad (j\in J)  & \quad \quad \quad&\lambda_1+\cdots+\lambda_{d+1} &=& 1\\
{n_1,\ldots,n_k\in \mathbb{N}} & & & & &
\end{array}
\label{eq:approx-system}
\end{equation}

Fix a particular system $\varphi_S$, as depicted in
(\ref{eq:approx-system}).  Let $\mu$ be the maximum coefficient of the~$f_i$,~$i\in I$. 
 Given $T\subseteq \{1,\ldots,d+1\}$, we define the
following constraint $\psi_T$ on $\lambda_1,\ldots,\lambda_{d+1}$:
\[ \psi_T := \bigwedge_{i\in T}\lambda_i \leq
  \textstyle\frac{\varepsilon}{(d+1)k\mu} \wedge \displaystyle\bigwedge_{i\not\in T} \lambda_i
  \geq \textstyle\frac{\varepsilon}{(d+1)k\mu} \, .\] Intuitively, $\psi_T$
expresses that $\lambda_i$ is ``small'' for $i\in T$ and ``large'' for
$i\not\in T$.  Given any satisfying assignment of
$\varphi_S$ it is clear that
$\lambda_1,\ldots,\lambda_{d+1}$ must satisfy 
$\varphi_T$ for some $T\subseteq \{1,\ldots,d+1\}$.  

Now fix a set $T\subseteq \{1,\ldots,d+1\}$ and consider the
satisfiability of $\varphi_S \wedge \psi_T$.  If $i\not\in T$ then for
any term $\lambda_i n_j$ that appears in an upper-bound constraint
with right-hand side $c$ in $\varphi_S$, we must have
$n_j \leq \lceil \frac{c(d+1)\mu}{\varepsilon}\rceil$ in order for the
constraint to be satisfied.  Thus by enumerating all values of $n_j$
we can eliminate this variable.  By doing this we may assume that in
$\varphi_S\wedge\psi_T$, for any term $\lambda_i n_j$ that appears on
the left-hand side of an upper-bound constraint we have~$i\in T$ and hence that $\lambda_i$ must be ``small'' in any satisfying
assignment.

The next step is relaxation---try to solve $\varphi_S \wedge \psi_T$
(after the above described elimination step), letting the variables
$n_1,\ldots,n_k$ range over the non-negative reals.  Recall here that
the existential theory of real closed fields is decidable in
polynomial space.  If there is no real solution of
$\varphi_S \wedge \psi_T$ for any $S$ and $T$ then there is certainly
no solution over the naturals. and we can output ``not dominated''.
On the other hand, if there is a run $\rho$ with
$\gamma\preccurlyeq_\varepsilon \cost(\rho)$ then for some $S$ and
$T$, the system $\varphi_S\wedge \psi_T $ will have a real solution in
which moreover the inequalities~$f_i(n_1\lambda_1,\ldots,n_k\lambda_{d+1}) \leq c_i$ for $i\in I$ all
hold with slack at least $\varepsilon$.  Given such a solution,
replace $n_j$ with $\lceil n_j \rceil$ for $j=1,\ldots,k$.  Consider
the left-hand side~$f_i(n_1\lambda_1,\ldots,n_k\lambda_{d+1})$ of an
upper bound constraint in $\varphi_S$.  Since the variables~$\lambda_i$ mentioned in such a linear form are small, the effect of
rounding is to increase this term by at most $\varepsilon$.  Hence the
rounded valuation still satisfies $\varphi_S$ thanks to the slack in
the original solution. This then leads to Theorem~\ref{thm:gap} below:

\begin{theorem}\label{thm:gap} The Gap Domination Problem is decidable.
\end{theorem}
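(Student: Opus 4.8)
The plan is to establish decidability of the Gap Domination Problem by reducing it, for each choice of linear set $S$ in the semi-linear decomposition of $\mathrm{Reach}_{\mathcal{S(A)}}$ and each ``small/large'' pattern $T\subseteq\{1,\ldots,d+1\}$ on the convex-combination coefficients, to a decidable question over real closed fields, together with a rounding argument that guarantees a genuine natural-number solution whenever an $\varepsilon$-dominating run exists. First I would invoke Proposition~\ref{prop:SL-decomp} to fix a finite decomposition $\mathrm{Reach}_{\mathcal{S(A)}} = \bigcup_{S} S(\boldsymbol{v},P)$ with effectively bounded base and period vectors, and rewrite the integer-membership constraint in the master system~(\ref{eq:master-system}) as the equation $(\gamma_1,\ldots,\gamma_{d+1}) = \boldsymbol{v} + n_1\boldsymbol{u}_1 + \cdots + n_k\boldsymbol{u}_k$. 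Substituting into the domination inequalities yields, for each $S$, the bilinear system $\varphi_S$ in the shape of~(\ref{eq:approx-system}): the only nonlinearities are products $n_j\lambda_i$, the $\lambda_i$ lie in the unit simplex, and the $n_j$ range over $\mathbb{N}$.

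Next I would fix $T$, form $\psi_T$ splitting the coordinates into small ($\lambda_i \le \frac{\varepsilon}{(d+1)k\mu}$) and large ones, and observe that any satisfying assignment of $\varphi_S$ satisfies $\psi_T$ for some $T$. For indices $i\notin T$ the coefficient $\lambda_i$ is bounded below, so from each upper-bound constraint containing a term $\lambda_i n_j$ we get $n_j \le \lceil c(d+1)\mu/\varepsilon\rceil$; hence I can enumerate all such $n_j$ and substitute them as constants, arriving at a system $\varphi_S\wedge\psi_T$ in which every term $\lambda_i n_j$ occurring on the left of an upper-bound constraint has $i\in T$, i.e.\ $\lambda_i$ small. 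Then I relax: let the remaining $n_j$ range over $\RP$ and decide satisfiability of $\varphi_S\wedge\psi_T$ in the existential theory of the reals, which is decidable (indeed in \textsc{PSpace}). If no $(S,T)$ admits a real solution, no natural-number solution exists either, so we safely output ``not dominated''.

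For the positive side, suppose some run $\rho$ satisfies $\gamma\preccurlyeq_\varepsilon\cost(\rho)$. By Proposition~\ref{prop:master} (with the $\varepsilon$-shifted target) the strengthened master system is satisfiable, so for some $S$ and $T$ the relaxed system $\varphi_S\wedge\psi_T$ has a real solution in which every upper-bound inequality $f_i(n_1\lambda_1,\ldots,n_k\lambda_{d+1}) \le c_i$ holds with slack at least $\varepsilon$. Now round each $n_j$ up to $\lceil n_j\rceil$. Each such increase is by less than $1$, and it only appears multiplied by coefficients at most $\mu$ and by $\lambda_i$ with $i\in T$, hence $\lambda_i \le \frac{\varepsilon}{(d+1)k\mu}$; there are at most $(d+1)k$ such terms, so the total increase in any left-hand side is at most $\varepsilon$, absorbed by the slack. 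Lower-bound constraints and the simplex constraints on the $\lambda_i$ are unaffected (increasing the $n_j$ only helps the lower bounds, and the $\lambda_i$ are untouched). The rounded valuation therefore satisfies $\varphi_S$ over $\mathbb{N}$, so by Proposition~\ref{prop:master} there is an accepting run $\rho'$ with $\gamma\preccurlyeq\cost(\rho')$, and we correctly output ``dominated''. Putting these two halves together gives a decision procedure meeting the specification, proving Theorem~\ref{thm:gap}.

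The main obstacle is the rounding bookkeeping: one must be careful that the elimination of the $n_j$ with $j$ attached only to ``large'' $\lambda_i$ genuinely leaves every surviving bilinear term in an upper-bound constraint attached to a ``small'' $\lambda_i$, that the chosen threshold $\frac{\varepsilon}{(d+1)k\mu}$ times the number and magnitude of the affected terms is bounded by $\varepsilon$, and that rounding up never violates a lower-bound or the unit-simplex constraint. The semantic link back to runs via Proposition~\ref{prop:master} applied to the $\varepsilon$-shifted target (so that the original real solution has the required $\varepsilon$-slack) is the other point needing care; everything else is routine quantifier reasoning and appeals to decidability of the existential theory of the reals.
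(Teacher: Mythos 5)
Your proposal is correct and matches the paper's proof essentially step for step: the same semi-linear decomposition to obtain the bilinear systems $\varphi_S$, the same small/large split $\psi_T$ with threshold $\frac{\varepsilon}{(d+1)k\mu}$, the same enumeration of the $n_j$ attached to large $\lambda_i$, the same relaxation to the existential theory of the reals, and the same round-up argument absorbing the at most $\varepsilon$ increase into the slack of an $\varepsilon$-dominating solution. Your explicit bookkeeping of the $(d+1)k$ affected terms and the appeal to Proposition~\ref{prop:master} for the $\varepsilon$-shifted target only make explicit what the paper leaves implicit.
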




\bibliography{references}

\newpage
\appendix
\section{Difference Constraints}\label{app:difference}

As summarized in~\cite[Section 5.3]{Bouyer07} for the setting of a
single observer, given an MPTA~$\mathcal{A}$ with difference clock
constraints, we can find an MPTA~$\mathcal{A'}$ without difference
clock constraints such that $\mathcal{A}$ and $\mathcal{A'}$ are
strongly time-bisimilar.  The Domination Problems  for
$\mathcal{A}$ can thus be reduced to those for $\mathcal{A'}$.
Although eliminating difference clock constraints from MPTA results in
an exponential blow-up in the number of locations and edges~\cite[Section 5.3]{Bouyer07},
the PSPACE complexity
for the Pareto Domination Problem in the case of all cost variables
and all reward variables (see Section~\ref{sec:pure-const} and
Appendix~\ref{append-all-reward}) remains true.  Indeed the granularity
bounds that were used to establish PSPACE complexity, while
exponential in the number of observers, are only polynomial
in the number of locations of the MPTA and hence remain singly
exponential in magnitude even after an exponential blow-up in the
number of locations.

\section{Missing Proofs}

\subsection{Proof of Proposition~\ref{prop:zvass}} \label{append-zvass}
\begin{proof}
  Given $\mathcal{Z}$ and $q$, we can construct an NFA $\mathcal{B}$ over alphabet
  $\Sigma' = \{\sigma_1,\ldots,\sigma_n\}$ with at most
  $|Q|^2nM$ states such that $\mathrm{Reach}_{\mathcal{Z}}$ is the Parikh image of
  the language of $\mathcal{B}$. The idea is that each transition
  $(p,\boldsymbol{v},p')$ in $\mathcal{A}$ is simulated in $\mathcal{B}$ by a gadget
  consisting of a sequence of transitions whose Parikh image is
  $\boldsymbol{v}$.

  Having obtained $\mathcal{B}$, the proposition follows from the bound
 in~\cite[Proposition 4.3]{Lin10},\cite{KopTo10} on the size of the semilinear
  decomposition of the Parikh image of the language of an NFA.
\end{proof}

\subsection{Proof that Satisfiability for Language $\mathcal{L}$ is
  Undecidable (Section~\ref{sec:undecidability})}
\label{append-variant-Diophantine-prop}
\begin{proposition} \label{variant-Diophantine-prop}                                                                                     
The satisfiability problem for $\mathcal{L}$ is undecidable.                                                                             
\end{proposition}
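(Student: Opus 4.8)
The plan is to reduce from Hilbert's Tenth Problem, which asks whether a given polynomial equation $P(X_1,\ldots,X_m)=0$ has a solution in the natural numbers (equivalently, in the positive integers, after a routine shift). Since the language $\mathcal{L}$ only has atomic formulas of the shape $X=Y+Z$ and $X=YZ$, the first step is to observe that an arbitrary polynomial equation can be ``flattened'' into a conjunction of such atoms by introducing fresh auxiliary variables. Concretely, given $P$ written as a sum of monomials with (say) positive and negative coefficients, one rewrites $P=0$ as $P^+ = P^-$ where $P^+,P^-$ are polynomials with only positive integer coefficients, and then encodes $P^+ = P^-$ by naming every subterm: each product $uv$ occurring in a monomial is captured by an atom $w = uv$ for a fresh $w$, each partial sum by an atom $w = u+v$, integer constants $k$ are built up from a ``unit'' variable, and finally the equality $P^+ = P^-$ is expressed by forcing the two top-level sum-variables to be equal (which, lacking an equality atom, is expressed as, e.g., introducing a variable that equals each of them via an addition atom with a zero — see below).

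The main obstacles are the restriction to \emph{positive} integers and the absence of both a constant $0$, a constant $1$, and an explicit equality atom. First I would handle the constant: reserve a variable $U$ intended to denote $1$. One cannot directly assert $U=1$, but one can assert $U = U\cdot U$ via the atom $U = UU$; over the positive integers the only solution is $U=1$, which pins $U$ down exactly. Having $U=1$, every numeral $k$ is definable by a chain of additions $U + U + \cdots + U$ using $k-1$ fresh variables, and this lets us realise the (shifted) coefficients of $P$. For the equality $P^+ = P^-$: after naming the two sides by variables $A$ and $B$, we cannot write $A = B$ directly, but since all values are positive we can instead write $A = B + C$ and separately $B = A + D$ — wait, that has no positive solution; the clean fix is to route the equality through the product atom as well, e.g. force $A\cdot U = B$, i.e. add atoms $A = A U$ is useless, so instead introduce the identity that $V = A$ and $V = B$ cannot both be atoms; the correct device is: introduce $E$ with atoms $E = A + F$ is again asymmetric — so the genuinely correct move is to allow ourselves the observation (to be proved) that the atom ``$X = Y$'' is $\mathcal{L}$-definable, namely by $X = Y \cdot U$ together with $U = U\cdot U$, since $Y\cdot 1 = Y$. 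This gives equality for free, and then the flattening goes through directly: $P^+ = P^-$ becomes $A = A'$ with $A,A'$ the root sum-variables of the two sides.

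So the ordered steps are: (1) state that it suffices to decide, for a polynomial $P$ with integer coefficients, whether $P$ has a positive-integer zero — this is Hilbert's Tenth Problem up to the standard shift $X_i \mapsto X_i - 1$ and clearing denominators; (2) show the auxiliary lemma that $x = y$ and the numerals are $\mathcal{L}$-definable, using a variable $U$ constrained by $U = UU$ to force $U = 1$; (3) given $P$, move all negative monomials to the right-hand side to obtain $P^+ = P^-$ with natural-number coefficients, then build a conjunction $\varphi_P$ of sum- and product-atoms that names every monomial, every coefficient (as a sum of copies of $U$), and every partial sum, and finally asserts equality of the two top sum-variables via the definable equality; (4) argue soundness and completeness: a positive-integer assignment satisfying $\varphi_P$ restricts to a positive-integer zero of $P$ (reading off the values of the $X_i$), and conversely any positive-integer zero extends to a satisfying assignment by computing the intended value of each auxiliary variable. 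I expect step (2), and in particular getting a usable substitute for the constants and for equality purely within the very weak syntax of $\mathcal{L}$, to be the part requiring the most care; once that is in hand, the flattening in step (3) is entirely mechanical.
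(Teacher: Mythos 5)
Your proposal is correct and follows essentially the same route as the paper: reduce from Hilbert's Tenth Problem over the positive integers, split $P$ into two polynomials with positive coefficients, flatten into sum/product atoms, and simulate both the constant $1$ and the missing equality atom via a variable $U$ constrained by $U=UU$ (the paper writes exactly $P_1 = P_2 X \wedge X = XX$). The only difference is that you spell out the encoding of numeral coefficients as iterated sums of $U$, which the paper leaves implicit.
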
 

\begin{proof}
The proof is by reduction from Hilbert's Tenth Problem: given a
  polynomial $P~\in \mathbb{Z}[X_1,\ldots,X_k]$, does $P$ have a zero
  over the set of positive integers?  Given such a polynomial~$P$, we
  write an $\mathcal{L}$-formula~$\varphi_P$ whose variables include
  $X_1,\ldots,X_k$, such that the satisfying assignments
  of~$\varphi_P$ are in one-to-one correspondence with the positive
  integer roots of $P$.  

  The idea is simple: write $P=P_1-P_2$, where all monomials in $P_1$
  and $P_2$ appear with positive coefficients.  We then introduce an
  $\mathcal{L}$-variable for each subterm of $P_1$ and $P_2$ and write
  constraints to ensure that the variable takes the same value as the
  corresponding term. Finally we assert that $P_1$ is equal to $P_2$
  through the constraint $P_1 = P_2 X \wedge X = XX$.
\end{proof}

\subsection{Proof of Proposition~\ref{prop:diff}}\label{append-prop-diff}
\begin{proof}
Given a sequence $(t_1,\ldots,t_m) \in \RP^m$, we define a
  corresponding sequence of clock valuations
  $\nu_1,\ldots,\nu_m \in \RP^{\mathcal{X}}$ by $\nu_i(x)=t_i$ if
  none of the edges $e_1,\ldots,e_{i-1}$ reset clock $x$ and
  otherwise $\nu_i(x) := t_i-t_j$, where $j<i$ is the maximum index
  such that $x$ is reset by edge~$e_j$.  In order for a sequence
  $(t_1,\ldots,t_m)$ to be an element of
  $\mathit{Runs}(e_1,\ldots,e_m)$ we require that the~$t_i$ be
  non-negative and non-decreasing and that for every index
  $i\in\{1,\ldots,m\}$, the guard $\varphi_i$ of edge $e_i$ be
  satisfied by the clock valaution $\nu_i$ defined above.  Clearly the
  above requirements can be expressed by difference constraints on
  $t_1,\ldots,t_m$.
\end{proof}

\subsection{Proof of Proposition~\ref{prop:SL-decomp}}\label{append-prop-SL-decomp}
\begin{proof}
  The number of control states of $\mathcal{Z(A)}$ is at most
  $(M_{\mathcal{X}})^{|\mathcal{X}|}|L|$ and the number of states of
  $\mathcal{S(A)}$ is at most
  $((M_{\mathcal{X}})^{|\mathcal{X}|}|L|)^{d+1}$.  Moreover the
  vectors occurring in the transitions of $\mathcal{S(A)}$ have
  entries of magnitude at most $M_{\mathcal{Y}}M_{\mathcal{X}}$.  We
  now apply Proposition~\ref{prop:LIN} to $\mathcal{S(A)}$.  We get
  that the the base vectors~$\boldsymbol{v}_i$ and period vectors in
  $P_i$ have entries of magnitude at most
  $\mathit{poly}(d,|L|,M_{\mathcal{Y}},M_{\mathcal{X}})^{d(d+1)|\mathcal{X}|}$.
\end{proof}

\subsection{Proof of Proposition~\ref{prop:case-split}}\label{append-prop-case-split}
\begin{proof}
Observe that $T\cap S$ is nonempty just in case there exists a point
$\boldsymbol{x}=(x_1,x_2,x_3) \in S$ such that
$\pi(\boldsymbol{x})\in\pi(T)\cap\pi(S)$ and $x_3\geq c$.  But
$\pi(T) \cap \pi(S)$, being a bounded convex polygon, is the convex
hull of its vertices.  It follows that $T\cap S$ is non-empty just in
case there exists a point $\boldsymbol{x} \in S$ such that
$\pi(\boldsymbol{x})$ is a \emph{vertex} of $\pi(T)\cap\pi(S)$ and
$x_3\geq c$.

Now the vertices of $\pi(T)\cap \pi(S)$ come in three types:
$(i)$~vertices of~$\pi(S)$, $(ii)$~intersections of bounding
line segments of $\pi(T)$ and $\pi(S)$, and $(iii)$~vertices
of~$\pi(T)$.

Let $\boldsymbol{x} \in S$ be such that $\pi(\boldsymbol{x})$ is a
vertex of $\pi(T)\cap\pi(S)$ and $x_3\geq c$.  Assume moreover that
for all $\boldsymbol{y} \in S$ such that
$\pi(\boldsymbol{x})=\pi(\boldsymbol{y})$ we have $x_3 \geq y_3$.  If
$\pi(\boldsymbol{x})$ is a vertex of $\pi(T)\cap\pi(S)$ of the first
type then $\boldsymbol{x}$ is a vertex of $S$.  If
$\pi(\boldsymbol{x})$ is a vertex of the second type, but not of the
first type, then $\boldsymbol{x}$ is the intersection of a bounding
edge of $S$ with one of the two faces of $F$ identified in Item 2 in
the statement of the proposition. Finally, if $\pi(\boldsymbol{x})$ is
a vertex of the third type, but not of the first or second types, then
$\boldsymbol{x}$ is the intersection of $S$ with the edge of $F$
supported by the line $x=a\cap y=b$.
\end{proof}

\subsection{Proof of Proposition~\ref{prop-3dim-fixed-a-point}}\label{append-fixed-a-point}
\begin{proof}
  Since $S\cap T\neq\emptyset$, we have
  $\pi(S)\cap\pi(T)\neq\emptyset$.  Hence there are vertices
  $\boldsymbol{x},\boldsymbol{y}$ of $S$ such that the edge
  $\pi(\boldsymbol{x})\pi(\boldsymbol{y})$ meets $\pi(T)$.  By
  Proposition~\ref{prop:decomposition-of-F} we have either
  that one of~$\pi(\boldsymbol{x})$ and~$\pi(\boldsymbol{y})$ lies in
  $\pi(T)$ or that both $\pi(\boldsymbol{x})$ and
        $\pi(\boldsymbol{y})$ lie in $\pi(F)$.

  Suppose $\pi(\boldsymbol{x}) \in \pi(T)$.  Since the edge
  $\boldsymbol{x}\boldsymbol{y}$ is assumed not to meet $T$ we must
  have that~$x_3<c$ and hence $\boldsymbol{x}\in F$.  Likewise the
  assumption that $\pi(\boldsymbol{y}) \in \pi(T)$ yields
  $\boldsymbol{y}\in F$.  Finally, if both $\pi(\boldsymbol{x})$ and
  $\pi(\boldsymbol{y})$ lie in $\pi(F)$ then the assumption that
  $\boldsymbol{x}\boldsymbol{y}$ does not meet $T$ implies that either
  $x_3< c$ or $y_3<c$.  Hence $\boldsymbol{x}\in F$ or
  $\boldsymbol{y}\in F$.
\end{proof}

\section{Pareto Domination with All Reward Variables}\label{append-all-reward}
Now we suppose that the set of observers $\mathcal{Y}$ is comprised
exclusively of reward variables.  We will again apply
Proposition~\ref{prop:SL-decomp} to rewrite (\ref{eq:master-system}) as a
finite disjunction of systems of linear inequalities.

Fix an index $i \in I$.  Let the base vector of the linear set
$S(\boldsymbol{v}_i,P_i)$ be
$\boldsymbol{v}_i=(\gamma_1,\ldots,\gamma_{d+1})$.  We write a linear
constraint to express that there exists a vector
$(\gamma'_1,\ldots,\gamma'_{d+1}) \in S(\boldsymbol{v}_i,P_i)$ and a
convex combination $\sum_{j=1}^{d+1} \lambda_j \gamma'_j$ that
dominates a given $\gamma\in\RP^{\mathcal{Y}}$.  We write this
constraint as a disjunction of finitely many systems of linear
inequalities---one system for each possible choice of the support
$S'\subseteq\{1,\ldots,d+1\}$ of the the convex sum.  Fix such a set
$S'$ and let~$\mathcal{Y}_{S'} \subseteq \mathcal{Y}$ be the set of
variables $y$ such that there is some period vector
$(\gamma'_1,\ldots,\gamma'_{d+1}) \in P_i$ and $j\in S'$ with
$\gamma'_j(y)>0$.  Then the system of inequalities is as follows:
\begin{equation}
\begin{array}{rcl}
\gamma(y) & \preccurlyeq & \lambda_1\gamma_1(y) + \ldots + \lambda_{d+1}\gamma_{d+1}(y) \quad (y\not\in \mathcal{Y}_{S'})\\
  1 & =& \lambda_1 + \cdots + \lambda_{d+1} \\
  0 & < & \lambda_j \quad (j \in S')\\
  0 & = & \lambda_j \quad (j\not\in S')
\end{array}
\label{eq:LP-2}
\end{equation}
To see why this works, note that for $y\in \mathcal{Y}_{S'}$ there exists
some period vector $(\gamma'_1,\ldots,\gamma'_{d+1}) \in P_i$ and
$j\in S'$ with $\gamma'_j(y)>0$.  By adding suitable multiples of to
the solution of the above system we can make value of the variable $y$
arbitrarily large.

Recall that if a set of linear inequalities
$A\boldsymbol{x}\geq \boldsymbol{a}$, $B\boldsymbol{x}>\boldsymbol{b}$
is feasible then it is satisfied by some
$\boldsymbol{x}\in\mathbb{Q}^n$ of bit-length $\mathit{poly}(n,b)$,
where $b$ is the total bit-length of the entries of~$A$,~$B$,~$\boldsymbol{a}$, and $\boldsymbol{b}$.  Applying this bound and
Proposition~\ref{prop:SL-decomp} we see that a solution of
(\ref{eq:LP-2}) can be written in the form
$\lambda_1=\frac{p_1}{g},\ldots,\lambda_{d+1}=\frac{p_{d+1}}{g}$ for
integers $p_1,\ldots,p_{d+1},g$ of bit-length at most
$\mathit{poly}(d,|L|,\log(M_{\mathcal{Y}}),\log(M_{\mathcal{X}}))$.  This entails that the
cost vector $\lambda_1\gamma_1 + \ldots + \lambda_{d+1}\gamma_{d+1}$
arises from a run of $\mathcal{A}$ with granularity $\frac{1}{g}$.

\section{Geometry Background}\label{append-geometry}
We will need the following elementary geometric facts.

Let $\boldsymbol{v}_i=(x_i,y_i)$ with $i\in\{1,2,3,4\}$
be four distinct points in~$\R^2$.
Consider the determinant 
\[ \Delta(\boldsymbol{v}_1,\boldsymbol{v}_2,\boldsymbol{v}_3) = \begin{vmatrix} x_1 & y_1 & 1 \\                                                                                         
                   x_2 & y_2 & 1 \\                                                                                                  
                   x_3 & y_3 & 1                                                                                                     
\end{vmatrix}  \]
involving three points $\boldsymbol{v}_1,\boldsymbol{v}_2$ and $\boldsymbol{v}_3$. 
Then $\Delta(\boldsymbol{v}_1,\boldsymbol{v}_2,\boldsymbol{v}_3)=0$ if and only if
the three points~$\boldsymbol{v}_1,\boldsymbol{v}_2$ and $\boldsymbol{v}_3$ are colinear,
and $\Delta(\boldsymbol{v}_1,\boldsymbol{v}_2,\boldsymbol{v}_3)>0$ if and only if
$\boldsymbol{v}_3$ lies on the right of the directed line passing through 
$\boldsymbol{v}_1$ and $\boldsymbol{v}_2$.

We say that two line segments \emph{properly intersect} if they meet at a single point
that is not an end point of either line segment.
The line segment~${\boldsymbol{v}_1\boldsymbol{v}_2}$ properly intersects
the line segment~${\boldsymbol{v}_3\boldsymbol{v}_4}$ 
if and only if the following two conditions hold:
\begin{enumerate}
	\item $\boldsymbol{v}_3$ and $\boldsymbol{v}_4$ are on the opposite sides of
the line passing through~$\boldsymbol{v}_1$ and $\boldsymbol{v}_2$:
	\[(\Delta(\boldsymbol{v}_1,\boldsymbol{v}_2,\boldsymbol{v}_3)>0 \wedge 
	\Delta(\boldsymbol{v}_1,\boldsymbol{v}_2,\boldsymbol{v}_4)<0) \vee 
	(\Delta(\boldsymbol{v}_1,\boldsymbol{v}_2,\boldsymbol{v}_3)<0 \wedge 
	\Delta(\boldsymbol{v}_1,\boldsymbol{v}_2,\boldsymbol{v}_4)>0),\]
	\item $\boldsymbol{v}_1$ and $\boldsymbol{v}_2$ are on the opposite sides of
the line passing through~$\boldsymbol{v}_3$ and $\boldsymbol{v}_4$:
	\[(\Delta(\boldsymbol{v}_3,\boldsymbol{v}_4,\boldsymbol{v}_1)>0 
	\wedge \Delta(\boldsymbol{v}_3,\boldsymbol{v}_4,\boldsymbol{v}_2)<0)
	\vee (\Delta(\boldsymbol{v}_3,\boldsymbol{v}_4,\boldsymbol{v}_1)<0 
	\wedge \Delta(\boldsymbol{v}_3,\boldsymbol{v}_4,\boldsymbol{v}_2)>0).\]
\end{enumerate}
For use in Section~\ref{sec:three-cost-variables} and Appendices \ref{append-three-var} and 
\ref{subsec:two-cost-variables} we note that if $\boldsymbol{v}_1,\boldsymbol{v}_2$ and
$\boldsymbol{v}_3$ are fixed, then the constraint 
expressing that~${\boldsymbol{v}_1\boldsymbol{v}_2}$
and~${\boldsymbol{v}_3\boldsymbol{v}_4}$ properly meet is a formula of
linear arithmetic in variables~$x_4$ and $y_4$.

Let us also note that line segment $\boldsymbol{v}_1,\boldsymbol{v}_2$
properly intersects the half-line parallel to the~$x$-axis with lower
endpoint having coordinates $(a,c)$ if and only if the following
constraint holds:
\begin{gather} \left(\begin{vmatrix} x_1&y_1&1 \\ 
                   a&c&1 \\
                   x_2&y_2&1 \end{vmatrix}>0 \mbox{ and } x_1<x_3<x_2\right)
\quad\mbox{or}\quad
\left(\begin{vmatrix} x_1&y_1&1 \\ 
                   a&c&1 \\
                   x_2&y_2&1 \end{vmatrix}<0 \mbox{ and } x_2<x_3<x_1\right)
\label{eq:could-be-linear}
\end{gather}

Let $\boldsymbol{v}_i=(x_i,y_i,z_i)$ with $i\in\{1,2,3,4\}$
be four distinct points in~$\R^3$. 
Assume that the list of vertices $\boldsymbol{v}_1,\boldsymbol{v}_2,\boldsymbol{v}_3$
describes a triangle with anti-clockwise orientation.
 Consider the determinant 
\[ \Delta = \begin{vmatrix} 
							x_2-x_1 & x_3-x_1 & x_4-x_1 \\ 
							y_2-y_1 & y_3-y_1 & y_4-y_1 \\               
							z_2-z_1 & z_3-z_1 & z_4-z_1 \\                                                                                    
\end{vmatrix} \, . \] 
Then $\Delta=0$ if and only if the point~$\boldsymbol{v}_4$ lies in the
plane affinely spanned by the three
points~$\boldsymbol{v}_1,\boldsymbol{v}_2$ and $\boldsymbol{v}_3$, and
$\Delta>0$ if and only if~$\boldsymbol{v}_4$ lies above that plane.
For use in Section~\ref{sec:three-cost-variables} and Appendix~\ref{append-three-var} we note that if
$\boldsymbol{v}_1$ and $\boldsymbol{v}_4$ are fixed, then the
constraint expressing that $\boldsymbol{v}_4$ lies above the plane affinely spanned
by~$\boldsymbol{v}_1,\boldsymbol{v}_2$ and
$\boldsymbol{v}_3$ is a quadratic formula in the 
variables~$x_2,y_2,x_3$ and $y_3$.

\section{Pareto Domination with Three Mixed Observers: Two Reward
  Variables and One Cost Variable} \label{append-three-var}

Recall the set $F$, defined in Equation (\ref{eq:def-set-F}) and
consider its projection $\pi(F)$ in the $xy$-plane.  Moreover write
$R:= \{(x,y) \in \RP^2 : x\leq a \wedge y\leq b \}$ (see
Figure~\ref{fig:regionF}).

\begin{figure}[H]
 \begin{center}
   \scalebox{.9}{\begin{tikzpicture}[dot/.style={circle,inner sep=1pt,fill,label={#1},name=#1},
  extended line/.style={shorten >=-#1,shorten <=-#1},
  extended line/.default=1cm]

\node [draw=none,label=below:{{\bf Case 1}}] (q) at (0,0) {~\scalebox{.9}{\tdplotsetmaincoords{60}{125}
\begin{tikzpicture}
	[tdplot_main_coords,
		grid/.style={very thin,gray},
		axis/.style={->, thick},
		cube/.style={opacity=.5, thick,green,fill=green}]
	
	\coordinate (C) at (1.5,1.5,0); 
	\coordinate (Ca) at (2.25,0,0);
	\coordinate (Cb) at (0,2.25,0);
  \coordinate (orig) at (0,0,0);
	\coordinate (xax) at (4,0,0);
	\coordinate (yax) at (0,4,0);
	\coordinate (G1) at (intersection of C--Ca and orig--yax);
  \coordinate (G2) at (intersection of C--Cb and orig--xax);
	\coordinate (X0) at (.75,.75,0); 
  \coordinate (Y0) at (2.25,2.25,0); 
	
	\draw[thick,opacity=.5,black!50!white, fill=black!40!white] (G1) -- (orig) -- (Ca) -- cycle;
	\draw[thick,opacity=.5,black!50!white, fill=black!40!white] (G2) -- (orig) -- (Cb) -- cycle;
	\draw[thick,opacity=.5,black!50!white, fill=black!40!white] (4.5,0,1.5) -- (0,0,1.5) -- (0,2.25,1.5) -- cycle;
	\draw[thick,opacity=.5,black!50!white, fill=black!40!white] (0,4.5,1.5) -- (0,0,1.5) -- (2.25,0,1.5) -- cycle;
	\draw[thick,opacity=.5,black!50!white, fill=black!40!white] (4.5,0,0) -- (4.5,0,1.5) -- (0,0,1.5) -- (orig)--cycle;
  \draw[thick,opacity=.5,black!50!white, fill=black!40!white] (0,4.5,0) -- (0,4.5,1.5) -- (0,0,1.5) -- (orig)--cycle;
	
	%
%
%
	
\draw[thick,red!50!white, fill=red!30!white] (0,0,0) -- (1.5,0,0) -- (1.5,1.5,0) -- (0,1.5,0) -- cycle;

	\foreach \x in {0,.75,...,4.5}
		\foreach \y in {0,.75,...,4.5}
		{
			\draw[grid] (\x,0) -- (\x,4.5);
			\draw[grid] (0,\y) -- (4.5,\y);
		}			

	\draw[axis] (0,0,0) -- (4.5,0,0) node[left]{$x$};
	\draw[axis] (0,0,0) -- (0,4.5,0) node[anchor=west]{$y$};
	\draw[axis] (0,0,0) -- (0,0,2) node[anchor=west]{$z$};
	
	\draw [fill=black,black](X0) circle (2pt) node [left,black]{$\pi(\boldsymbol{x})$};
\draw [fill=black,black](Y0) circle (2pt) node [right,black]{$\pi(\boldsymbol{y})$};
\draw (X0) -- (Y0);

\end{tikzpicture}}};
\node [draw=none,label=below:{{\bf Case 2}}] (q) at (7,0) {~\scalebox{.9}{\tdplotsetmaincoords{60}{125}
\begin{tikzpicture}
	[tdplot_main_coords,
		grid/.style={very thin,gray},
		axis/.style={->, thick},
		cube/.style={opacity=.5, thick,green,fill=green}]
	
	\coordinate (C) at (1.5,1.5,0); 
	\coordinate (Ca) at (2.25,0,0);
	\coordinate (Cb) at (0,2.25,0);
  \coordinate (orig) at (0,0,0);
	\coordinate (xax) at (4,0,0);
	\coordinate (yax) at (0,4,0);
	\coordinate (G1) at (intersection of C--Ca and orig--yax);
  \coordinate (G2) at (intersection of C--Cb and orig--xax);
	\coordinate (X0) at (3,0,0); 
  \coordinate (Y0) at (0,2.25,0); 
	
	\draw[thick,opacity=.5,black!50!white, fill=black!40!white] (G1) -- (orig) -- (Ca) -- cycle;
	\draw[thick,opacity=.5,black!50!white, fill=black!40!white] (G2) -- (orig) -- (Cb) -- cycle;
	\draw[thick,opacity=.5,black!50!white, fill=black!40!white] (4.5,0,1.5) -- (0,0,1.5) -- (0,2.25,1.5) -- cycle;
	\draw[thick,opacity=.5,black!50!white, fill=black!40!white] (0,4.5,1.5) -- (0,0,1.5) -- (2.25,0,1.5) -- cycle;
	\draw[thick,opacity=.5,black!50!white, fill=black!40!white] (4.5,0,0) -- (4.5,0,1.5) -- (0,0,1.5) -- (orig)--cycle;
  \draw[thick,opacity=.5,black!50!white, fill=black!40!white] (0,4.5,0) -- (0,4.5,1.5) -- (0,0,1.5) -- (orig)--cycle;
	
	%
%
%
	
\draw[thick,red!50!white, fill=red!30!white] (0,0,0) -- (1.5,0,0) -- (1.5,1.5,0) -- (0,1.5,0) -- cycle;

	\foreach \x in {0,.75,...,4.5}
		\foreach \y in {0,.75,...,4.5}
		{
			\draw[grid] (\x,0) -- (\x,4.5);
			\draw[grid] (0,\y) -- (4.5,\y);
		}			

	\draw[axis] (0,0,0) -- (4.5,0,0) node[left]{$x$};
	\draw[axis] (0,0,0) -- (0,4.5,0) node[anchor=west]{$y$};
	\draw[axis] (0,0,0) -- (0,0,2) node[anchor=west]{$z$};
	\draw[red, dashed] (X0) -- (0,3,0);
	\draw [fill=black,black](X0) circle (2pt) node [above,black]{$\pi(\boldsymbol{x})~~$};
	\draw [fill=black,black](Ca) circle (2pt) node [above,black]{$\boldsymbol{e}$};
		\draw [fill=red,red](C) circle (2pt) node [below,red]{$\boldsymbol{c}~$};
\draw [fill=black,black](Y0) circle (2pt) node [above,black]{$~\pi(\boldsymbol{y})$};
\draw (X0) -- (Y0);

\end{tikzpicture}}};

\end{tikzpicture}}
		\end{center}
\caption{Two cases in the proof of Proposition~\ref{prop:decomposition-of-F},
where the grey region is~$F$ and the pink region is~$R$.}
\label{fig:regionF}
\end{figure}

\begin{proposition}
  Let $L$ be an edge in $\RP^2$ that
  intersects $R$.  Then $L$ has either one endpoint in $R$ or has both
  endpoints in $\pi(F)$.
\label{prop:decomposition-of-F}
\end{proposition}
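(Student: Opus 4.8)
The plan is to pass to the complement $C:=\RP^2\setminus\pi(F)$. Writing $g_1(u,v)=u+av$ and $g_2(u,v)=v+bu$, we have that $\pi(F)$ equals the union, inside $\RP^2$, of the half-planes $\{g_1\le a(b+1)\}$ and $\{g_2\le b(a+1)\}$, so $C=\{g_1>a(b+1)\}\cap\{g_2>b(a+1)\}\cap\RP^2$ is an intersection of half-planes and hence convex; moreover $R\subseteq\pi(F)$, since $g_1(u,v)=u+av\le a+ab=a(b+1)$ on $R$, so $R\cap C=\emptyset$. Write $L=\boldsymbol{x}\boldsymbol{y}$. If neither endpoint lies in $C$ then both lie in $\pi(F)$ and we are done, so it suffices to treat the case $\boldsymbol{x}\in C$; then $\boldsymbol{x}\notin R$, and the goal reduces to showing $\boldsymbol{y}\in R$.

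The next ingredient is an affine-monotonicity argument along the segment. Fix $\boldsymbol{z}\in L\cap R$; since $R\cap C=\emptyset$ we have $\boldsymbol{z}\ne\boldsymbol{x}$, and we may assume $\boldsymbol{z}\ne\boldsymbol{y}$ (otherwise $\boldsymbol{y}\in R$), so $\boldsymbol{z}$ lies strictly between $\boldsymbol{x}$ and $\boldsymbol{y}$. The observation is that any affine functional $\varphi$ with $\varphi(\boldsymbol{x})>t\ge\varphi(\boldsymbol{z})$ is strictly decreasing along the oriented line from $\boldsymbol{x}$ through $\boldsymbol{z}$, so $\varphi(\boldsymbol{y})\le\varphi(\boldsymbol{z})\le t$. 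Applying this with $(\varphi,t)=(g_1,a(b+1))$ and with $(\varphi,t)=(g_2,b(a+1))$ gives $g_1(\boldsymbol{y})\le a(b+1)$ and $g_2(\boldsymbol{y})\le b(a+1)$, that is, $\boldsymbol{y}\in\pi(F)$. Since $\boldsymbol{x}\in C$ rules out $x_1\le a\wedge x_2\le b$ (that would put $\boldsymbol{x}$ in $R$), we have $x_1>a$ or $x_2>b$; by the symmetry that swaps the two coordinates and interchanges $a\leftrightarrow b$ (which swaps $g_1,g_2$ and fixes $C$, $R$, $\pi(F)$) we may assume $x_1>a$, and then the observation applied to the first-coordinate functional with $t=a$ yields $y_1\le a$. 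Thus only $y_2\le b$ remains.

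This final step is the main obstacle, and it is where the hypothesis that the endpoints of $L$ are lattice points enters --- which is exactly the setting of the application, since there $L$ is the projection of an edge of a simplex drawn from $\mathrm{Reach}_{\mathcal{S(A)}}$. (Without integrality the statement fails: for $a=1,b=2$ the segment from $(0,100)$ to $(1.02,0)$ meets $R$ yet has $(0,100)\notin\pi(F)$ and $(1.02,0)\notin R$.) Granting $\boldsymbol{y}\in\mathbb{Z}^2$: if $y_2>b$ then $y_2\ge b+1$, so $g_1(\boldsymbol{y})=y_1+ay_2\ge a(b+1)$, which together with $g_1(\boldsymbol{y})\le a(b+1)$ forces $\boldsymbol{y}=(0,b+1)$ (taking $a\ge 1$; the cases $a=0$ or $b=0$ are handled separately). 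Now substitute $\boldsymbol{z}=(1-s)\boldsymbol{x}+s(0,b+1)$: using $z_2\le b$ one gets $s\le\frac{b-x_2}{b-x_2+1}$ and in particular $x_2<b$, while $z_1=(1-s)x_1\le a$ gives $s\ge 1-\frac{a}{x_1}$ when $x_1>0$; combining these forces $x_1\le a(b+1-x_2)$, i.e.\ $g_1(\boldsymbol{x})\le a(b+1)$, contradicting $\boldsymbol{x}\in C$. If instead $x_1=0$ then $g_1(\boldsymbol{x})=ax_2>a(b+1)$ gives $x_2>b+1$, whence $z_2=(1-s)x_2+s(b+1)>b$, again impossible. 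Hence $y_2\le b$ and $\boldsymbol{y}\in R$.

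It remains to dispatch the degenerate cases $a=0$ and $b=0$ directly: for instance if $a=0$ and $b\ge 1$ then $\pi(F)$ is the line $\{x=0\}$ and $R$ a segment of it, and a line segment meeting $\{x=0\}$ either has both endpoints on it (so both in $\pi(F)$) or meets it in a single endpoint, which must then lie in $R$; the case $b=0$ is symmetric, and $a=b=0$ is immediate. To summarise, the only non-routine point is the final coordinate estimate together with the fact that integrality of the endpoints is what makes it go through; the remainder is just the convexity of $C=\RP^2\setminus\pi(F)$ combined with affine monotonicity along $L$.
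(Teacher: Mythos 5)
Your proposal is correct in substance, but it does not (and cannot) prove the proposition as literally stated --- and in checking it I agree that it exposes a genuine error in the paper. Your counterexample is valid: for $a=1$, $b=2$ the segment from $(0,100)$ to $(1.02,0)$ meets $R=[0,1]\times[0,2]$ (e.g.\ at the point $(0.9996,\,2)$ obtained at parameter $t=0.98$), yet $(0,100)$ satisfies neither $x+ay\leq a(b+1)=3$ nor $y+bx\leq b(a+1)=4$, so it lies outside $\pi(F)$, while $(1.02,0)\notin R$; hence for arbitrary real endpoints the conclusion fails. Correspondingly the paper's own proof does not go through: the claimed identity $\pi(F)\setminus R=F_0\cup F_1$ is false (with $a=1,b=2$ the point $(2.5,0)$ lies in $\pi(F)\setminus R$ but in neither $F_0$ nor $F_1$), and the concluding inference ``$\boldsymbol{y}$ lies below the line through $\boldsymbol{x}$ and $\boldsymbol{c}$, hence $\boldsymbol{y}\in F_1$'' is a non sequitur --- in your example $\boldsymbol{x}=(1.02,0)\in F_0$, $\boldsymbol{y}=(0,100)$ lies below that line, and $\boldsymbol{y}\notin\pi(F)$.

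Your repaired statement, with integer endpoints, is exactly what the paper needs: the proposition is only invoked for projections of edges of simplices whose vertices come from $\mathrm{Reach}_{\mathcal{S(A)}}\subseteq\mathbb{N}^{d(d+1)}$, so integrality is available (though the downstream propositions using it should then also be read for integer-vertex simplices). Your proof of that version checks out: convexity of the complement of $\pi(F)$ and the reduction to showing $\boldsymbol{y}\in R$; the affine-monotonicity step giving $g_1(\boldsymbol{y})\leq a(b+1)$, $g_2(\boldsymbol{y})\leq b(a+1)$ and $y_1\leq a$; the integrality step pinning $\boldsymbol{y}=(0,b+1)$ when $y_2>b$; and the final contradiction with $g_1(\boldsymbol{x})>a(b+1)$ via the point $\boldsymbol{z}\in L\cap R$, together with the degenerate cases $a=0$ or $b=0$. (One harmless slip: for $a=0$, $\pi(F)$ is not just the line $x=0$ but also contains the triangle $y+bx\leq b$; your argument only uses $\{x=0\}\subseteq\pi(F)$, so nothing breaks.) So your route is genuinely different from the paper's $F_0/F_1$ case split: you argue directly with the two supporting functionals of the complement of $\pi(F)$ plus integrality of the endpoints, an assumption that your counterexample shows is necessary and that the paper's purely real, geometric argument silently lacks.
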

\begin{proof}
  Let $L$ have endpoints $\boldsymbol{x},\boldsymbol{y} \in \RP^2$.
  Since the complement of $\pi(F)$ is a convex region in $\RP^2$ that
  excludes $R$, at least one of $\boldsymbol{x}$ or
  $\boldsymbol{y}$ lies in $\pi(F)$.  Without loss of generality,
  assume that $\boldsymbol{x} \in \pi(F)$.  To prove the
  proposition it suffices to show that if $\boldsymbol{x}\not\in R$
  then both~$\boldsymbol{x},\boldsymbol{y} \in \pi(F)$.

  Suppose $\boldsymbol{x} \not\in R$.  Now
  $\pi(F) \setminus R=F_0 \cup F_1$, where
  $F_0=\{(x,y) \in \RP^2 \mid y+bx\leq b(a+1) \text{ and } x\geq a\}$
  and
  $F_1=\{(x,y) \in \RP^2 \mid x+ay\leq a(b+1) \text{ and } y\geq b\}$.
  Thus $\boldsymbol{x}$ lies in either $F_0$ or $F_1$.  We show
  that~$\boldsymbol{x} \in F_i$ only if
  $\boldsymbol{y} \in F_{1-i}$ for $i\in \{0,1\}$ and conclude
  that both~$\boldsymbol{x},\boldsymbol{y}\in F$.

Assume that $\boldsymbol{x}\in F_0$. Since the edge
$\boldsymbol{x}\boldsymbol{y}$ meets $R$, clearly
$\boldsymbol{y} \not\in F_0$.  Draw a line through~$\boldsymbol{x}$
and $\boldsymbol{c}$, shown as the dashed red line in the diagram.
The point $\boldsymbol{y}$ is below this line for
otherwise edge~$\boldsymbol{x}\boldsymbol{y}$ fails to
meet~$R$.  Consider the point~$\boldsymbol{e}=(0,b+1)$.  Then
the edges ${\boldsymbol{e}\boldsymbol{c}}$ and~$\boldsymbol{x}\boldsymbol{c}$ meet at~$\boldsymbol{c}$.
Since edge $\boldsymbol{x}\boldsymbol{c}$ intersects the
$x$-axis above~$\boldsymbol{e}$, it intersects the $y$-axis below the
edge~${\boldsymbol{e}\boldsymbol{c}}$, i.e. in~$\pi(F)$.  We
conclude that $\boldsymbol{y}\in F_1$.

The argument for the case $\boldsymbol{x} \in F_1$ is symmetric.
Thus we have shown that $\boldsymbol{x}),\boldsymbol{y} \in \pi(F)$.
\end{proof}


Consider a reachability objective $T \subseteq \RP^3$ given by two
upper-bound constraints and one lower-bound constraint, 
see Figure~\ref{fig:three-Dim-Targ-Second}.
  Write
\[ T = \{ (x,y,z) \in \RP^3 : x \geq a \wedge y \geq b \wedge z \leq c
  \} \, , \] where $a,b,c$ are non-negative integer constants.  We
write a quantifier-free first-order formula~$\varphi_T$ of arithmetic
expressing that a 3-simplex $S\subseteq \RP^3$ meets $T$.  This
formula has nine free variables: one for each of the coordinates of
the three vertices of $S$.

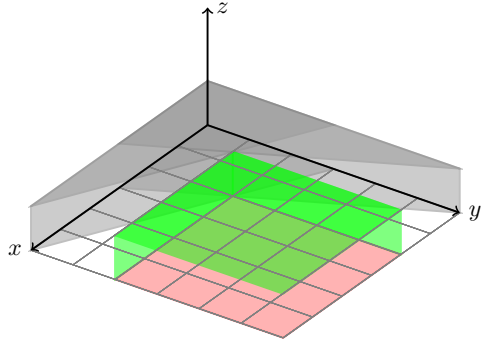
\begin{figure}[H]
 \begin{center}
   \scalebox{.9}{\tdplotsetmaincoords{60}{125}
\begin{tikzpicture}
	[tdplot_main_coords,
		grid/.style={very thin,gray},
		axis/.style={->, thick},
		cube/.style={opacity=.5, thick,green,fill=green}]
	
		\coordinate (C) at (1.5,1.5,0); 
	\coordinate (Ca) at (2.25,0,0);
	\coordinate (Cb) at (0,2.25,0);
  \coordinate (orig) at (0,0,0);
	\coordinate (xax) at (4,0,0);
	\coordinate (yax) at (0,4,0);
	\coordinate (G1) at (intersection of C--Ca and orig--yax);
  \coordinate (G2) at (intersection of C--Cb and orig--xax);
	\coordinate (X0) at (.75,.75,0); 
  \coordinate (Y0) at (2.25,2.25,0);

	\draw[thick,opacity=.5,black!50!white, fill=black!40!white] (G1) -- (orig) -- (Ca) -- cycle;
	\draw[thick,opacity=.5,black!50!white, fill=black!40!white] (G2) -- (orig) -- (Cb) -- cycle;
	\draw[thick,opacity=.5,black!50!white, fill=black!40!white] (4.5,0,.75) -- (0,0,.75) -- (0,2.25,.75) -- cycle;
	\draw[thick,opacity=.5,black!50!white, fill=black!40!white] (0,4.5,.75) -- (0,0,.75) -- (2.25,0,.75) -- cycle;
	\draw[thick,opacity=.5,black!50!white, fill=black!40!white] (4.5,0,0) -- (4.5,0,.75) -- (0,0,.75) -- (orig)--cycle;
  \draw[thick,opacity=.5,black!50!white, fill=black!40!white] (0,4.5,0) -- (0,4.5,.75) -- (0,0,.75) -- (orig)--cycle;
	
	\draw[thick,red!50!white, fill=red!30!white] (4.5,1.5,0) -- (1.5,1.5,0) -- (1.5,4.5,0) -- (4.5,4.5,0) -- cycle;
	\draw[cube] (1.5,1.5,.75) -- (1.5,4.5,.75) -- (4.5,4.5,.75) -- (4.5,1.5,.75) -- cycle;
	
	\draw[cube] (1.5,1.5,0) -- (4.5,1.5,0) -- (4.5,1.5,.75) -- (1.5,1.5,.75) -- cycle;

	\draw[cube] (1.5,1.5,0) -- (1.5,1.5,.75) -- (1.5,4.5,.75) -- (1.5,4.5,0) -- cycle;
%
	%
   
	%
	%
	%
	%

	\foreach \x in {0,.75,...,4.5}
		\foreach \y in {0,.75,...,4.5}
		{
			\draw[grid] (\x,0) -- (\x,4.5);
			\draw[grid] (0,\y) -- (4.5,\y);
		}			

	\draw[axis] (0,0,0) -- (4.5,0,0) node[left]{$x$};
	\draw[axis] (0,0,0) -- (0,4.5,0) node[anchor=west]{$y$};
	\draw[axis] (0,0,0) -- (0,0,2) node[anchor=west]{$z$};


\end{tikzpicture}}
    
		\end{center}
		\caption{The target~$T$ is the green rectangular region,
		the grey region is~$F$, and the pink region is $\pi(T)$.
		}\label{fig:three-Dim-Targ-Second}
\end{figure}

Write $\pi(T)$ for the  projections of $T$  in the $xy$-plane, 
see Figure~\ref{fig:three-Dim-Targ-Second}. 

The following two propositions are syntactically identical to
Proposition~\ref{prop:case-split} and
Proposition~\ref{prop-3dim-fixed-a-point}, although now referring to a
different form of the target set $T$.  While the proof of
Proposition~\ref{prop:case-split} carries over verbatim to the new
setting of Proposition~\ref{prop:case-split-2}, we need to slightly modify the proof of 
Proposition~\ref{prop-3dim-fixed-a-point} in order to prove Proposition~\ref{prop-3dim-fixed-a-point-2}.
\begin{proposition}
Let $S\subseteq\RP^3$ be a 3-simplex.  Then
$T\cap S$ is nonempty if and only if at least one of the following holds:
\begin{enumerate}
\item Some vertex of $S$ lies in $T$.
\item Some bounding edge of $S$ intersects either the face of $T$
  supported by the plane $x=a$ or the face of $T$ supported by the
  plane $y=b$.
\item The bounding edge of $T$ supported by the line $x=a\cap y=b$ intersects $S$.
\end{enumerate}
\label{prop:case-split-2}
\end{proposition}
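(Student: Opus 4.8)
The plan is to follow the proof of Proposition~\ref{prop:case-split} essentially verbatim, the only change being that the lower-bound constraint on $z$ has been replaced by an upper-bound constraint $z\leq c$, so that everywhere the original argument selects, among all points of $S$ with a fixed projection onto the $xy$-plane, the one with \emph{greatest} third coordinate, one now selects the one with \emph{least} third coordinate.

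First I would observe that $T\cap S\neq\emptyset$ if and only if there is a point $\boldsymbol{x}=(x_1,x_2,x_3)\in S$ with $\pi(\boldsymbol{x})\in\pi(T)\cap\pi(S)$ and $x_3\leq c$. Since $\pi(S)$ is bounded, $\pi(T)\cap\pi(S)$ is a bounded convex polygon and hence the convex hull of its vertices; moreover, for each point $\boldsymbol{p}$ of this polygon the fibre $\{z : (\boldsymbol{p},z)\in S\}$ is a closed interval. Consequently $T\cap S\neq\emptyset$ if and only if there is a point $\boldsymbol{x}\in S$ lying over a \emph{vertex} of $\pi(T)\cap\pi(S)$ with $x_3\leq c$, and we may further take $\boldsymbol{x}$ to have the minimal third coordinate among the points of $S$ projecting to that vertex.

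Next I would split on the type of the vertex $\pi(\boldsymbol{x})$, using the same trichotomy as before: it is either (i) a vertex of $\pi(S)$, (ii) a proper intersection of a bounding edge of $\pi(S)$ with a bounding edge of $\pi(T)$, or (iii) a vertex of $\pi(T)$. Here $\pi(T)=\{(x,y)\in\RP^2 : x\geq a\wedge y\geq b\}$ has boundary consisting of the two rays $\{x=a,\,y\geq b\}$ and $\{y=b,\,x\geq a\}$ — the images under $\pi$ of the faces of $T$ supported by the planes $x=a$ and $y=b$ — together with the single vertex $(a,b)$, which is the image of the bounding edge of $T$ supported by the line $x=a\cap y=b$. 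If $\pi(\boldsymbol{x})$ is of type (i), minimality of $x_3$ forces $\boldsymbol{x}$ to be a vertex of $S$, so condition~1 holds. If $\pi(\boldsymbol{x})$ is of type (ii) but not type (i), then $\boldsymbol{x}$ is the intersection of a bounding edge of $S$ with one of the two named faces of $T$, so condition~2 holds. If $\pi(\boldsymbol{x})$ is of type (iii) but neither type (i) nor type (ii), then $\boldsymbol{x}$ is the intersection of $S$ with the bounding edge of $T$ supported by $x=a\cap y=b$, so condition~3 holds. The reverse implications — that each of conditions~1--3 yields a point of $T\cap S$ — are immediate.

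The only place where the argument is not a literal transcription of the proof of Proposition~\ref{prop:case-split} is that the case-analysis identifications must remain valid once ``lies above the plane $z=c$'' is replaced by ``lies below the plane $z=c$''; this is exactly the swap that also underlies the companion Proposition~\ref{prop-3dim-fixed-a-point-2}. I do not anticipate any genuine obstacle here, since the geometry of the new $\pi(T)$ — an unbounded corner region bounded by two rays meeting at a single vertex — is the mirror image of the geometry governing the original statement, so the fibrewise minimisation behaves just as the fibrewise maximisation did.
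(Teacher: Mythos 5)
Your proposal is correct and matches the paper exactly: the paper's own ``proof'' of Proposition~\ref{prop:case-split-2} is simply the assertion that the argument for Proposition~\ref{prop:case-split} carries over verbatim to the new target, and you execute precisely that carry-over, correctly identifying the one substantive change (replacing the fibrewise maximum of the third coordinate by the fibrewise minimum, since the constraint $z\geq c$ has become $z\leq c$) and correctly noting that $\pi(T)\cap\pi(S)$ remains a bounded convex polygon because $\pi(S)$ is bounded. No further comment is needed.
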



The following Proposition refers to the set $F$ as
defined in (\ref{eq:def-set-F}).
\begin{proposition}\label{prop-3dim-fixed-a-point-2}
  Let $S\subseteq\RP^3$ be a 3-simplex such that $S\cap T$ is
  non-empty, but no bounding edge of $S$ meets $T$.  Then some
  vertex of $S$ lies in $F$.
\end{proposition}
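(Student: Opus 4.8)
The plan is to reduce, via Proposition~\ref{prop:case-split-2}, to the case where the corner edge of $T$ meets $S$, and then to exploit the plane spanned by $S$. Since no bounding edge of $S$ meets $T$, neither a vertex of $S$ lying in $T$ nor a bounding edge of $S$ meeting a face of $T$ is possible, so alternatives~1 and~2 of Proposition~\ref{prop:case-split-2} fail and alternative~3 holds: the bounding edge of $T$ supported by the line $x=a\cap y=b$, i.e.\ the segment $\{(a,b,z):0\le z\le c\}$, meets $S$. Fix $\boldsymbol g=(a,b,z_0)\in S$ with $z_0\le c$. Then $\boldsymbol g\in T$, so $\boldsymbol g$ does not lie on any bounding edge of $S$ and hence lies in the relative interior of $S$; consequently $\boldsymbol c:=(a,b)$ lies in the two-dimensional interior of $\pi(S)$ and $\boldsymbol g=\lambda_1\boldsymbol v_1+\lambda_2\boldsymbol v_2+\lambda_3\boldsymbol v_3$ for the vertices $\boldsymbol v_1,\boldsymbol v_2,\boldsymbol v_3$ of $S$ with $\lambda_i>0$ and $\lambda_1+\lambda_2+\lambda_3=1$. (If $\pi(S)$ degenerates to a segment the statement follows by an easier separate argument.)

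Write the plane through $S$ as $z=z_0+\alpha(x-a)+\beta(y-b)$. The crucial lemma is that $\alpha>0$ and $\beta>0$. Indeed $\boldsymbol d=(1,0,\alpha)$ is a direction tangent to this plane, so for small $t>0$ we have $\boldsymbol g+t\boldsymbol d=(a+t,b,z_0+t\alpha)\in S$; if $\alpha\le0$ then $z_0+t\alpha\le c$, so $\boldsymbol g+t\boldsymbol d\in T$ for every $t\ge0$ with $\boldsymbol g+t\boldsymbol d\in S$, and following this ray until it leaves the bounded set $S$ yields a point of $(\partial S)\cap T$, that is, a bounding edge of $S$ meeting $T$, contradicting the hypothesis. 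So $\alpha>0$, and symmetrically $\beta>0$. Writing $z_i$ for the third coordinate of $\boldsymbol v_i$, we get $z_i-z_0=\alpha(x_i-a)+\beta(y_i-b)$; since $\sum_i\lambda_i(z_i-z_0)=0$ and the plane is not horizontal, some vertex has $z_i<z_0\le c$, and by positivity of $\alpha,\beta$ that vertex satisfies $x_i<a$ or $y_i<b$.

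Now assume, for contradiction, that no vertex of $S$ lies in $F$. Observe first that no vertex can have both $x_i\le a$ and $y_i\le b$: such a vertex would have $z_i=z_0+\alpha(x_i-a)+\beta(y_i-b)\le z_0\le c$ and $x_i+ay_i\le a(b+1)$, hence $\pi(\boldsymbol v_i)\in\pi(F)$; it cannot be $(a,b)$ (which together with $z_i=z_0\le c$ would put it in $T$), so $z_i<c$ and the vertex lies in $F$, a contradiction. On the other hand $a=\sum_i\lambda_i x_i$ and $b=\sum_i\lambda_i y_i$ force some vertex to have $x_i\le a$ and some (necessarily distinct) vertex to have $y_i\le b$; after relabelling, $\boldsymbol v_1$ has $x_1\le a$ (hence $y_1>b$) and $\boldsymbol v_2$ has $y_2\le b$ (hence $x_2>a$). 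The position of $\boldsymbol v_3$ relative to $\boldsymbol c$, together with whether each of $\pi(\boldsymbol v_1),\pi(\boldsymbol v_2),\pi(\boldsymbol v_3)$ lies in $\pi(F)$ or in the open wedge $\RP^2\setminus\pi(F)$ with apex $\boldsymbol c$, splits the argument into a handful of sub-cases, mirroring the proof of Proposition~\ref{prop-3dim-fixed-a-point} (and of Proposition~\ref{prop:decomposition-of-F}) for the present shape of $T$. In each sub-case one derives a contradiction by combining three ingredients: a vertex whose projection lies in $\pi(F)$ but which is not in $F$ must have $z\ge c$; the plane equation $z_i-z_0=\alpha(x_i-a)+\beta(y_i-b)$ with $\alpha,\beta>0$ relates the $z_i$ to the in-plane positions; and the requirement that no bounding edge of $S$ meets $T$ (read through the same plane equation) bounds the $z_i$ from below along the edges, so that $z_0=\sum_i\lambda_i z_i$ is forced to exceed $c$. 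Since $z_0\le c$, this is absurd, so some vertex of $S$ lies in $F$.

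I expect the main obstacle to be this last step: organising the sub-cases for the position of $\boldsymbol v_3$ and verifying in each that the edge constraints genuinely push $\sum_i\lambda_i z_i$ above $c$. This is the part that differs substantively from the proof of Proposition~\ref{prop-3dim-fixed-a-point}: the literal analogue of Proposition~\ref{prop:decomposition-of-F} — that an edge meeting $\pi(T)$ has an endpoint in $\pi(T)$ or both endpoints in $\pi(F)$ — is false for $T=\{x\ge a\wedge y\ge b\wedge z\le c\}$, so one cannot simply quote it but must run the case analysis directly, the strict inequalities $\alpha,\beta>0$ being what absorbs the boundary configurations.
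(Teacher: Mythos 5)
There is a genuine gap in the final step. Your first two moves match the paper exactly: alternative~3 of Proposition~\ref{prop:case-split-2} forces $(a,b)$ into the interior of $\pi(S)$, and the ray argument showing $\alpha,\beta>0$ is a correct (and nicely self-contained) proof of the paper's claim that the affine function $f$ spanning $S$ has positive derivative in the positive $x$- and $y$-directions. But you never actually close the argument: the decisive step is deferred to ``a handful of sub-cases'' on the position of $\boldsymbol v_3$, described only by three ``ingredients'' and explicitly flagged by you as the main obstacle. Nothing in the proposal verifies that these sub-cases exist, are exhaustive, or each yield $\sum_i\lambda_i z_i>c$. As written, the proof stops exactly where the work begins.

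The reason you went down this path is a misreading of Proposition~\ref{prop:decomposition-of-F}: that proposition is stated for the quadrant $R=\{(x,y):x\le a\wedge y\le b\}$, not for $\pi(T)=\{x\ge a\wedge y\ge b\}$, so your objection that ``the literal analogue is false for the present $T$'' does not apply --- the paper quotes it verbatim, just applied to the \emph{opposite} quadrant $R$. The point of establishing $\alpha,\beta>0$ is precisely that $f\le c$ holds on all of $R$ (moving from $(a,b)$ into $R$ only decreases $f$). Since $(a,b)$ is interior to $\pi(S)$, some bounding edge of $\pi(S)$ meets $R$; by Proposition~\ref{prop:decomposition-of-F} it has an endpoint in $R\subseteq\pi(F)$ or both endpoints in $\pi(F)$, and since $f$ is affine and $\le c$ somewhere on that edge, at least one endpoint has third coordinate $\le c$ and hence lies in $F$. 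Your own intermediate observation (under the contradiction hypothesis no vertex has both $x_i\le a$ and $y_i\le b$, i.e.\ no vertex of $\pi(S)$ lies in $R$) would combine with this route to finish in two lines; instead the proposal branches into an uncompleted case analysis.
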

\begin{proof}
  Under the assumptions of this proposition, Items 1 and 2 of
  Proposition~\ref{prop:case-split-2} do not hold. Hence the bounding
  edge of $T$ that is supported by the line segment $x=a \cap y=b$
  meets $S$ at some point not on a bounding edge of $S$.  In
  particular, considering the projection in the $xy$-plane, we have
  that the point $(a,b)$ lies in the interior of $\pi(S)$.

  Now consider the plane in $\RP^3$ affinely spanned by $S$.  Write the equation
  of this plane in the form $z=f(x,y)$ for some affine function $f$.
  From the assumption that no bounding edge of $S$ meets $T$, we
  deduce that $(a,b)$ is the only vertex of the convex set
  $\pi(S) \cap \pi(T)$ at which $f$ is bounded above by $c$.
  It follows that $f$ has positive derivative in the direction of the
  positive $x$-axis and positive $y$-axis.  Hence $f$ is bounded above
  by $c$ on the entire region~$R:=\{(x,y) \in \RP^2:x\leq a,y\leq b\}$.

  Now since $(a,b)$ lies in the interior of $\pi(S)$, there is a
  bounding edge $\boldsymbol{x}\boldsymbol{y}$ of $S$ such that~$\pi(\boldsymbol{x})\pi(\boldsymbol{y})$ 
  meets the region $R$.  By
  Proposition~\ref{prop:decomposition-of-F},
  $\pi(\boldsymbol{x})\pi(\boldsymbol{y})$ either has some endpoint in~$R$
 (say $\pi(\boldsymbol{x})$) or has both endpoints in $\pi(F)$.
  Since $f$ is bounded above by $c$ on $R$, in the first case we have
  that $x_3 \leq c$ and hence $\boldsymbol{x}\in F$.  In the second
  case we have that either~$x_3\leq c$ or~$y_3 \leq c$ and hence
  either $\boldsymbol{x}\in F$ or $\boldsymbol{y}\in F$.
  \end{proof}

  We write separate formulas
  $\varphi_T^{(1)},\varphi_T^{(2)},\varphi_T^{(3)}$, respectively
  expressing the three necessary and sufficient conditions for
  $T\cap S$ to be nonempty as identified in
  Proposition~\ref{prop:case-split-2}.  These are formulas of
  arithmetic whose free variables denote the coordinates of the three
  vertices of $S$.  The definitions of the formulas $\varphi_T^{(1)}$
  and $\varphi_T^{(3)}$ are almost identical to those of the
  corresponding formulas in Section~\ref{sec:three-cost-variables}. The only difference is that for
  $\varphi_T^{(3)}$ we ask to express that the point~$(a,b,c)$ lies
  above the plane affinely spanned by $\boldsymbol{p}$,
  $\boldsymbol{q}$, and $\boldsymbol{r}$ (rather than below the
  plane, as in Section~\ref{sec:three-cost-variables}).

  There are more substantial differences in the definition of the
  formula $\varphi_T^{(2)}$.  Recall that this formula expresses that
  some bounding edge of $S$ meets a face of $T$.  As in
  Section~\ref{sec:three-cost-variables}, it is straightforward to obtain
  $\varphi_T^{(2)}$ given a formula $\psi$ expressing that an
  arbitrary line segment ${\boldsymbol{x}\boldsymbol{y}}$ in $\RP^3$
  meets a given fixed face of $T$.  We outline such a formula
  below. For concreteness we consider the face of $T$ supported by the
  plane $x=a$, which maps under $\pi$ to the line segment~$L$ given by
  $x=a \cap y \geq b$ (see Figure~\ref{fig:linemeetface3d}).  Formula $\psi$ has six free variables,
  respectively denoting the coordinates of~$\boldsymbol{x}$ and
  $\boldsymbol{y}$.

\begin{figure}[H]
 \begin{center}
   \scalebox{.75}{\begin{tikzpicture}
	[grid/.style={very thin,gray},
		axis/.style={->, thick},
		cube/.style={opacity=.5, thick,green,fill=green}]
	
	\coordinate (C) at (1.5,1.5); 
	\coordinate (Ca) at (2.25,0);
	\coordinate (Cb) at (0,2.25);
  \coordinate (orig) at (0,0);
	\coordinate (xax) at (4.5,0);
	\coordinate (yax) at (0,4.5);
	\coordinate (G1) at (intersection of C--Ca and orig--yax);
  \coordinate (G2) at (intersection of C--Cb and orig--xax);
	\coordinate (c) at (1.5,1.5); 
  \coordinate (x) at (.85,3.65); 
	\coordinate (y) at (2.5,1.75); 
	\coordinate (L) at (1.5,5.25);
	
	\draw[thick,opacity=.5,black!50!white, fill=black!40!white] (G1) -- (orig) -- (Ca) -- cycle;
	\draw[thick,opacity=.5,black!50!white, fill=black!40!white] (G2) -- (orig) -- (Cb) -- cycle;


\draw[very thin,color=gray,step=.75cm] (0,0) grid (5.25,5.25);
\draw[->,ultra thick] (-.1,0)--(5.25,0) node[right]{$x$};
\draw[->,ultra thick] (0,-.1)--(0,5.25) node[above]{$y$};
	
\draw [fill=red,red](c) circle (2pt) node [above,red]{$~~\boldsymbol{c}$};
\draw [fill=black,black](x) circle (2pt) node [above,black]{$\pi(\boldsymbol{x})$};
\draw [fill=black,black](y) circle (2pt) node [right,black]{$\pi(\boldsymbol{y})$};
\draw (x) -- (y);
\draw[red,thick] (c) -- (L) node [right,red]{$L$};;

\end{tikzpicture}}
		\end{center}
\caption{To express that $\pi(\boldsymbol{x})\pi(\boldsymbol{y})$ meets line segment $L$.
  The grey region is $\pi(F)$.}
\label{fig:linemeetface3d}
\end{figure}

Formula $\psi$ is a conjunction of two parts.  The first part
expresses that ${\pi(\boldsymbol{x})\pi(\boldsymbol{y})}$ meets~$L$.
The key is to express this requirement via a formula of linear
arithmetic.  For each fixed value of $\pi(\boldsymbol{x}) \in F$ we
can write a linear constraint expressing that
${\pi(\boldsymbol{x})\pi(\boldsymbol{y})}$ meets $L$, and likewise for
each fixed value of $\pi(\boldsymbol{y}) \in F$.  Thus we may assume
that both $\pi(\boldsymbol{x})$ and $\pi(\boldsymbol{y})$ lie in the
complement of $\pi(F)$.  But then
${\pi(\boldsymbol{x})\pi(\boldsymbol{y})}$ meets $L$ just in case
$\pi(\boldsymbol{x})$ and $\pi(\boldsymbol{y})$ lie on opposite sides
of the line $x=a$, which is also a linear constraint.

Suppose now that ${\pi(\boldsymbol{x})\pi(\boldsymbol{y})}$
meets $L$, say at a point
$\pi(\boldsymbol{z})$ where $\boldsymbol{z}$ lies on line 
segment~${\boldsymbol{x}\boldsymbol{y}}$.  The second part of $\psi$
expresses that $\boldsymbol{z}$ lies below the plane $z=c$.  Such a
formula is a disjunction of atoms, each with a
single quadratic term, whose satisfiability is known to be decidable from Theorem~\ref{thm:segal}.

\section{Reachability for Two Observers} \label{subsec:two-cost-variables}
In this section we consider MPTA with two observers  and
reachability of sets of valuations $T\subseteq\RP^{\mathcal{Y}}$
described by arbitrary conjunctions of constraints of the form
$\gamma(y)\sim c$ for $y\in\mathcal{Y}$, ${\sim}\in\{\leq,\geq\}$, and
$c\in\mathbb{Z}$.
  Since the set of valuations in $\RP^{\mathcal{Y}}$
dominating a given valuation can be written in the above form, this
reachability problem subsumes the Pareto Domination Problem.  In
contrast to the situation with three observers, in the case
at hand we will be able to translate the reachability problem into
satisfiability in linear arithmetic.

\subsection{Bounded Cost Objective}

We show how to construct a quantifier-free formula
$\varphi_{\mathrm{Obj}}$ of linear arithmetic that is satisfiable if and only if the
bounded rectangular cost objective can be achieved.

	\begin{figure}[H]
 \begin{center}
   \scalebox{.9}{\begin{tikzpicture}[every node/.style={black,above right}]

	\coordinate (t1) at (.5,1); 
	\coordinate (t2) at (1.25,1.5); 
	\draw[thick,green,fill=green!40!white] (t1) rectangle (t2);
	\mygrid{0}{0}{3.5}{3}
\end{tikzpicture}}
    \label{fig:bounded-Cost-Positive-targ}
		\end{center}
\end{figure}

Recall that for a MPTA featuring two non-negative cost variables, a
configuration of the simplex automaton $\mathcal{S}(A)$ determines a
triangle in the plane whose vertices are non-negative integers.  We
denote the vertices $\boldsymbol{p}$, $\boldsymbol{q}$, and $\boldsymbol{r}$.

Draw a line with slope 45 degrees, intersecting the two positive
coordinate axes and passing through the top right corner~$\boldsymbol{x}$ of the
target rectangle $T$.  This line divides the upper right quadrant of
the plane into two regions---a bounded region below the line (shaded
blue) and an unbounded region above the line (shaded grey).  Clearly
the number of vertices of~$\triangle \boldsymbol{p}\boldsymbol{q}\boldsymbol{r}$ that lie in the blue region
is either one, two, or three.  Since the blue region contains finitely
many integer points, the case in which $\triangle \boldsymbol{p}\boldsymbol{q}\boldsymbol{r}$ lies completely
in the blue region is trivial.  The two remaining cases are as
follows:
\begin{figure}[H]
 \begin{center}
   \scalebox{.9}{\begin{tikzpicture}[dot/.style={circle,inner sep=1pt,fill,label={#1},name=#1},
  extended line/.style={shorten >=-#1,shorten <=-#1},
  extended line/.default=1cm]

\node [draw=none] (q) at (0,0) {~\scalebox{.9}{\begin{tikzpicture}

	\coordinate (t1) at (.5,1); 
	\coordinate (t2) at (1.25,1.5); 
	\coordinate (dubt2) at (-.25,3);
	\coordinate (orig) at (0,0);
	\coordinate (X0) at (0,4);
	\coordinate (Y0) at (4,0);
	\coordinate (G1) at (intersection of t2--dubt2 and orig--X0);
  \coordinate (G2) at (intersection of t2--dubt2 and orig--Y0);

	\fill[blue!10!white]	(G1) -- (G2) -- (orig)--	(G1);
	\draw[thick,blue] (G1) -- (G2);
	\draw[thick,green,fill=green!40!white] (t1) rectangle (t2);
  \draw [fill=green!50!black,green!50!black](1.25,1.5) circle (2pt) node [above,green!50!black]{$\boldsymbol{x}$};
  \mygrid{0}{0}{3.5}{3};
	
\end{tikzpicture}}};
\node [draw=none] (q) at (2.5,0) {{\Large $\Rightarrow$}};
\node [draw=none,label=below:{{\bf Case 1}}] (q) at (5.3,0) {~\scalebox{.9}{\begin{tikzpicture}[every node/.style={black,above right}]

	\coordinate (t1) at (.5,1); 
	\coordinate (t2) at (1.25,1.5);
	\coordinate (dubt2) at (-.25,3);
	\coordinate (orig) at (0,0);
	\coordinate (X0) at (0,4);
	\coordinate (Y0) at (4,0);
	\coordinate (G1) at (intersection of t2--dubt2 and orig--X0);
  \coordinate (G2) at (intersection of t2--dubt2 and orig--Y0);

\fill[blue!10!white]	(G1) -- (G2) -- (orig)--	(G1);
	\draw[thick,blue] (G1) -- (G2);
	\draw[thick,green,fill=green!40!white] (t1) rectangle (t2);
  \draw [fill=green!50!black,green!50!black](1.25,1.5) circle (2pt) node [above,green!50!black]{};
\draw [fill=blue,blue](1.2,.6) circle (2pt) node [right,blue]{$\boldsymbol{q}$};
 \draw [fill=blue,blue](.25,2) circle (2pt) node [below,blue]{$\boldsymbol{p}$};

		\mygrid{0}{0}{3.5}{3}

\end{tikzpicture}}};
\node [draw=none,label=below:{{\bf Case 2}}] (q) at (9.8,0) {~\scalebox{.9}{\begin{tikzpicture}

 \coordinate (t1) at (.5,1); 
	\coordinate (t2) at (1.25,1.5); 
	\coordinate (dubt2) at (-.25,3);
	\coordinate (orig) at (0,0);
	\coordinate (X0) at (0,4);
	\coordinate (Y0) at (4,0);
	\coordinate (G1) at (intersection of t2--dubt2 and orig--X0);
  \coordinate (G2) at (intersection of t2--dubt2 and orig--Y0);

\fill[blue!10!white]	(G1) -- (G2) -- (orig)--	(G1);
	\draw[thick,blue] (G1) -- (G2);
	\draw[thick,green,fill=green!40!white] (t1) rectangle (t2);
  \draw [fill=green!50!black,green!50!black](1.25,1.5) circle (2pt) node [above,green!50!black]{};

\draw [fill=blue,blue](.342,.25) circle (2pt) node [right,blue]{$\boldsymbol{p}$};

\mygrid{0}{0}{3.5}{3}
\end{tikzpicture}}};

\end{tikzpicture}}
    \label{fig:bounded-Cost-Positive}
		\end{center}
\end{figure}

\begin{itemize} 
\item {\bf Case 1:} the blue region contains two vertices of
  $\triangle \boldsymbol{p}\boldsymbol{q}\boldsymbol{r}$---say $\boldsymbol{p}$ and $\boldsymbol{q}$.  We proceed by a case analysis on
  the coordinates of $\boldsymbol{p}$ and $\boldsymbol{q}$ (for which there are finitely many
  possibilities).  Fix values for $\boldsymbol{p}$ and $\boldsymbol{q}$ in the blue region.
  Then the condition that $\triangle \boldsymbol{p}\boldsymbol{q}\boldsymbol{r}$ intersects the target can be
  written as a linear constraint on the coordinates of the remaining vertex~$\boldsymbol{r}$---specifically 
that one of the vertices of $\triangle \boldsymbol{p}\boldsymbol{q}\boldsymbol{r}$ lies
  in the target $T$ or that one of the bounding line segments of
  $\triangle \boldsymbol{p}\boldsymbol{q}\boldsymbol{r}$ intersects one of the bounding line segments of the
  target $T$.

\item {\bf Case 2:} the blue region contains a single vertex of
  $\triangle \boldsymbol{p}\boldsymbol{q}\boldsymbol{r}$---say $\boldsymbol{p}$.  Fix a value of $\boldsymbol{p}$ and assume that $\boldsymbol{p}$
  is not in the target $T$.  Now consider the ``shadow'' of the target
  rectangle~$T$ created by a light source at point~$\boldsymbol{p}$ (the pink
  region in the diagram).  This shadow is is a region in the plane
  that is bounded by two lines that respectively pass through~$\boldsymbol{p}$ and
  vertices of the target $T$ (shown as pink dashed lines in the
  diagram).  Then in case vertices $\boldsymbol{q}$ and $\boldsymbol{r}$ lie in the grey region,
  $\triangle \boldsymbol{p}\boldsymbol{q}\boldsymbol{r}$ fails to meet the target rectangle if and only~$\boldsymbol{q}$
  and~$\boldsymbol{r}$ both lie on the same side of both of the pink dashed lines.
  Again this condition can be expressed as a Boolean combination of
  linear constraints on~$\boldsymbol{q}$ and~$\boldsymbol{r}$ since the pink dashed lines are
  fixed.

	\begin{figure}[H]
 \begin{center}
   \scalebox{.9}{\begin{tikzpicture}[dot/.style={circle,inner sep=1pt,fill,label={#1},name=#1},
  extended line/.style={shorten >=-#1,shorten <=-#1},
  extended line/.default=1cm]

\node [draw=none,label=below:{$\boldsymbol{q}$ in the pink region}] (q) at (0,0) {~\scalebox{.9}{\begin{tikzpicture}[every node/.style={black,above right}]

\coordinate (t1) at (.5,1); 
\coordinate (t2) at (1.25,1.5); 
\coordinate (A) at	(.342,.25);
\coordinate (P1) at (.5,1.5); 
\coordinate (P2) at	(1.25,1);
\coordinate (P3) at (0,3);
\coordinate (P4) at (4.5,3);
\coordinate (C1) at (intersection of A--P1 and P3--P4);
\coordinate (C2) at (intersection of A--P2 and P3--P4);

 \coordinate (dubt2) at (-.25,3);
	\coordinate (orig) at (0,0);
	\coordinate (X0) at (0,4);
	\coordinate (Y0) at (4,0);
	\coordinate (G1) at (intersection of t2--dubt2 and orig--X0);
  \coordinate (G2) at (intersection of t2--dubt2 and orig--Y0);

\fill[red!10!white]	(C1)--(A)--(C2)--(C1);
	
	\draw[thick,red, dashed,name path=line 1] (A) -- (P1)--(C1);
	\draw[thick,red, dashed,name path=line 1] (A) -- (P2)--(C2);
	
\draw[thick,blue] (G1) -- (G2);	
	
\draw[thick,green,fill=green!40!white] (t1) rectangle (t2);
\draw [fill=blue,blue](A) circle (2pt) node [right,blue]{$\boldsymbol{p}$};
\draw [fill=red,red](1.3,2) circle (2pt) node [right,red]{$\boldsymbol{q}$};

\draw [fill=green!50!black,green!50!black](P1) circle (2pt) node [left,green!50!black]{};
\draw [fill=green!50!black,green!50!black](P2) circle (2pt) node [below,green!50!black]{};
\draw [fill=green!50!black,green!50!black](t2) circle (2pt) node [above,green!50!black]{};

\mygrid{0}{0}{3.8}{3}

\end{tikzpicture}}};
\node [draw=none,label=below:{$\boldsymbol{q},\boldsymbol{r}$ in separate grey regions}] (q) at (6,0) {~\scalebox{.9}{\begin{tikzpicture}[every node/.style={black,above right}]
\coordinate (t1) at (.5,1); 
\coordinate (t2) at (1.25,1.5); 
\coordinate (A) at	(.342,.25);
\coordinate (P1) at (.5,1.5); 
\coordinate (P2) at	(1.25,1);
\coordinate (P3) at (0,3);
\coordinate (P4) at (3.8,3);
\coordinate (C1) at (intersection of A--P1 and P3--P4);
\coordinate (C2) at (intersection of A--P2 and P3--P4);

\coordinate (dubt2) at (-.25,3);
	\coordinate (orig) at (0,0);
	\coordinate (X0) at (0,4);
	\coordinate (Y0) at (4,0);
	\coordinate (G1) at (intersection of t2--dubt2 and orig--X0);
  \coordinate (G2) at (intersection of t2--dubt2 and orig--Y0);

\coordinate (D1) at (intersection of G1--G2 and A--P1);
\coordinate (D2) at (intersection of G1--G2 and A--P2);

	\fill[black!10!white] (P3)--(G1)--(D1)--(C1)--(P3);
	\fill[black!10!white] (G2)--(D2)--(C2)--(P4)--(3.8,0)--(G2);

	\draw[thick,red, dashed] (A) -- (P1)--(C1);
	\draw[thick,red, dashed] (A) -- (P2)--(C2);
	\draw[thick,blue] (G1)--(G2);
\draw[thick,green,fill=green!40!white] (t1) rectangle (t2);
\draw [fill=blue,blue](A) circle (2pt) node [right,blue]{$\boldsymbol{p}$};

\draw [fill=black,black](.25,2.8) circle (2pt) node [above,black]{$\boldsymbol{q}$};
\draw [fill=black,black](3,1.65) circle (2pt) node [right,black]{$\boldsymbol{r}$};

\draw [fill=green!50!black,green!50!black](P1) circle (2pt) node [left,green!50!black]{};
\draw [fill=green!50!black,green!50!black](P2) circle (2pt) node [right,green!50!black]{};
\draw [fill=green!50!black,green!50!black](t2) circle (2pt) node [above,green!50!black]{};

\mygrid{0}{0}{3.8}{3}

\end{tikzpicture}}};

\end{tikzpicture}}
    \label{fig:bounded-Cost-Positive-case2}
		\end{center}
\end{figure}

\end{itemize}

\subsection{Unbounded Cost Objective}

We show how to construct a quantifier-free formula
$\varphi_{\mathrm{Obj}}$ of linear arithmetic that is satisfiable if and only if the
unbounded rectangular cost objective, as shown in the diagram below, can be achieved.
We consider an objective where the observer~$x$ is unbounded  above while~$y$ is bounded.
The  case when $x$ is bounded with $y$ unbounded   above is symmetric.
The last case for an unbounded cost objective is when both observers~$x,y$ are unbounded   above.
The following argument can be used in this last case with a slight modification.

\begin{figure}[H]
 \begin{center}
   \scalebox{.9}{\begin{tikzpicture}

	\coordinate (t1) at (1,.75); 
	\coordinate (t2) at (1,1.5); 
	\coordinate (t3) at (3,.75);
	\fill[green!40!white] (t2) rectangle (t3);
\draw[thick,green] (3.1,.75)--(t1)--(t2)--(3.1,1.5);
  \mygrid{0}{0}{3}{3};
	
\end{tikzpicture}}
    \label{fig:unbounded-Cost-Positive-targ}
		\end{center}
\end{figure}

Draw a line with slope 45 degrees, intersecting the two positive
coordinate axes and passing through the top left corner~$P$
of the target rectangle $T$.
 This line
divides the upper right quadrant of the plane into two regions---a
bounded region below the line (shaded blue) and an unbounded region
above the line.  We further divide the region above the line into
three horizontal bands with boundaries given by the horizontal sides
of the target (the upper bound is shaded pink and lower band is shaded
grey in the diagram).

  We now consider two cases according to whether $\triangle \boldsymbol{p}\boldsymbol{q}\boldsymbol{r}$ has a
  vertex in the blue region.

\begin{figure}[H]
 \begin{center}
   \scalebox{.9}{\begin{tikzpicture}[dot/.style={circle,inner sep=1pt,fill,label={#1},name=#1},
  extended line/.style={shorten >=-#1,shorten <=-#1},
  extended line/.default=1cm]

\node [draw=none] (q) at (0,0) {~\scalebox{.9}{\begin{tikzpicture}

	\coordinate (t1) at (1,.75); 
	\coordinate (t2) at (1,1.5); 
	\coordinate (t3) at (3,.75);
	\coordinate (dubt2) at (0,3);
	\coordinate (orig) at (0,0);
	\coordinate (X0) at (0,2);
	\coordinate (Y0) at (3,0);
	\coordinate (G1) at (intersection of t2--dubt2 and orig--X0);
  \coordinate (G2) at (intersection of t2--dubt2 and orig--Y0);

  \fill[blue!10!white]	(G1) -- (G2) -- (orig)--	(G1);
	\draw[thick,blue] (G1) -- (G2);
	\fill[green!40!white] (t2) rectangle (t3);
\draw[thick,green] (3.1,.75)--(t1)--(t2)--(3.1,1.5);
  \draw [fill=green!50!black,green!50!black](t2) circle (2pt) node [above,green!50!black]{$~\boldsymbol{x}$};

  \mygrid{0}{0}{3}{3};
\end{tikzpicture}}};
\node [draw=none] (q) at (2.5,0) {{\Large $\Rightarrow$}};
\node [draw=none,label=below:{{\bf Case 1}}] (q) at (5.3,0) {~\scalebox{.9}{\begin{tikzpicture}
 		\coordinate (t1) at (1,.75); 
	\coordinate (t2) at (1,1.5); 
	\coordinate (t3) at (3,.75);
	\coordinate (dubt2) at (0,3);
	\coordinate (orig) at (0,0);
	\coordinate (X0) at (0,2);
	\coordinate (Y0) at (3,0);
	\coordinate (G1) at (intersection of t2--dubt2 and orig--X0);
  \coordinate (G2) at (intersection of t2--dubt2 and orig--Y0);
  
  \fill[red!10!white]	(G1)--(t2)--(t3)--(3,3)--(G1);
  \fill[black!10!white]	(G2)--(1.5,.75)--(t3)--(3,0)--(G2);
	\draw[thick,blue] (G1) -- (G2);
	\draw[thick,green] (3.1,.75)--(t1)--(t2)--(3.1,1.5);
	\fill[green!40!white] (t2) rectangle (t3);

	\draw [fill=red,red](1.4,2.6) circle (2pt) node [right,red]{$\boldsymbol{p}$};
	\draw [fill=black,black](2.6,.45) circle (2pt) node [right,black]{$\boldsymbol{q}$};
	 \draw [fill=green!50!black,green!50!black](t2) circle (2pt) node [above,green!50!black]{};
\mygrid{0}{0}{3}{3}
\end{tikzpicture}}};
\node [draw=none,label=below:{{\bf Case 2}}] (q) at (9.8,0) {~\scalebox{.9}{\begin{tikzpicture}[every node/.style={black,above right}]

  	\coordinate (t1) at (1,.75); 
	\coordinate (t2) at (1,1.5); 
	\coordinate (t3) at (3,.75);
	\coordinate (dubt2) at (0,3);
	\coordinate (orig) at (0,0);
	\coordinate (X0) at (0,2);
	\coordinate (Y0) at (3,0);
	\coordinate (G1) at (intersection of t2--dubt2 and orig--X0);
  \coordinate (G2) at (intersection of t2--dubt2 and orig--Y0);
  \coordinate (A) at (1,.75); 

  \fill[blue!10!white]	(G1) -- (G2) -- (orig)--	(G1);
	\draw[thick,blue] (G1) -- (G2);
	\draw[thick,green] (3.1,.75)--(t1)--(t2)--(3.1,1.5);
	\fill[green!40!white] (t2) rectangle (t3);
 \draw [fill=green!50!black,green!50!black](t2) circle (2pt) node [above,green!50!black]{};
	\draw [fill=blue,blue](.3,.6) circle (2pt) node [right,blue]{$\boldsymbol{p}$};
  \mygrid{0}{0}{3}{3};

\end{tikzpicture}}};

\end{tikzpicture}}
    \label{fig:unbounded-Cost-Positive}
		\end{center}
\end{figure}
\begin{itemize}
\item {\bf Case 1.} No vertex of $\triangle \boldsymbol{p}\boldsymbol{q}\boldsymbol{r}$ lies in the blue
  region.  Then $\triangle \boldsymbol{p}\boldsymbol{q}\boldsymbol{r}$ meets the target iff it is not the
  case that all vertices lie in the grey region or all vertices lie in
  the pink region.
\item {\bf Case 2.} Some vertex of $\triangle \boldsymbol{p}\boldsymbol{q}\boldsymbol{r}$ lies in the blue
  region---say $\boldsymbol{p}$.  Fix $\boldsymbol{p}$.  Then $\triangle \boldsymbol{p}\boldsymbol{q}\boldsymbol{r}$ meets $T$ if one
  of the line segments $\overline{\boldsymbol{p}\boldsymbol{q}}$ or $\overline{\boldsymbol{p}\boldsymbol{r}}$ intersects
  the boundary of the target~$T$.  Given that~$\boldsymbol{p}$ is fixed this
  condition can be expressed as a Boolean combination of linear
  constraints on~$\boldsymbol{q}$ and $\boldsymbol{r}$.
\end{itemize}

%
%
%
%



\end{document}